\theoremstyle{definition}
\newtheorem{definition}{Definition}
\theoremstyle{remark}
\newtheorem{remark}[definition]{Remark}
\newtheorem{example}[definition]{Example}
\newtheoremstyle{mytheorem}{0.5cm}{0.2cm}{\slshape}{ }{\bfseries}{.}{ }{}
\theoremstyle{mytheorem}
\newtheorem{theorem}[definition]{Theorem}
\newtheorem{lemma}[definition]{Lemma}
\newtheorem{cor}[definition]{Corollary}
\newcommand{\E}{\mathbf{E}}
\newcommand{\EQ}{\mathbf{E}}
\newcommand{\R}{\mathbb{R}}
\newcommand{\EE}{\mathbb{E}}
\DeclareMathOperator{\conv}{conv}
\DeclareMathOperator{\one}{{1\hspace*{-0.55ex}I}}
\DeclareMathOperator{\grad}{grad}
\DeclareMathOperator{\SD}{SD}
\DeclareMathOperator{\ESD}{ESD}
\DeclareMathOperator{\QSD}{QSD}
\DeclareMathOperator{\BC}{BC}
\DeclareMathOperator{\GC}{GC}
\DeclareMathOperator{\BP}{BP}
\DeclareMathOperator{\GP}{GP}
\renewcommand{\P}{\mathbf{P}}
\newcommand{\Q}{\mathbf{Q}}
\newcommand{\sE}{\mathcal{E}}
\newcommand{\fF}{\mathfrak{F}}
\newcommand{\salg}{\fF}
\renewcommand{\phi}{\varphi}
\renewcommand{\kappa}{\varkappa}
\newcommand{\imagi}{\boldsymbol{\imath}}
\newcommand{\gammaQ}{\gamma}
\newcommand{\muQ}{\mu}
\newcommand{\nuQ}{\nu}
\newcommand{\thf}{\frac{1}{2}\,}
\newcommand{\tn}{|\!|\!|}
\newcommand{\tnc}{\tn\hspace{-1pt}\cdot\hspace{-1pt}\tn}
\newcommand{\pvint}{\int_{\R}\hspace{-14pt}-}
\newlength{\querylen}
\numberwithin{equation}{section}
\numberwithin{definition}{section}
\begin{document}
\bibliographystyle{plain}

\title{Geometric extension of put-call symmetry in the multiasset setting}

\author{Ilya Molchanov and Michael Schmutz\\
  \small Department of Mathematical Statistics and Actuarial Science,\\
  \small University of Bern, Sidlerstrasse 5, 3012 Bern, Switzerland\\
  \small (e-mails: ilya.molchanov@stat.unibe.ch, michael.schmutz@stat.unibe.ch)
}

\maketitle

\begin{abstract}
  In this paper we show how to relate European call and put options on
  multiple assets to certain convex bodies called lift zonoids. Based
  on this, geometric properties can be translated into economic
  statements and vice versa. For instance, the European call-put
  parity corresponds to the central symmetry property, while the
  concept of dual markets can be explained by reflection with respect
  to a plane.  It is known that the classical univariate log-normal
  model belongs to a large class of distributions with an extra
  property, analytically known as put-call symmetry. The geometric
  interpretation of this symmetry property motivates a natural
  multivariate extension. The financial meaning of this extension is
  explained, the asset price distributions that have this property are
  characterised and their further properties explored. It is also shown
  how to relate some multivariate asymmetric distributions to
  symmetric ones by a power transformation that is useful to adjust for
  carrying costs. A particular attention is devoted to the case of
  asset prices driven by L\'evy processes.  Based on this, semi-static
  hedging techniques for multiasset barrier options are suggested.

  \medskip

  \noindent
  \emph{Keywords}: barrier option; convex body; dual market; L\'evy
  process; lift zonoid; multiasset option; put-call symmetry;
  self-dual distribution; semi-static hedging

  \noindent{AMS Classifications}: 60D05; 60E05; 60G51; 91B28; 91B70
\end{abstract}

\newpage

\section{Options and zonoids: an introduction}
\label{sec:norms-stop-loss}

The stop-loss transformation from mathematical insurance theory
associates a random variable $\zeta$ with its stop-loss
$\E(\zeta-k)_+$ at level $k$. Here $\E$ denotes the expectation and
$x_+=\max(x,0)$ for a real number $x$, where $k$ is usually
interpreted as the excess (part of claim which is not paid by the
insurer).

From the mathematical finance viewpoint, the stop-loss
transformation is identified as the expected payoff from a European
call option for $\zeta=S_T=S_0e^{rT}\eta$ being the price of a (say
non-dividend paying) asset at the maturity time $T$, where $S_0$ is
the spot price and $e^{rT}\eta$ is the factor by which the price
changes, $r$ is the (constant) risk-free interest rate and $\eta$ is
an almost surely positive random variable.  In arbitrage-free and
complete markets the expectation can be taken with respect to the
unique equivalent martingale measure, so that the expected value of
the discounted payoff becomes the call price. If the underlying
probability measure is a martingale measure, then $\E\eta=1$ and the
discounted price process $S_te^{-rt}$, $t\in[0,T]$, becomes a
martingale. Unless indicated by a different subscript, all
expectations in this paper are understood with respect to the
probability measure $\Q$, which is not necessarily a martingale
measure. In this paper we do not address the choice of a martingale
measure in incomplete markets.

The expected call payoff $\E(S_T-k)_+$ can be considered a function of
the bivariate vector $(k,F)$, where
\begin{displaymath}
  F=S_0e^{rT}
\end{displaymath}
is the theoretical \emph{forward price} on the same asset with the
same maturity, i.e.\ $\E(S_T-k)_+=\E(F\eta-k)_+$. Deterministic
dividends or income until maturity can be incorporated in the forward
price, e.g.\ by setting $F=S_0e^{(r-q)T}$ in case of a continuous
dividend yield $q$.

When working with $n$ assets, we write $\eta$ for an $n$-dimensional
random vector $(\eta_1,\dots,\eta_n)$ such that the price $S_{Ti}$ of
the $i$th asset at time $T$ equals $F_i\eta_i$ with $F_i$ being the
corresponding forward price. We denote this shortly as
\begin{equation}
  \label{eq:det-eta}
  S_T=F\circ\eta=(F_1\eta_1,\dots,F_n\eta_n)\,.
\end{equation}

In order to relate the expected payoffs to certain convex sets we need
the following basic concept from convex geometry.

\begin{definition}[see~\cite{schn}, Sec.~1.7]
  \label{def:cs}
  The \emph{support function} of a nonempty convex compact set $K$ in
  the $n$-dimensional Euclidean space $\R^n$ is defined by
  \begin{displaymath}
    h_K(u)=\sup\{\langle x,u\rangle:\; x\in K\}\,,\quad u\in\R^n\,,
  \end{displaymath}
  where $\langle x,u\rangle$ is the scalar product in $\R^n$.
\end{definition}

For instance, if $K=[-x,x]$ is the line segment in $\R^n$ with
end-points $\pm x$, then $h_K(u)=|\langle x,u\rangle|$; if
$K=[0,x]$, then $h_K(u)=\langle x,u\rangle_+$; if $K$ is the
triangle in $\R^2$ with vertices $(0,0)$, $(a,0)$, and $(0,b)$,
then $h_K(u)=\max(u_1a,u_2b,0)$ for all $u=(u_1,u_2)\in\R^2$.

A function $g:\R^n\mapsto\R$ is called sublinear if it is positively
homogeneous ($g(cx)=cg(x)$ for all $c\geq0$ and $x\in\R^n$) and
subadditive ($g(x+y)\leq g(x)+g(y)$ for all $x,y\in\R^n$). It is well
known in convex geometry that support functions are characterised by
their sublinearity property and that there is a one-to-one
correspondence between support functions and \emph{convex bodies},
i.e.\ nonempty compact convex subsets of $\R^n$, see e.g.\
\cite[Th.~1.7.1]{schn}.

With each integrable $n$-dimensional random vector
$\eta=(\eta_1,\dots,\eta_n)$ it is possible to associate a
$(n+1)$-dimensional convex body which uniquely describes the
distribution of $\eta$. For this, consider $(n+1)$-dimensional random
vector $(1,\eta)$ obtained by concatenating $1$ and $\eta$ or, in
other words, by \emph{lifting} $\eta$ with an extra coordinate being
one.  In the financial setting this extra coordinate represents a
riskless bond.  Because of the lifting, we number the coordinates of
$(n+1)$-dimensional vectors as $0,1,\dots,n$ and write these vectors
as $(u_0,u)$ for $u_0\in\R$ and $u\in\R^n$ or as
$(u_0,u_1,\dots,u_n)$.

Let $X$ be the random set being the line segment in $\R^{n+1}$ with
end-points at the origin and $(1,\eta)$, see~\cite{mo1} for detail
presentation of random sets theory.  The support function of $X$ is
given by
\begin{displaymath}
  h_X(u_0,u)=\max(u_0+u_1\eta_1+\cdots+u_n\eta_n,0)=\langle (u_0,u),(1,\eta)\rangle_+
\end{displaymath}
for $(u_0,u)\in\R^{n+1}$. The integrability of $\eta$ implies that
$h_X(u_0,u)$ is integrable. The expected support function $\E h_X$ is
sublinear and so is the support function of a convex body $\E X$
called the (Aumann) \emph{expectation} of $X$, see
\cite[Sec.~2.1]{mo1}. For our choice of $X$, the set $\E X$ is called
the \emph{lift zonoid} of $\eta$ and denoted by $Z_\eta$. It is known
that $Z_\eta$ determines uniquely the distribution of $\eta$,
see~\cite[Th.~2.21]{mos02}. Note that the \emph{zonoid} of $\eta$
appears from a similar (non-lifted) construction as the expectation of
the random segment that joins the origin and $\eta$, see
\cite[Th.~2.8]{mos02}.

In the univariate setting we assume that $n=1$ and $\eta=S_T/F$ is a
positive random variable.  Then
\begin{equation}
  \label{eq:support-f-of-Q}
  h_{Z_\eta}(u_0,u_1)=\E(u_0+u_1\eta)_+=
  \begin{cases}
    u_0+u_1\E\eta      & u_0,u_1\geq 0\,,\\
    0                   & u_0,u_1< 0 \,,\\
    \E(u_0+u_1\eta)_+ &\text{otherwise }
  \end{cases}
\end{equation}
for all $(u_0,u_1)\in\R^2$.  Figure~\ref{fi:liftzonoids} shows the
lift zonoid of $\eta$ for various volatilities ($0.25,0.5,0.75$) in
the log-normal case with $\E\eta=1$ calculated for $T=1$. The higher
the volatility the larger (thicker) becomes the lift zonoid.  The
upper and lower boundaries of lift zonoids are the so-called
generalised Lorenz curves, which can be easily parametrised, see
\cite[pp.~43~and~44]{mos02}.

\begin{figure}[htbp]
  \begin{center}
    \includegraphics[height=6.5cm]{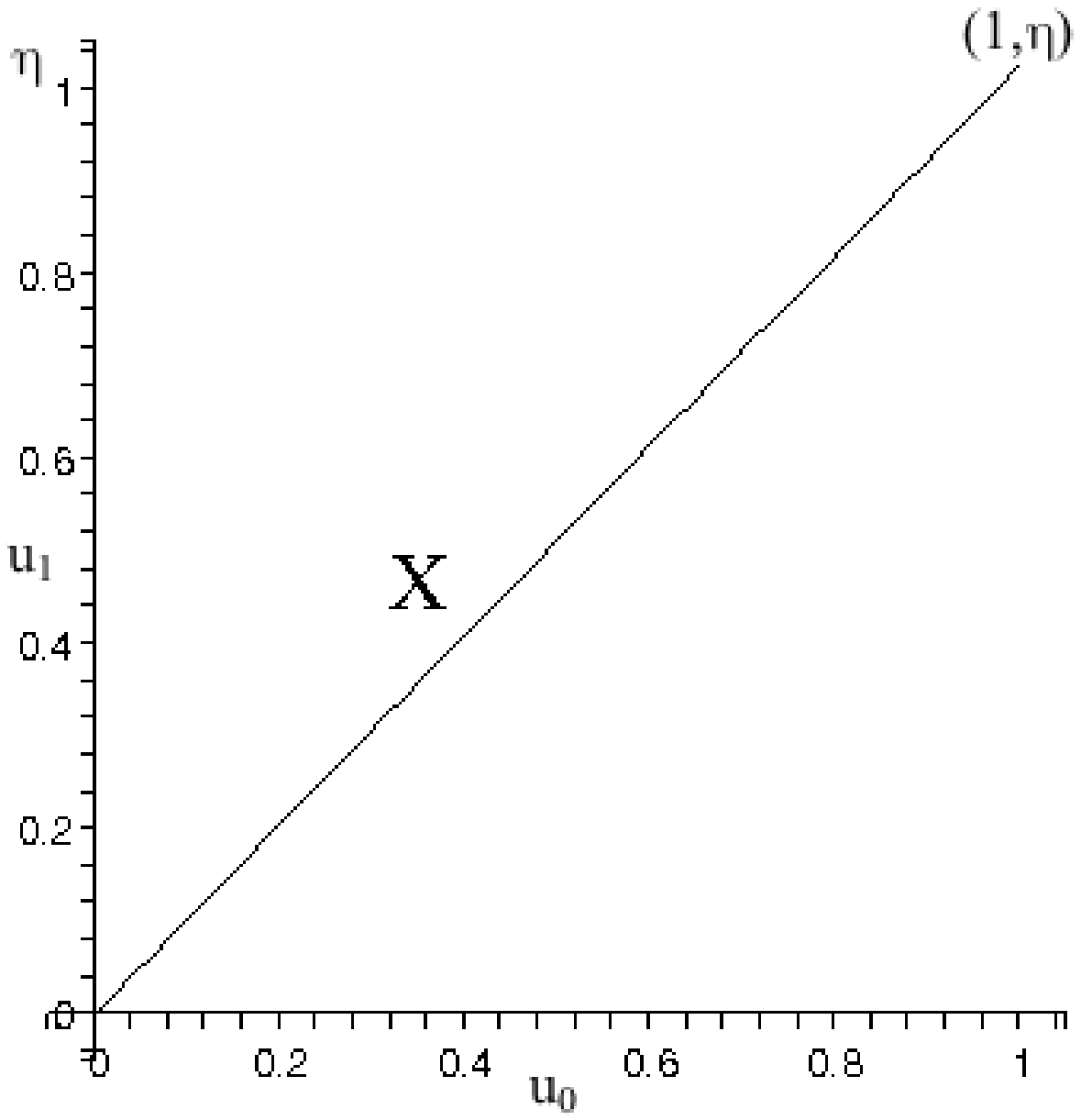}
    \includegraphics[height=6.5cm]{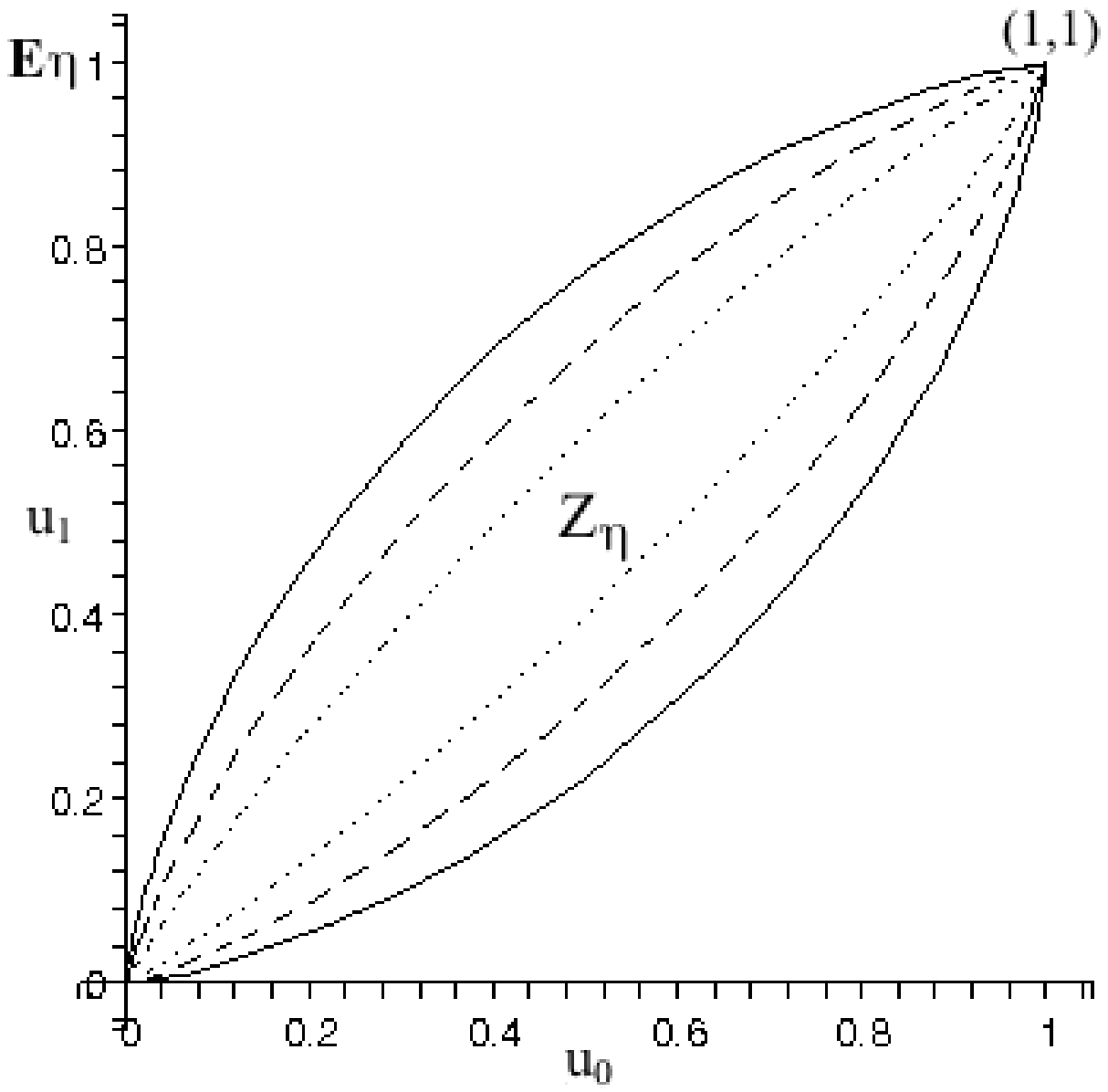}
  \end{center}
  \caption{The segment $X=[(0,0),(1,\eta)]$ and the lift zonoid
    $Z_\eta$ for log-normal $\eta$ with volatilities
    $\sigma=0.25,0.5,0.75$, drift $\mu=-\thf\sigma^2$ and maturity
    $T=1$.}
  \label{fi:liftzonoids}
\end{figure}

Since the lift zonoid uniquely determines the distribution of random
vector $\eta=(\eta_1,\dots,\eta_n)$, it also determines prices of
\emph{all} payoffs associated with $S_T=(S_{T1},\dots,S_{Tn})$,
assuming that the underlying probability measure is a martingale
measure.  For instance, in the univariate case $h_{Z_\eta}(-k,F)$
(resp.\ $h_{Z_\eta}(k,-F)$) is the non-discounted price of a
European call (resp.\ put) option with strike $k$. A similar
interpretation holds for basket options. The support function
determines uniquely the lift zonoid, so that the prices of European
vanilla options (basket calls and puts) determine uniquely the
distribution of the assets and so prices of all other European
options. In the univariate case this fact was noticed by Ross
\cite{ros76}, Breeden and Litzenberger~\cite{bre:lit78}, while Carr
and Madan \cite{car:mad94} presented an explicit decomposition of
general smooth payoff functions as integrals of vanilla options and
riskless bonds. In view of the positive homogeneity of support
functions and central symmetry of lift zonoids,
see~\cite[Prop.~2.15]{mos02}, it suffices in fact to have all call
prices with parameter vectors $(-k,u)$ with norm one in $\R^{n+1}$,
where $k>0$ and $u$ stands for the vector containing the products of
the weights and forward prices of the corresponding assets in the
components. Alternatively, it suffices to have all call prices for
any fixed $k>0$ and any $u\in\R^n$.

As just mentioned it is well known that the lift zonoid is
\emph{centrally symmetric}. Section~\ref{sec:symm-prop-lift} begins
by showing that the central symmetry property of lift zonoids is a
geometric interpretation of the call-put parity for European
options.

The main question addressed in this paper concerns further symmetry
properties of lift zonoids, their probabilistic characterisation and
financial implications. Section~\ref{sec:symm-prop-lift} continues
to show that in the univariate case (where lift zonoids are planar
sets) the reflection at the line bisecting the first quadrant
corresponds to the dual market transition at maturity. In view of
this, random variables that lead to line symmetric lift zonoids are
called \emph{self-dual}. This property has an immediate financial
interpretation as Bates' rule \cite{bat97} or the put-call symmetry
\cite{car94,car:ell:gup98}. For instance, the lift zonoids of the
log-normal distribution in the risk-neutral setting (see
Figure~\ref{fi:liftzonoids}) are line symmetric, which implies the
put-call symmetry (or Bates' rule) for the Black-Scholes economy.
Section~\ref{sec:symm-prop-lift} then shows how to translate the
geometric symmetry property into symmetry relationships for general
integrable payoffs and, in particular, for various binary and gap
options.

Section~\ref{sec:option-prices-norms} highlights relationships
between vanilla options and options on the maximum of the asset
price and a strike.  This leads to a concept of lift
\emph{max}-zonoids, which are particularly useful to describe
options involving maxima of possibly weighted assets. This section
also deals with a symmetry property of lift max-zonoids and shows
how to relate option prices to certain norms on $\R^2$ yielding a
relationship to the extreme values theory.

Section~\ref{sec:mult-symm} characterises random vectors that
possess symmetry properties generalising the classical put-call
symmetry for basket options and options on the maximum of several
assets.  The symmetry (or self-duality) is understood with respect
to each particular asset or for all assets simultaneously.
Relationships between the self-duality property and the
swap-invariance in Margrabe type options have been studied
in~\cite{mol:sch09}. In currency markets, the self-duality results
can be interpreted with respect to real existing markets, yielding
the basis for further applications, see~\cite{sch08}. The overall
symmetry implies that expected payoffs from basket options are
symmetric with respect to the weights of particular assets and the
strike price. We show that symmetries for some vanilla type options
(like Bates' rule in the univariate case) imply a certain symmetry
for \emph{every} integrable payoff function. After discussing some
fundamental results, we characterise the multivariate log-infinitely
divisible distributions, exhibiting the multivariate put-call
symmetry. The new effect in the multivariate setting is that
independence of asset prices prevents them from being jointly
self-dual. In other words, symmetry properties for several assets
enforce certain dependency structure between them, which is explored
in this paper.

In order to extend the application range of the self-duality
property and also in view of incorporating the carrying costs, we
then define \emph{quasi-self-dual} random vectors and characterise
their distributions. These random vectors become self-dual if their
components are normalised by constants (representing carrying costs)
and raised to a certain power. The related power transformation was
used in \cite[Sec.~6.2]{car:lee08} in the one-dimensional case. Here
we establish an explicit relationship between carrying costs and the
required power of transformation for rather general price models
based on L\'evy processes.

These results are then used in Section~\ref{sec:symm-distr} to
obtain several new results for self-dual random variables thereby
complementing the results from \cite{car:lee08}. In particular,
self-dual random variables have been characterised in terms of their
distribution functions; it is shown that self-dual random variables
always have non-negative skewness and several examples of self-dual
random variables are given.

As in the univariate case also in the multiasset case there are
various applications of symmetry results. First, symmetry results
may be used for validating models or analysing market data, e.g.\
similarly as in~\cite{bat97} and~\cite{faj:mor06} in the univariate
case. Furthermore, they could be used for deriving certain
investment strategies, see e.g.\
Section~\ref{sec:semi-static-super}. The probably most important
application will potentially be found in the area of hedging,
especially in developing \emph{semi-static} replicating strategies
of multiasset barrier and possibly also more complicated
path-dependent contracts. Following Carr and Lee~\cite{car:lee08},
semi-static hedging is the replication of contracts by trading
European-style claims at no more than two times after inception. As
far as the relevance of this application is concerned we should
mention that there has been a liquid market in structured products,
particularly in Europe. At the moment the majority of the trades is
still over-the-counter, but more and more trades are also organised
at exchanges, especially at the quite new European exchange for
structured products Scoach. Structured products often involve equity
indices, sometimes several purpose-built shares, and quite often
have barriers. Hence, developing robust hedging strategies for
multi-asset path-dependent products seems to be of a certain
importance. In the univariate case, Carr et
al.~\cite{car:cho97,car:cho02,car:ell:gup98,car:lee08} and several
other authors (see e.g.~\cite{and01,and:and:eli02,pou06}) developed
a machinery for replicating barrier contracts having fundamental
relevance for other path-dependent contracts.

Section~\ref{sec:ex-apl-self-dual} contains first applications of
the multivariate symmetry properties, especially for hedging complex
barrier options, thereby extending results from
\cite{car:cho97,car:cho02,car:ell:gup98} and \cite{car:lee08} for
some multiasset options. The development of a more general
multivariate semi-static hedging machinery is left for future
research.

\section{Symmetries of lift zonoids and financial relations for a
  single asset case}
\label{sec:symm-prop-lift}

\subsection{Parities}
\label{sec:parities}

We write $c(k,F)$ for the price of the European call option with
strike $k$ on the asset with forward price $F$.  Furthermore, let
$p(k,F)$ denote the price of the equally specified put.  The maturity
time $T$ is supposed to be the same for all instruments.

One of the most basic relationships between options in arbitrage-free
markets is the European \emph{call-put parity}.  In case of
deterministic dividends, this parity can be expressed by
\begin{equation}
  \label{eq:c-p-parity}
  c(k,F)=e^{-rT}(F-k)+p(k,F)\,.
\end{equation}

Recall that $\eta$ is defined by $S_T=F\eta$, where $S_T$ is the asset
price at maturity and $F$ is the forward price. The lift zonoid
$Z_\eta$ of $\eta$ is centrally symmetric around $\thf(1,\E\eta)$,
see~\cite[Prop.~2.15]{mos02}. If the expectation is taken with respect
to a martingale measure, then $\E\eta=1$, whence
$Z_{\eta,o}=Z_\eta-(\thf,\thf)$ is origin symmetric, so that
$h_{Z_{\eta,o}}(u)=h_{Z_{\eta,o}}(-u)$ for all $u\in\R^2$.
Interpreting the values of the support function of $Z_\eta$ as
non-discounted call and put prices, this symmetry yields that
\begin{align*}
  e^{rT}c(k,F)&=h_{Z_\eta}(-k,F)=h_{Z_{\eta,o}+(\thf,\thf)}(-k,F)\\
  &= h_{Z_{\eta,o}}(k,-F)-\thf k+\thf F\\
  &= h_{Z_{\eta,o}}(k,-F)+\thf k-\thf F-k+F\\
  &=h_{Z_\eta}(k,-F)-k+F=e^{rT}p(k,F)+F-k\,,
\end{align*}
i.e.\ we arrive at the classical European call-put parity. By defining
appropriate multidimensional lift zonoids and using their point
symmetry, the above proof can easily be extended to the call-put
parity for Asian options with arithmetic mean and to the European
call-put parity for basket options.

\begin{figure}[htbp]
  \begin{center}
    \includegraphics[height=6.5cm]{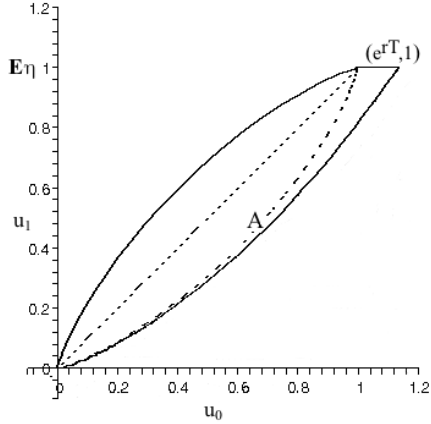}
  \end{center}
  \caption{An approximation of the payoff set $A$ for the
    Black-Scholes economy with volatility $\sigma=0.5$, interest rate
    $r=0.12$, dividend yield $q=0$ and maturity $T=1$.}
  \label{fi:am-subdifferential}
\end{figure}

It is also easy to explain geometrically why the parities do not
hold for American options.  Let $C(k,F)$ (resp.\ $P(k,F)$) be the
price of an American call (resp.\ put) with strike $k$ on the asset
with forward price $F$. Since the functions $C$ and $P$ are
sublinear, it is possible to define convex body $A$ with support
function $h_A(-k,F)=e^{rT}C(k,F)$ and $h_A(k,-F)=e^{rT}P(k,F)$, for
$k,F>0$. This convex body (that we call a \emph{payoff set})
determines the values of American vanilla options and is, up to some
rare exceptions, not centrally symmetric, see
Figure~\ref{fi:am-subdifferential}.

\subsection{Duality}
\label{sec:duality}

Recall that $\eta$ defined from $S_T=F\eta$ is almost surely
positive. If $\eta$ is distributed according to a martingale measure
$\Q$, then a new probability measure $\tilde\Q$ can be defined from
\begin{displaymath}
  \frac{d\tilde\Q}{d\Q}=\eta\,.
\end{displaymath}
Since $\eta$ is usually represented as $e^{H_T}$ for a semimartingale
$H_t$, $0\leq t\leq T$, and $-H_t$ is called the dual to $H_t$, the
random variable $\tilde\eta=\eta^{-1}$ is said to be the \emph{dual}
of $\eta$. The dual lift zonoid $Z_{\tilde\eta}$ is defined as the
$\tilde\Q$-expectation of the segment that joins the origin and
$(1,\tilde\eta)$.

\begin{lemma}
  \label{le:Z-of-dual}
  If $Z_\eta$ is the lift zonoid generated by almost surely positive
  random variable $\eta$ with $\E\eta=1$, then
  \begin{displaymath}
    Z_{\tilde\eta}=\tilde Z_\eta\,,
  \end{displaymath}
  where $\tilde Z_\eta$ denotes the reflection of $Z_\eta$ with respect to
  the line $\{(u_0,u_1)\in\R^2:\; u_0=u_1\}$.
\end{lemma}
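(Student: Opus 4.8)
The plan is to compute the support function of $Z_{\tilde\eta}$ directly from its definition and to recognise it as the support function of the reflected body $\tilde Z_\eta$. Since both sets are convex bodies, the one-to-one correspondence between support functions and convex bodies (cited earlier from \cite[Th.~1.7.1]{schn}) reduces the lemma to the identity $h_{Z_{\tilde\eta}}(u_0,u_1)=h_{\tilde Z_\eta}(u_0,u_1)$ for all $(u_0,u_1)\in\R^2$.

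First I would unwind the left-hand side. By definition $Z_{\tilde\eta}$ is the $\tilde\Q$-expectation of the segment joining the origin and $(1,\tilde\eta)=(1,\eta^{-1})$, so
\begin{displaymath}
  h_{Z_{\tilde\eta}}(u_0,u_1)=\EE_{\tilde\Q}(u_0+u_1\eta^{-1})_+\,.
\end{displaymath}
Now I would change the measure back to $\Q$ using $d\tilde\Q/d\Q=\eta$, which turns this into $\E\bigl(\eta(u_0+u_1\eta^{-1})_+\bigr)$. Since $\eta>0$ almost surely, $\eta(u_0+u_1\eta^{-1})_+=(\eta u_0+u_1)_+=(u_1+u_0\eta)_+$, so
\begin{displaymath}
  h_{Z_{\tilde\eta}}(u_0,u_1)=\E(u_1+u_0\eta)_+=h_{Z_\eta}(u_1,u_0)\,.
\end{displaymath}

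The final step is to observe that swapping the two coordinates of the argument is exactly the dual operation of reflecting the body across the diagonal $\{u_0=u_1\}$: if $\rho(x_0,x_1)=(x_1,x_0)$ denotes that linear reflection, then for any convex body $K$ one has $h_{\rho K}(u)=h_K(\rho^{-1}u)=h_K(\rho u)$ because $\rho$ is a self-inverse orthogonal map. Applying this with $K=Z_\eta$ gives $h_{Z_\eta}(u_1,u_0)=h_{\tilde Z_\eta}(u_0,u_1)$, and combining with the previous display yields the claim. I do not expect a serious obstacle here; the only point requiring a little care is the measure change and the algebraic manipulation $\eta(u_0+u_1\eta^{-1})_+=(u_1+u_0\eta)_+$, which uses positivity of $\eta$ to pull the factor $\eta$ inside the positive part — the hypotheses $\eta>0$ a.s.\ and $\E\eta=1$ (the latter ensuring $\tilde\Q$ is a probability measure and that $Z_\eta$ is centrally symmetric around $(\tfrac12,\tfrac12)$, consistent with the reflection being well defined) are exactly what makes this work.
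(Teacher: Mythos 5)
Your argument is correct and is essentially identical to the paper's own proof: both compute $h_{Z_{\tilde\eta}}(u_0,u_1)=\E_{\tilde\Q}(u_0+u_1\eta^{-1})_+=\E[(u_0+u_1\eta^{-1})_+\eta]=h_{Z_\eta}(u_1,u_0)=h_{\tilde Z_\eta}(u_0,u_1)$ and conclude via the one-to-one correspondence between convex bodies and support functions. You merely spell out two steps the paper leaves implicit (pulling the positive factor $\eta$ inside $(\cdot)_+$ and the identity $h_{\rho K}=h_K\circ\rho$ for the self-inverse reflection $\rho$), which is fine.
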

\begin{proof}
  For $(u_0,u_1)\in\R^2$
  \begin{align*}
    h_{Z_{\tilde\eta}}(u_0,u_1)&=\E_{\tilde\Q}(u_0+u_1\tilde\eta)_+
    =\EQ\big[(u_0+u_1\eta^{-1})_+\eta\big]\\
    &=h_{Z_\eta}(u_1,u_0)=h_{\tilde Z_\eta}(u_0,u_1)\,.
  \end{align*}
  noticing that the support function of $\tilde{Z}_\eta$ is obtained
  from the support function of $Z_\eta$ by swapping the coordinates.
\end{proof}

Since $Z_\eta$ is centrally symmetric with respect to $(\thf,\thf)$,
the set $\tilde Z_\eta$ can also be obtained by reflecting $Z_\eta$
with respect to the line $\{(u_0,u_1):\; u_1=1-u_0\}$.

Lemma~\ref{le:Z-of-dual} relates the symmetry property of lift
zonoids to the duality principle in option pricing at maturity. This
principle traces its roots to observations by Merton~\cite{mer73},
Grabbe \cite{grab83}, McDonald and Schroder \cite{mcd:schr98}, Bates
\cite{bat91}, and Carr \cite{car94} and has been studied extensively
over the recent years, e.g.~Carr and Chesney~\cite{car:che96} and
Detemple~\cite{det01} discuss American version of duality. For a
detailed presentation of the duality principle in a general
exponential semimartingale setting and for its various applications
see Eberlein et al.~\cite{eber:pap:shir08} and the literature cited
therein. The multiasset case has been studied in Eberlein et
al.~\cite{eber:pap:shir08b}.

Consider a European call option valued at $c(k,F,\Q)$ with strike $k$
and maturity $T$ on a share represented in a risk-neutral world by a
$\Q$-price process $S_t=S_0e^{(r-q)t}\eta_t=F_t\eta_t$, i.e.\ the
share is traded in a market with deterministic risk-free interest rate
$r$ and attracts deterministic dividend-yield $q$, while $\eta_t$ is a
$\Q$-martingale. In the most common setting this call option is
related to a dual European put $\tilde p(S_0,\tilde F,\tilde\Q)$ with
strike $S_0$ and the same maturity $T$ on a share represented by the
dual $\tilde\Q$-price process $\tilde
S_t=ke^{(q-r)t}\tilde\eta_t=\tilde F_t\tilde\eta_t$, where
$\tilde\eta_t$ is a $\tilde\Q$-martingale. In other words, the dual
put is written on another (dual) share traded in the dual market with
risk-free interest rate $q$ assuming that this dual share attracts
dividend-yield $r$. By Lemma~\ref{le:Z-of-dual} we get the following
geometric proof and interpretation of the European \emph{call-put
  duality},
\begin{align*}
  \tilde p(S_0,\tilde F,\tilde\Q)
  &=e^{-qT}\E_{\tilde\Q}(S_0-\tilde F\tilde\eta)_+
   =e^{-qT}h_{Z_{\tilde\eta}}(S_0,-\tilde F)\\
  &=e^{-qT}h_{\tilde Z_\eta}(S_0,-\tilde F)\\
  &=e^{-qT}h_{Z_\eta}(-\tilde F,S_0)=e^{-qT}\E_\Q(S_0\eta-\tilde F)_+\\
  &=e^{-rT}\E_\Q(S_0e^{(r-q)T}\eta-ke^{(q-r)T+(r-q)T})_+=c(k,F,\Q)\,.
\end{align*}
By the classical call-put parity, a call-call duality can be
derived in a straightforward way. The duality for American options
(see~\cite{car:che96,det01,schr99}) can be interpreted
geometrically by reflecting the payoff set $A$ (see
Figure~\ref{fi:am-subdifferential}) at the line bisecting the
first quadrant.

\subsection{Symmetries for vanilla options}
\label{sec:symmetry}

If $Z_\eta$ is symmetric with respect to the line
$\{(u_0,u_1):\;u_0=u_1\}$ bisecting the first quadrant, i.e.
\begin{equation}
  \label{eq:tilde-z_eta=z_eta-}
  \tilde Z_\eta=Z_\eta\,,
\end{equation}
the duality relates out-of-the-money to certain in-the-money calls in
the same market.
We call a positive random variable $\eta$ \emph{self-dual} if
(\ref{eq:tilde-z_eta=z_eta-}) holds. This symmetry property
(\ref{eq:tilde-z_eta=z_eta-}) clearly depends on the probability
measure used to define the expectation. It implies that $\E\eta=1$,
i.e.\ in the one period setting each probability measure that makes
$\eta$ self-dual is a martingale measure.

\begin{theorem}
  \label{thr:parity}
  Assume that the asset price at maturity of an asset with
  deterministic dividend payments is $S_T=F\eta$ with self-dual
  $\eta$. Then
  \begin{equation}
    \label{eq:cf-k+k=ck-f+f}
    e^{rT}c(k,F)+k=e^{rT}c(F,k)+F\,.
  \end{equation}
\end{theorem}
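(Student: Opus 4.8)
The identity to be shown, $e^{rT}c(k,F)+k=e^{rT}c(F,k)+F$, is a symmetry between a call with strike $k$ on the asset with forward $F$ and a call with strike $F$ on the asset with forward $k$. The natural strategy is to express both non-discounted call prices as values of the support function of $Z_\eta$ and then use the self-duality hypothesis (\ref{eq:tilde-z_eta=z_eta-}) together with the already-established interpretation $e^{rT}c(k,F)=h_{Z_\eta}(-k,F)$ from Section~\ref{sec:parities}. First I would write
\begin{displaymath}
  e^{rT}c(k,F)=h_{Z_\eta}(-k,F)=h_{\tilde Z_\eta}(-k,F),
\end{displaymath}
the second equality being exactly (\ref{eq:tilde-z_eta=z_eta-}). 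By the description of $\tilde Z_\eta$ in Lemma~\ref{le:Z-of-dual}, the support function of $\tilde Z_\eta$ is obtained from that of $Z_\eta$ by swapping the two coordinates, so $h_{\tilde Z_\eta}(-k,F)=h_{Z_\eta}(F,-k)$.

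Next I would translate $h_{Z_\eta}(F,-k)$ back into an option price. This is now a support function value at a point whose first (bond) coordinate is positive and whose second coordinate is negative, i.e.\ it corresponds to a put-type quantity; concretely $h_{Z_\eta}(F,-k)=\E(F-k\eta)_+$, which is $k$ times a put payoff on $\eta$ with "strike" $F/k$, or equivalently $e^{rT}$ times a put price. To convert this put into the desired call $c(F,k)$, I would invoke the call-put parity (\ref{eq:c-p-parity}) in the form appropriate to an asset with forward price $k$ and strike $F$: $e^{rT}p(F,k)=e^{rT}c(F,k)-k+F$. Substituting $h_{Z_\eta}(F,-k)=e^{rT}p(F,k)$ and collecting terms yields $e^{rT}c(k,F)=e^{rT}c(F,k)-k+F$, which rearranges to (\ref{eq:cf-k+k=ck-f+f}).

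The routine obstacle is bookkeeping: making sure the roles of $k$ and $F$ as "strike" versus "forward price" are consistently tracked through the coordinate swap, and that the homogeneity of the support function is used correctly when factoring out $k$ (or $F$) from expressions like $\E(F-k\eta)_+$. The only genuinely substantive input is the self-duality hypothesis; everything else is the geometric dictionary (support function $\leftrightarrow$ option price) and the classical parity, both already available in the excerpt. An alternative, perhaps cleaner, route avoids parity entirely: since $\eta$ self-dual implies $\E\eta=1$ and $Z_\eta$ is centrally symmetric about $(\tfrac12,\tfrac12)$, one can combine the reflection $\tilde Z_\eta=Z_\eta$ with the central symmetry and compute $h_{Z_\eta}(-k,F)$ and $h_{Z_\eta}(-F,k)$ directly, using $h_{Z_{\eta,o}}(u)=h_{Z_{\eta,o}}(-u)$ as in the parity proof; I would present whichever of the two is shorter, but expect the parity-based argument above to be the most transparent.
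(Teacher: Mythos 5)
Your argument is correct and is essentially the paper's own proof: both identify $e^{rT}c(k,F)=h_{Z_\eta}(-k,F)$, use the coordinate swap defining $\tilde Z_\eta$ together with the self-duality $\tilde Z_\eta=Z_\eta$ to rewrite this as $h_{Z_\eta}(F,-k)=e^{rT}p(F,k)$, and finish with the put-call parity applied to the pair $(F,k)$. The only difference is the order in which the reflection and the symmetry hypothesis are invoked, which is immaterial.
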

\begin{proof}
  Noticing (\ref{eq:tilde-z_eta=z_eta-}), we have
  \begin{displaymath}
    e^{rT}c(k,F)=h_{Z_\eta}(-k,F)
    =h_{\tilde Z_\eta}(F,-k)=h_{Z_\eta}(F,-k)=e^{rT}p(F,k)\,.
  \end{displaymath}
  The proof is finished by applying the put-call parity.
\end{proof}

Since $c(k,F)$ is positively homogeneous in both arguments, i.e.\
$c(tk,tF)=tc(k,F)$ for $t\geq0$, the symmetry
relation~(\ref{eq:cf-k+k=ck-f+f}) can also be written as
\begin{displaymath}
  \frac{\tilde{c}(m)+e^{-rT}}{\tilde{c}(m^{-1})+e^{-rT}}=m\,,
\end{displaymath}
where $m=F/k$ determines the \emph{moneyness} of the call and
$\tilde{c}(m)=c(1,m)$.

\begin{cor}
  \label{co:ad-sym}
  Assuming (\ref{eq:tilde-z_eta=z_eta-}), the following European
  symmetries hold
  \begin{align}
    \label{eq:e-c-p-parity}
    p(k,F) &= c(F,k)\,,\\
    \label{eq:p-p-parity}
    e^{rT}p(k,F)+F &= k+e^{rT}p(F,k)\,.
  \end{align}
\end{cor}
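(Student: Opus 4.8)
The plan is to derive both identities directly from Theorem~\ref{thr:parity} together with the classical European call--put parity~(\ref{eq:c-p-parity}), so that Corollary~\ref{co:ad-sym} is essentially a bookkeeping consequence of what has already been established. First I would prove~(\ref{eq:e-c-p-parity}). From the proof of Theorem~\ref{thr:parity} we already have the chain $e^{rT}c(k,F)=h_{Z_\eta}(-k,F)=h_{\tilde Z_\eta}(F,-k)=h_{Z_\eta}(F,-k)=e^{rT}p(F,k)$, where the middle equality is the definition of the reflection $\tilde Z_\eta$ and the next one is the self-duality assumption~(\ref{eq:tilde-z_eta=z_eta-}). Hence $c(k,F)=p(F,k)$; relabelling the roles of $k$ and $F$ (both are arbitrary positive reals) gives $c(F,k)=p(k,F)$, which is~(\ref{eq:e-c-p-parity}).

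For~(\ref{eq:p-p-parity}) I would combine~(\ref{eq:e-c-p-parity}) with the call--put parity. Apply~(\ref{eq:c-p-parity}) to the call with strike $F$ on the asset with forward price $k$ to get $e^{rT}c(F,k)=(k-F)+e^{rT}p(F,k)$, and rearrange~(\ref{eq:cf-k+k=ck-f+f}) from Theorem~\ref{thr:parity} as $e^{rT}c(k,F)-e^{rT}c(F,k)=F-k$. Now substitute $e^{rT}c(k,F)=e^{rT}p(F,k)$ (which is~(\ref{eq:e-c-p-parity}) read with the labels as in the Theorem's proof) and $e^{rT}c(F,k)=e^{rT}p(k,F)$ into this last relation; after moving terms across, $e^{rT}p(F,k)-e^{rT}p(k,F)=F-k$, i.e.\ $e^{rT}p(k,F)+F=k+e^{rT}p(F,k)$, which is~(\ref{eq:p-p-parity}). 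Alternatively, and perhaps more cleanly for the write-up, I would just feed~(\ref{eq:e-c-p-parity}) into both sides of~(\ref{eq:cf-k+k=ck-f+f}): replacing $c(k,F)$ by $p(F,k)$ and $c(F,k)$ by $p(k,F)$ in~(\ref{eq:cf-k+k=ck-f+f}) turns it immediately into~(\ref{eq:p-p-parity}).

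There is no real obstacle here; the only thing to be careful about is the consistent use of the symmetry $c(k,F)=p(F,k)$ versus $c(F,k)=p(k,F)$ — these are the same statement with swapped dummy variables, but it is easy to introduce a sign error when shuffling the parity relation, so I would keep strict track of which argument plays the role of strike and which of forward at each step. Everything else is linear algebra in the two option prices, and positive homogeneity is not even needed for this corollary (it was only used to restate the Theorem in moneyness form). I would therefore present the proof in two short displays, one for each identity, each a two-line manipulation citing Theorem~\ref{thr:parity} and~(\ref{eq:c-p-parity}).
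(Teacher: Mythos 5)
Your proof is correct and follows essentially the same route as the paper: the only (harmless) difference is that for~(\ref{eq:e-c-p-parity}) the paper combines~(\ref{eq:cf-k+k=ck-f+f}) with the parity~(\ref{eq:c-p-parity}), whereas you read $c(k,F)=p(F,k)$ directly off the support-function chain in the proof of Theorem~\ref{thr:parity} and relabel, which is an equally valid (and parity-free) way to get the first identity. Your ``cleaner'' derivation of~(\ref{eq:p-p-parity}) --- substituting~(\ref{eq:e-c-p-parity}) into both sides of~(\ref{eq:cf-k+k=ck-f+f}) --- is exactly what the paper does.
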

\begin{proof}
  By~(\ref{eq:cf-k+k=ck-f+f}) and~(\ref{eq:c-p-parity}) we obtain
  \begin{displaymath}
    c(F,k)=c(k,F)-Fe^{-rT}+ke^{-rT}=p(k,F)\,.
  \end{displaymath}
  By combining the left-hand side of~(\ref{eq:cf-k+k=ck-f+f})
  with~(\ref{eq:e-c-p-parity}) for the reversed order of $k$ and $F$
  and the right-hand side of~(\ref{eq:cf-k+k=ck-f+f})
  with~(\ref{eq:e-c-p-parity}) we arrive at (\ref{eq:p-p-parity}).
\end{proof}

The above relations are known in the literature as \emph{put-call
symmetry}, see e.g.~\cite{bat97}, \cite{car94},
\cite{car:ell:gup98}, and more
recently~\cite{faj:mor06,faj:mor06b} for log-infinitely-divisible
models, which are further discussed in
Section~\ref{sec:expon-self-dual-1}. Further recent developments
are presented in~\cite{car:lee08}. There are various applications
of the put-call symmetry, especially in connection with hedging
exotic options, see~\cite{car:ell:gup98,car:lee08} and
Section~\ref{sec:ex-apl-self-dual}.

\subsection{General symmetry}
\label{sec:general-symmetry}

The following result obtained in~\cite[Th.~2.2]{car:lee08} (without
use of lift zonoids) generalises the self-duality to a wide
range of payoff functions.

\begin{theorem}
  \label{th:gen-sym}
  An integrable random variable $\eta$ is
  self-dual if and only if for any payoff function
  $f:\R_+\mapsto\R$ such that $\EQ|f(F\eta)|<\infty$ for $F>0$,
  \begin{equation}
    \label{eq:gen-sym}
    \EQ f(F\eta)=\EQ[f(F\eta^{-1})\eta]\,.
  \end{equation}
\end{theorem}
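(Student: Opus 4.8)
The plan is to prove both directions of the equivalence, with the ``only if'' direction being essentially a direct computation from the definition of self-duality and the ``if'' direction following by specialization to a particular family of payoffs. The key observation connecting the two sides of~(\ref{eq:gen-sym}) is that for any measurable $f$ with the stated integrability, the right-hand side $\EQ[f(F\eta^{-1})\eta]$ is precisely the expectation of $f(F\tilde\eta)$ computed under the dual measure $\tilde\Q$ with $d\tilde\Q/d\Q=\eta$, since $\tilde\eta=\eta^{-1}$. So~(\ref{eq:gen-sym}) asserts that $F\eta$ under $\Q$ and $F\tilde\eta$ under $\tilde\Q$ have the same distribution, i.e.\ $\eta$ and $\tilde\eta$ are equal in law.

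First I would establish the ``if'' direction. Assuming~(\ref{eq:gen-sym}) holds for all admissible $f$, I would apply it to the call payoffs $f_k(x)=(x-k)_+$ and to the bond payoff $f(x)\equiv 1$, which are manifestly integrable when $\eta$ is. Since $\EQ(F\eta-k)_+=h_{Z_\eta}(-k,F)$ and, by the computation in the proof of Lemma~\ref{le:Z-of-dual}, $\EQ[(F\eta^{-1}-k)_+\eta]=h_{Z_{\tilde\eta}}(-k,F)$, equation~(\ref{eq:gen-sym}) for $f_k$ gives $h_{Z_\eta}(-k,F)=h_{Z_{\tilde\eta}}(-k,F)$ for all $k,F>0$. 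The choice $f\equiv1$ yields $\EQ\eta=1$, so the symmetric parts of the two centrally symmetric bodies agree, and by positive homogeneity the support functions of $Z_\eta$ and $Z_{\tilde\eta}$ coincide on all of $\R^2$ (they already agree on the two coordinate half-axes and in the negative quadrant, as in~(\ref{eq:support-f-of-Q})). Hence $Z_\eta=Z_{\tilde\eta}=\tilde Z_\eta$ by Lemma~\ref{le:Z-of-dual}, which is~(\ref{eq:tilde-z_eta=z_eta-}), i.e.\ $\eta$ is self-dual. Alternatively one can argue directly: testing~(\ref{eq:gen-sym}) against $f$ ranging over indicators $\one_{[0,t]}$, or against $f(x)=e^{\imagi\lambda\log x}$ to get characteristic functions of $\log(F\eta)$, shows $\eta\stackrel{d}{=}\tilde\eta$, which is equivalent to self-duality since the lift zonoid determines the distribution.

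For the ``only if'' direction, assume $\eta$ is self-dual, so $\E\eta=1$, the dual measure $\tilde\Q$ is well-defined, and by Lemma~\ref{le:Z-of-dual} together with~(\ref{eq:tilde-z_eta=z_eta-}) we have $Z_{\tilde\eta}=\tilde Z_\eta=Z_\eta$. Since the lift zonoid determines the distribution (\cite[Th.~2.21]{mos02}), $\tilde\eta$ under $\tilde\Q$ has the same law as $\eta$ under $\Q$. Now for any payoff $f$ with $\EQ|f(F\eta)|<\infty$, change measure:
\begin{displaymath}
  \EQ[f(F\eta^{-1})\eta]=\E_{\tilde\Q}f(F\tilde\eta)=\EQ f(F\eta)\,,
\end{displaymath}
where the last equality uses the equality in distribution; the integrability on the right transfers to finiteness of the middle and left expressions, so all quantities are well-defined. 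This is exactly~(\ref{eq:gen-sym}).

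The main obstacle, such as it is, is purely bookkeeping around integrability and the reduction of the equality of support functions to equality of the full convex bodies: one must check that matching the call prices $h_{Z_\eta}(-k,F)$ for all $k,F>0$ together with $\E\eta=\E\tilde\eta=1$ is enough to conclude $Z_\eta=Z_{\tilde\eta}$, for which the explicit case analysis~(\ref{eq:support-f-of-Q}) and central symmetry of lift zonoids suffice. No deep new idea is needed beyond the change-of-measure identity and the fact, already quoted in the excerpt, that the lift zonoid is a complete invariant of the distribution.
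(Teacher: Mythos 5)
Your proposal is correct and follows essentially the same route as the paper: the ``only if'' direction is the change of measure to the dual measure $\tilde\Q$ combined with the fact that $Z_{\tilde\eta}=\tilde Z_\eta=Z_\eta$ forces $\tilde\eta$ under $\tilde\Q$ to have the law of $\eta$ under $\Q$, and the ``if'' direction specialises~(\ref{eq:gen-sym}) to vanilla payoffs to identify the support functions of $Z_\eta$ and $\tilde Z_\eta$. Your added bookkeeping (using $f\equiv1$ to get $\E\eta=1$ and extending the support-function identity from the relevant quadrant to all of $\R^2$ via~(\ref{eq:support-f-of-Q}) and homogeneity) only makes explicit what the paper leaves implicit.
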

\begin{proof}
  Changing measure from $\Q$ to its dual $\tilde\Q$, we arrive at
  \begin{displaymath}
    \E_\Q f(F\eta)=\E_{\tilde\Q}[f(F\eta)\eta^{-1}]
     =\E_{\tilde\Q}[f(F\tilde\eta^{-1})\tilde\eta]\\
    =\E_\Q[f(F\eta^{-1})\eta]\,.
  \end{displaymath}
  Since lift zonoids uniquely determine distributions of random
  variables, the last equality holds by Lemma~\ref{le:Z-of-dual} in
  view of the symmetry assumption~(\ref{eq:tilde-z_eta=z_eta-}).
  Conversely, if~(\ref{eq:gen-sym}) holds for any integrable
  payoff-function, then it holds for any vanilla options,
  i.e.\ $h_{Z_\eta}(-k,F)=h_{Z_\eta}(F,-k)=h_{\tilde Z_\eta}(-k,F)$ for
  every $k>0$, implying~(\ref{eq:tilde-z_eta=z_eta-}).
\end{proof}

Denote by $\BC(k_c,F)$ and $\GC(k_c,F)$ the arbitrage-free values of a
binary call and gap call with maturity $T$ and strike $k_c$, i.e.\ the
European derivatives with payoffs $\one_{S_T>k_c}$ and
$S_T\one_{S_T>k_c}$. Furthermore, $\BP(k_p,F)$ and $\GP(k_p,F)$ denote
the arbitrage-free values of the equally specified puts, i.e.\ the
European derivatives with payoffs $\one_{S_T<k_p}$ and
$S_T\one_{S_T<k_p}$.  Theorem~\ref{th:gen-sym} yields the following
result, which is equivalent to~\cite[Cor.~2.9]{car:lee08} being a
generalisation of the \emph{binary put-call symmetry}
from~\cite{car:ell:gup98}.

\begin{cor}
  \label{co:bin-sym}
  Under the assumptions of Theorem~\ref{th:gen-sym} the following
  relationships hold:
  \begin{displaymath}
    \sqrt{k_c}\BC(k_c,F)=\frac{1}{\sqrt{k_p}}\GP(k_p,F),\qquad
    \sqrt{k_p}\BP(k_p,F)=\frac{1}{\sqrt{k_c}}\GC(k_c,F)\,,
  \end{displaymath}
  where the forward price $F$ equals the geometric mean of the binary
  (resp.\ gap) call strike $k_c$ and the gap (resp.\ binary) put
  strike $k_p$, i.e.\ $F=\sqrt{k_ck_p}$.
\end{cor}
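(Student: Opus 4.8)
The plan is to derive Corollary~\ref{co:bin-sym} directly from
Theorem~\ref{th:gen-sym} by applying the general symmetry
relation~(\ref{eq:gen-sym}) to two carefully chosen payoff functions
and then matching the strikes so that the appearing quantities can be
re-expressed in terms of binary and gap options. First I would write
the values of the four instruments as expectations with respect to
$\Q$: we have $\BC(k_c,F)=e^{-rT}\EQ\one_{F\eta>k_c}$,
$\GP(k_p,F)=e^{-rT}\EQ[F\eta\one_{F\eta<k_p}]$, and similarly for
$\BP$ and $\GC$. The key observation is that under the substitution
$\eta\mapsto\eta^{-1}$ the event $\{F\eta>k_c\}$ turns into
$\{F\eta^{-1}>k_c\}=\{\eta<F/k_c\}$, i.e.\ into an event of the
``put'' type, so~(\ref{eq:gen-sym}) naturally swaps calls with puts.

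The main step is the following. Apply~(\ref{eq:gen-sym}) to the payoff
$f(x)=\one_{x>k_c}$. The left-hand side is $e^{rT}\BC(k_c,F)$, while
the right-hand side is
\begin{displaymath}
  \EQ[\one_{F\eta^{-1}>k_c}\,\eta]
  =\EQ[\eta\,\one_{\eta<F/k_c}]
  =\tfrac{1}{F}\,\EQ[F\eta\,\one_{F\eta<F^2/k_c}]\,,
\end{displaymath}
which is exactly $\tfrac{1}{F}\,e^{rT}\GP(k_p,F)$ provided we set
$k_p=F^2/k_c$, equivalently $F=\sqrt{k_ck_p}$. Multiplying both sides
by $e^{-rT}$ and rearranging yields
$F\,\BC(k_c,F)=\GP(k_p,F)$; substituting $F=\sqrt{k_ck_p}$ and
dividing by $\sqrt{k_p}$ gives the first claimed identity
$\sqrt{k_c}\,\BC(k_c,F)=\tfrac{1}{\sqrt{k_p}}\,\GP(k_p,F)$. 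The second
identity follows by the symmetric argument: apply~(\ref{eq:gen-sym})
to $f(x)=x\one_{x>k_c}$, so that the left-hand side becomes
$e^{rT}\GC(k_c,F)$ and the right-hand side becomes
$\EQ[F\eta^{-1}\one_{\eta<F/k_c}\,\eta]=F\,\EQ[\one_{F\eta<F^2/k_c}]
=F\,e^{rT}\BP(k_p,F)$ with the same relabelling $k_p=F^2/k_c$; this
produces $\GC(k_c,F)=F\,\BP(k_p,F)$ and hence
$\sqrt{k_p}\,\BP(k_p,F)=\tfrac{1}{\sqrt{k_c}}\,\GC(k_c,F)$.

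I do not expect a genuine obstacle here — the result is essentially a
substitution together with bookkeeping of the strikes. The one point
that requires slight care is the direction of the inequalities under
$\eta\mapsto\eta^{-1}$ (a strict inequality $F\eta>k_c$ becomes the
strict inequality $\eta<F/k_c$, so binary/gap \emph{calls} map to
gap/binary \emph{puts} with the complementary strict inequality), and
the verification that the integrability hypothesis of
Theorem~\ref{th:gen-sym} is met for the indicator and the
linear-times-indicator payoffs, which holds because $\eta$ is assumed
integrable and the payoffs are bounded by $1$ and by $F\eta$
respectively. Once the two applications of~(\ref{eq:gen-sym}) are in
place, rewriting $F$ as the geometric mean $\sqrt{k_ck_p}$ and tidying
up the square roots finishes the proof; I would state the two
applications as the two displayed computations above and then collect
the resulting identities in a single concluding line.
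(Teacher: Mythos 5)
Your proof is correct and follows essentially the same route as the paper: both apply the general symmetry relation~(\ref{eq:gen-sym}) to the indicator payoff $\one_{x>k_c}$ (and, for the second identity, to $x\one_{x>k_c}$), rewrite the resulting event $\{F\eta^{-1}>k_c\}$ as $\{F\eta<k_p\}$ using $F=\sqrt{k_ck_p}$, and tidy up the constants. Your explicit check of integrability and of the strict-inequality bookkeeping is a harmless elaboration of what the paper leaves implicit.
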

\begin{proof}
  If $\sqrt{k_c k_p}=F$, then (\ref{eq:gen-sym}) yields that
  \begin{align*}
    e^{rT}\BC(k_c,F)&=\EQ[\one_{F\eta>k_c}]
    =\EQ[\eta\one_{F\eta^{-1}>k_c}]
     =\EQ\Big[\eta\frac{F}{\sqrt{k_ck_p}}\one_{\sqrt{k_ck_p}>k_c\eta}\Big]\\
    &=\frac{1}{\sqrt{k_ck_p}}\EQ[S_T\one_{k_p>S_T}]
     =\frac{e^{rT}}{\sqrt{k_ck_p}}\GP(k_p,F)\,.
  \end{align*}
  The proof of the second identity is similar.
\end{proof}

Thus, symmetry properties for particular options (vanilla, binary/gap,
straddles, etc.)  are nothing but writing down
Equation~(\ref{eq:gen-sym}) for special payoff functions.

\begin{figure}[htbp]
  \begin{center}
    \includegraphics[height=6.5cm]{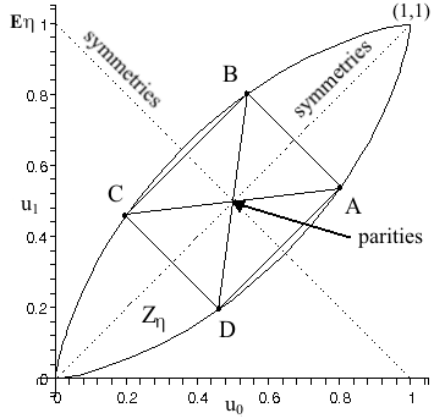}
  \end{center}
  \caption{Symmetries of $Z_\eta$ for self-dual $\eta$ and their
    financial interpretations.
    \label{fi:Z-sym}}
\end{figure}

The random variable $\eta$ has no atoms if and only if the support
function $h_{Z_\eta}$ is continuously differentiable on
$\R^2\setminus\{0\}$, so that $Z_\eta$ is strictly convex,
see~\cite[Sec.~2.5]{schn}. Then the unique point on the boundary
of $Z_\eta$ (without the points $(0,0)$, $(1,1)$) at which
$(u_0,u_1)\in\R^2\setminus\{0\}$ is the outward normal vector to
$Z_\eta$ is given by
\begin{displaymath}
  \grad h_{Z_\eta}(u_0,u_1)=
  \left(\frac{\partial h_{Z_\eta}}{\partial u_0},
        \frac{\partial h_{Z_\eta}}{\partial u_1}\right)\,.
\end{displaymath}
Assuming that the underlying probability measure $\Q$ is a
martingale measure and calculating the gradient of the support
function $h_{Z_\eta}$ at $(u_0,u_1)=(-k,F)$ with $k,F>0$ yield
another parametrisation of the upper boundary of $Z_\eta$  as
\begin{displaymath}
  \grad h_{Z_\eta}(-k,F)
  =e^{rT}\left(\BC(k,F),\frac{\GC(k,F)}{F}\right),\quad k>0\,.
\end{displaymath}
Analogously, the lower boundary is parametrised by
\begin{displaymath}
  \grad h_{Z_\eta}(k,-F)
  =e^{rT}\left(\BP(k,F),\frac{\GP(k,F)}{F}\right),\quad k>0\,.
\end{displaymath}
Hence, in non-atomic cases the boundaries of the lift zonoid
$Z_\eta$ can be parametrised by non-discounted arbitrage-free
values of binary and normalised gap options. Thus, the
distribution of $\eta$ is also reflected in these pairs of
options.

Figure~\ref{fi:Z-sym} interprets financial parities and symmetries
geometrically for the lift-zonoid $Z_\eta$.  By comparing the
coordinates of the points A, B, C, D we get relations between
binary and gap options. By comparing the related values of the
support function $h_{Z_\eta}$ we arrive at the put-call parity and
symmetry for vanilla options. Combining A with C yields parities
between certain in-the-money puts (vanilla, binary, and gap) and
the related out-of-the-money calls. Connecting the points B and D
yields the same parity results, except that B represents certain
in-the-money calls and D the related out-of-the-money puts.
Comparing C with D yields out-of-the-money put call symmetry
results, while linking A with B leads to the same results for
in-the-money options.  Finally combining B with C (resp.\ A with
D) results in the call-call (resp.\ put-put) symmetry.

\section{Options on maximum and lift max-zonoids}
\label{sec:option-prices-norms}

Since the call payoff can be written as
\begin{equation}
  \label{eq:max}
  (S_T-k)_+=\max(S_T,k)-k\,,\quad S_T,k\geq0\,,
\end{equation}
the expected call payoff can be related to another convex compact
subset $M_\eta$ of $\R^2$ that has the support function
\begin{equation}
  \label{eq:sup-rep}
  h_{M_\eta}(k,F)=\E \max(k,F\eta,0)\,,\quad (k,F)\in\R^2\,.
\end{equation}
The set $M_\eta$ is defined as the Aumann expectation of the
random triangle with vertices at the origin, $(1,0)$, and
$(0,\eta)$.  Because the financial quantities are non-negative, we
often restrict the support function onto the first quadrant
$\R_+^2$. Then it is possible to write~(\ref{eq:sup-rep}) as
\begin{equation}
  \label{eq:sup-rep-pos}
  h_{M_\eta}(k,F)=\E\max(k,F\eta)\,,\quad (k,F)\in\R_+^2\,.
\end{equation}

If $\eta=(\eta_1,\dots,\eta_n)$ is a random vector in
$\R_+^n=[0,\infty)^n$, a similar to (\ref{eq:sup-rep}) construction
leads to the set $M_\eta$ called the \emph{lift max-zonoid} of $\eta$
and defined as the Aumann expectation of the random crosspolytope with
vertices at the origin and the unit basis vectors $e_0,e_1,\dots,e_n$
in $\R^{n+1}$ scaled respectively by $1,\eta_1,\dots,\eta_n$, i.e.
\begin{displaymath}
  h_{M_\eta}(u_0,u_1,\dots,u_n)
  =\E\max(0,u_0,u_1\eta_1,\dots,u_n\eta_n)\,,
  (u_0,u_1,\dots,u_n)\in\R^{n+1}\,.
\end{displaymath}
Max-zonoids have been introduced in \cite{mo08e} in view of their
use in extreme values theory.  If $\E\eta=(1,\dots,1)$, then
$M_\eta$ is a convex compact subset of the unit cube $[0,1]^{n+1}$
that contains the origin and all unit basis vectors. If $n=1$,
then each such set is a lift max-zonoid of some random variable,
while this no longer holds for two and more assets,
see~\cite[Th.~2]{mo08e}.

\begin{theorem}
  \label{thr:uniq-lmz}
  The lift max-zonoid $M_\eta$ of an integrable random vector
  $\eta\in\R_+^n$ determines uniquely the distribution of $\eta$.
\end{theorem}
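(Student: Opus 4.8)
The plan is to recover the distribution of $\eta$ from $M_\eta$ by extracting enough information from the support function $h_{M_\eta}$ to pin down all one-dimensional marginal distributions of the $\eta_i$, together with the distribution of suitable maxima $\max(a_i\eta_i, a_j\eta_j,\dots)$, and then to argue that this family of "max-marginals" determines the joint law. The first step is to note that for $u=(u_0,u_1,\dots,u_n)$ with exactly one positive coordinate, $h_{M_\eta}$ degenerates nicely: taking $u_0=t>0$ and all other coordinates $\le 0$ gives $h_{M_\eta}(u)=\E\max(0,t)=t$, which carries no information, but taking $u_i=t>0$ for a single $i\ge 1$ and all other coordinates equal to $0$ gives $h_{M_\eta}(0,\dots,t,\dots,0)=\E\max(0,t\eta_i)=t\,\E\eta_i$. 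More usefully, combining the $0$-coordinate with a single $i$: $h_{M_\eta}(u_0,0,\dots,u_i,\dots,0)=\E\max(u_0, u_i\eta_i, 0)=\E\max(u_0,u_i\eta_i)$ for $u_0,u_i\ge 0$. This is exactly the support function $h_{M_{\eta_i}}$ of the univariate lift max-zonoid of $\eta_i$, which by the $n=1$ part of the preceding discussion (or directly via~(\ref{eq:max}) and the fact that $Z_{\eta_i}$ determines the law of $\eta_i$, see~\cite[Th.~2.21]{mos02}) determines the distribution of each $\eta_i$.

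The key step is then to handle joint information. For a vector $a=(a_1,\dots,a_n)$ with $a_i>0$, consider the univariate random variable $\xi_a=\max_i(a_i\eta_i)$. I claim $h_{M_\eta}$ determines the distribution of $\xi_a$ for every such $a$: indeed, for $u_0\ge 0$,
\begin{displaymath}
  h_{M_\eta}(u_0, u_0 a_1, \dots, u_0 a_n)
  = \E\max\!\big(0, u_0, u_0 a_1\eta_1,\dots,u_0 a_n\eta_n\big)
  = u_0\,\E\max(1,\xi_a)\,,
\end{displaymath}
and more generally, scaling the $i$th block by a common positive factor $v$ in place of $u_0 a_i$, one obtains $\E\max(u_0, v\xi_a)$ as a value of $h_{M_\eta}$ up to the harmless floor at $0$. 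Thus for each fixed direction $a$ we recover the full univariate support function $k\mapsto\E\max(k,v\xi_a)$, hence by the $n=1$ uniqueness the law of $\xi_a$. The remaining point is purely probabilistic: the collection of distributions of $\max_i(a_i\eta_i)$ over all $a\in(0,\infty)^n$ determines the joint distribution of $(\eta_1,\dots,\eta_n)$ on $\R_+^n$. This follows because knowing $\Prob{\max_i(a_i\eta_i)\le t}=\Prob{\eta_i\le t/a_i \ \forall i}$ for all $a$ and $t$ is the same as knowing $\Prob{\eta_i\le s_i \ \forall i}$ for all $(s_1,\dots,s_n)\in(0,\infty)^n$, i.e.\ the multivariate distribution function of $\eta$.

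I expect the main obstacle to be a careful treatment of the corners and of whether the "floor at $0$" in $\max(0,u_0,\dots)$ ever obstructs the extraction of $\E\max(u_0,v\xi_a)$; one has to choose the sign pattern of the coordinates so that the zero term and the unwanted $u_0$ term do not interfere — concretely, keep $u_0$ and all $u_i=v a_i$ nonnegative and let $v$ range over $(0,\infty)$ while $u_0$ ranges over $(0,\infty)$, which suffices to recover $\E\max(u_0, v\xi_a)$ for all such pairs and hence the generalised Lorenz-type boundary of $M_{\xi_a}$. A secondary, more bookkeeping-level obstacle is phrasing the last probabilistic step cleanly: it may be smoothest to avoid distribution functions and instead note that the Laplace transform argument or a monotone-class argument shows that the family $\{\max_i(a_i\eta_i): a\in(0,\infty)^n\}$ generates the Borel $\sigma$-algebra of $\R_+^n$ in the relevant sense, so that agreement of all these one-dimensional laws forces agreement of the joint law.
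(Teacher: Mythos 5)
Your proposal is correct and follows essentially the same route as the paper: fix a positive weight vector, observe that $h_{M_\eta}$ restricted to that direction gives $\E\max(u_0,\zeta)$ for $\zeta=\max_i(u_i\eta_i)$ and hence the law of $\zeta$, and then read off the joint distribution function from $F_\zeta(t)=\P\{\eta_i\le t/u_i,\ i=1,\dots,n\}$. The preliminary discussion of one-dimensional marginals and the closing remarks on corner cases are harmless additions but not needed.
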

\begin{proof}
  The support function $h_{M_\eta}(u_0,u_1,\dots,u_n)$ for
  $u_0,u_1,\dots,u_n\geq0$ can be written as $\E\max(u_0,\zeta)$ for
  $\zeta=\max(u_1\eta_1,\dots,u_n\eta_n)$ and so determines uniquely
  the distribution of $\zeta$. The cumulative distribution
  function of $\zeta$ is given by
  \begin{displaymath}
    F_\zeta(t)=\P\{\eta_1\leq \frac{t}{u_1},\dots,\eta_n\leq\frac{t}{u_n}\}
  \end{displaymath}
  and so determines uniquely the joint cumulative distribution
  function of $\eta_1,\dots,\eta_n$.
\end{proof}

If the underlying probability measure is a martingale measure,
Theorem~\ref{thr:uniq-lmz} implies that prices of options on the
maxima of weighted assets determine the joint distribution of the
risky assets and so prices of all other payoffs. In view of the
positive homogeneity of support functions it suffices that the
expected values are known for parameter vectors $(u_0,u)$ with norm
one in $\R^{n+1}$, $u_0>0$ and $u$ with strictly positive
coordinates. Alternatively, it suffices to know the expected values
for fixed $u_0>0$ and $u$ with strictly positive coordinates.

The remainder of this section deals with the single asset case.
Equation (\ref{eq:max}) suggests that in this case, the lift
max-zonoid is closely related to the lift zonoid $Z_\eta$ of a
random variable $\eta$.

\begin{lemma}
  \label{le:k-and-z}
  If $M_\eta$ is the lift max-zonoid generated by a non-negative
  integrable random variable $\eta$, then
  \begin{displaymath}
    M_\eta=\conv \{(0,0)\cup (Z'_\eta+(1,0))\}\,,
  \end{displaymath}
  where $Z'_\eta$ is the reflection of $Z_\eta$ with respect to the
  line $\{(u_0,u_1):\;u_0=0\}$ and $\conv$ denotes the convex hull,
  see Figure~\ref{fi:Z-K}.
\end{lemma}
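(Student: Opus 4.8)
The plan is to prove the identity at the level of support functions, exploiting the one-to-one correspondence between convex bodies and their support functions. Both sides of the claimed equality $M_\eta=\conv\{(0,0)\cup(Z'_\eta+(1,0))\}$ are convex compact subsets of $\R^2$, so it suffices to show that their support functions agree at every $(k,F)\in\R^2$. For the left-hand side we have the defining formula $h_{M_\eta}(k,F)=\E\max(k,F\eta,0)$ from~(\ref{eq:sup-rep}). The task is therefore to compute the support function of the right-hand side and check it equals this expression.

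The key computational step uses standard facts about support functions: the support function of a convex hull of two sets is the pointwise maximum of their support functions, i.e.\ $h_{\conv(A\cup B)}=\max(h_A,h_B)$; the support function of a translate satisfies $h_{A+x}(u)=h_A(u)+\langle x,u\rangle$; and the support function of $\{(0,0)\}$ is identically zero. Applying these, the support function of the right-hand side at $(k,F)$ equals $\max\bigl(0,\;h_{Z'_\eta}(k,F)+k\bigr)$, where the $k$ comes from the inner product with the translation vector $(1,0)$. Next I would handle the reflection: reflecting a body in the line $\{u_0=0\}$ (the $u_1$-axis) corresponds to negating the first coordinate of the argument of the support function, so $h_{Z'_\eta}(k,F)=h_{Z_\eta}(-k,F)$. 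By~(\ref{eq:support-f-of-Q}) (or directly from the definition of $Z_\eta$ as the expected segment), $h_{Z_\eta}(-k,F)=\E(-k+F\eta)_+=\E\max(F\eta-k,0)$. Substituting gives the support function of the right-hand side as $\max\bigl(0,\;\E\max(F\eta-k,0)+k\bigr)$.

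It then remains to verify the pointwise identity $\E\max(k,F\eta,0)=\max\bigl(0,\;k+\E\max(F\eta-k,0)\bigr)$ for all $(k,F)\in\R^2$. When $k\ge 0$, the inner expression $k+\E\max(F\eta-k,0)$ is non-negative, the outer $\max$ with $0$ is inert, and $k+\E(F\eta-k)_+=\E\bigl(k+(F\eta-k)_+\bigr)=\E\max(k,F\eta)=\E\max(k,F\eta,0)$, using~(\ref{eq:max}) inside the expectation. When $k<0$, one checks both sides reduce to $\E(F\eta)_+=\E\max(0,F\eta)$: on the left this is immediate since $k<0$ drops out of the three-way max; on the right, $k+\E\max(F\eta-k,0)=k+\E\max(F\eta,k)$ and a short case split on the sign of $F$ (or just on whether $F\eta$ exceeds $k$) shows the result, after which the outer $\max(0,\cdot)$ supplies the truncation at $0$. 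I expect this final pointwise check, with its mild case distinctions on the signs of $k$ and $F$, to be the only place requiring care; everything else is a mechanical application of the support-function calculus. Once the support functions coincide everywhere, the uniqueness of the convex body with a given support function (\cite[Th.~1.7.1]{schn}) finishes the proof.
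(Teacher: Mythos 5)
Your proposal is correct and follows essentially the same route as the paper: compute the support function of $\conv\{(0,0)\cup(Z'_\eta+(1,0))\}$ via the convex-hull/translation/reflection rules to get $\max\bigl(0,\E(u_1\eta-u_0)_++u_0\bigr)$, then verify by a sign case analysis that this equals $\E\max(0,u_0,u_1\eta)=h_{M_\eta}(u_0,u_1)$. The paper merely leaves the final sign check as "easy to see", which you carry out explicitly and correctly.
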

\begin{proof}
  For $(u_0,u_1)\in\R^2$ the support function of the set in the
  right-hand side is given by
  \begin{align*}
    \max(0,h_{Z'_\eta}(u_0,u_1)+u_0)&=\max(0,h_{Z_\eta}(-u_0,u_1)+u_0)\\
    &=\max(0,\E(u_1\eta-u_0)_++u_0)\,.
  \end{align*}
  By checking all possible signs of $u_0$ and $u_1$ it is easy to see
  that this support function equals
  $\E\max(0,u_0,u_1\eta)=h_{M_\eta}(u_0,u_1)$.
\end{proof}

\begin{figure}[htbp]
  \begin{center}
   \includegraphics[height=6.5cm]{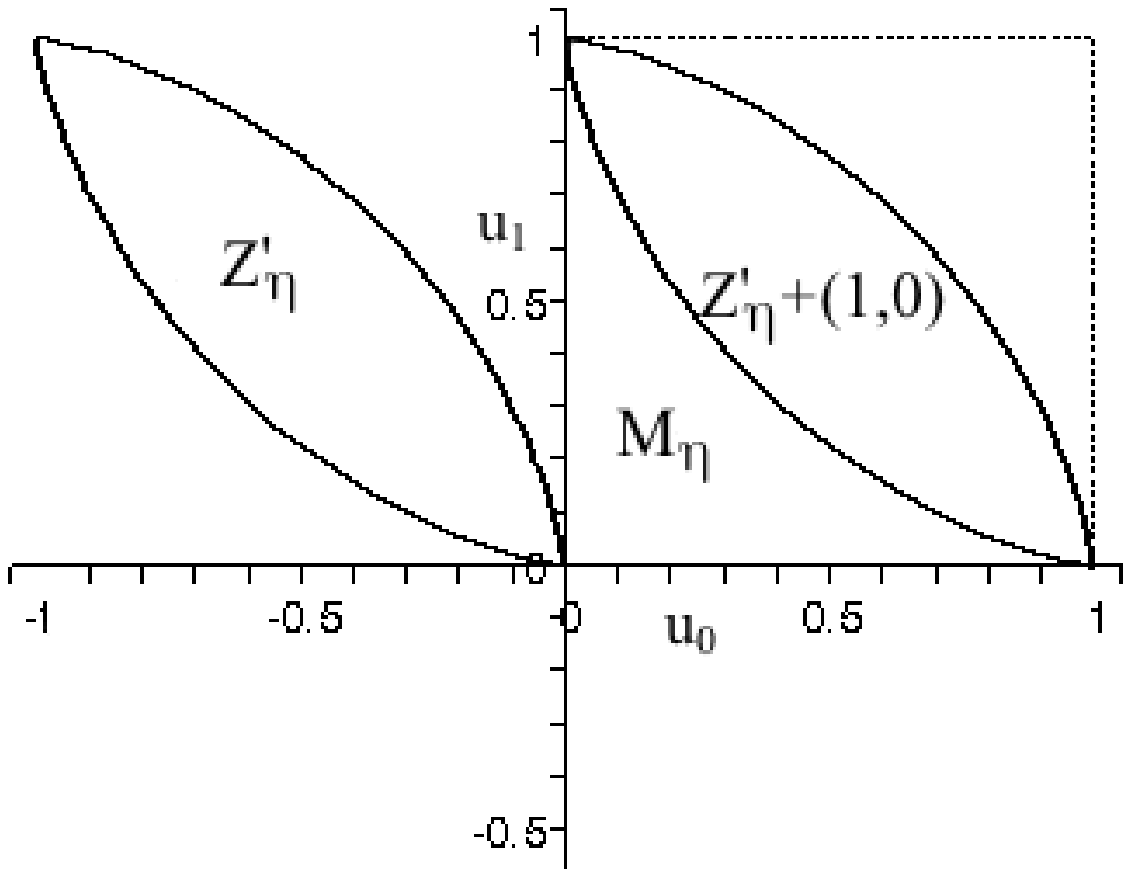}
   \includegraphics[height=6.5cm]{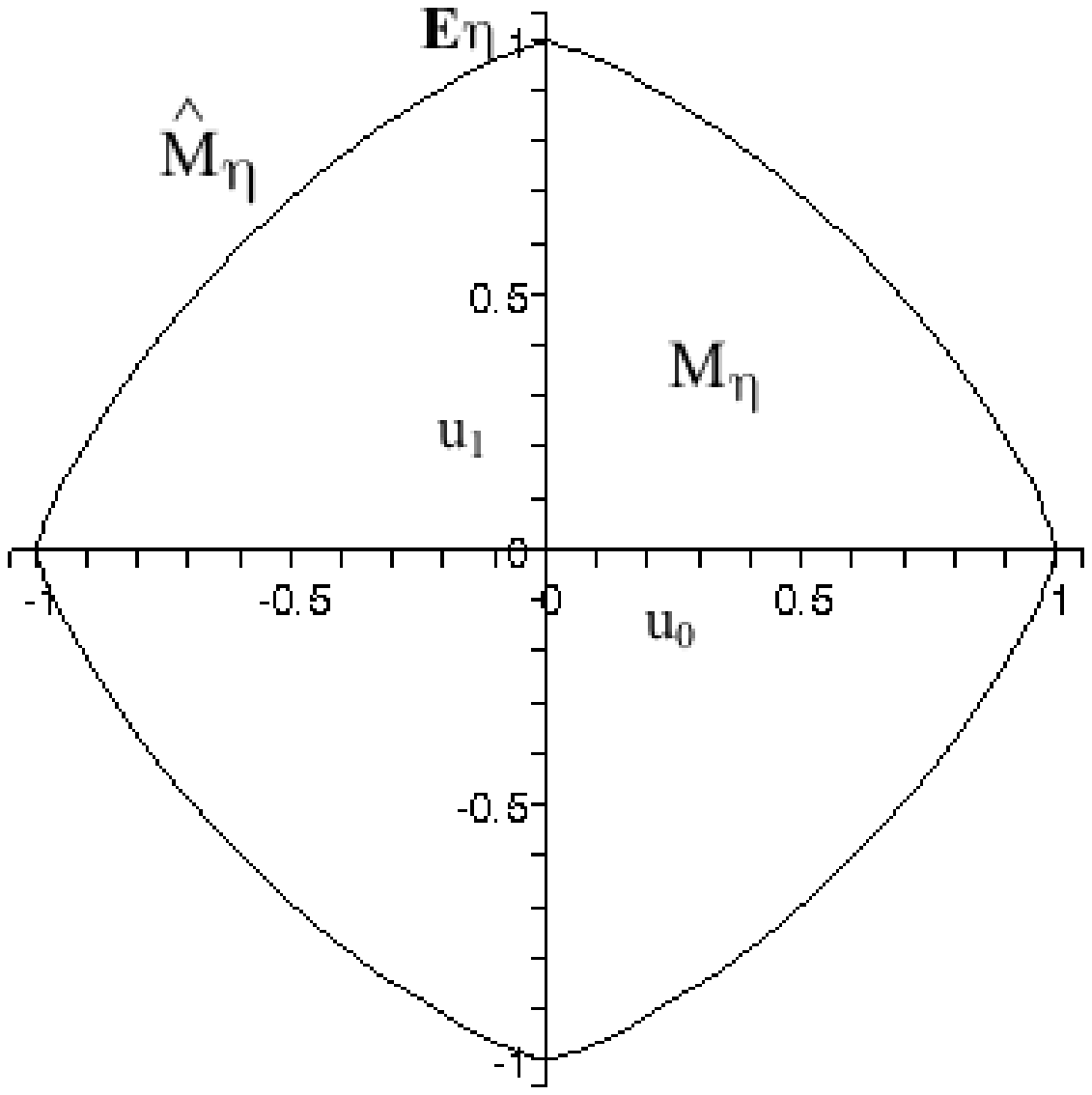}
  \end{center}
  \caption{Relation between $M_\eta$ and $Z_\eta$ as well as the body $\hat
    M_\eta$ for log-normal $\eta$ with
    mean one and volatility $\sigma=0.5$ calculated for $T=1$.
    \label{fi:Z-K}}
\end{figure}

Lemma~\ref{le:k-and-z} implies that the duality transform from
Section~\ref{sec:duality} amounts to the symmetry of $M_\eta$ with
respect to the line bisecting the first quadrant. Furthermore,
$\eta$ is self-dual if and only if $M_\eta$ is symmetric with
respect to this line, i.e.
\begin{equation}
  \label{eq:bs-sym}
  \E \max(F\eta,k)=\E\max(F,k\eta)\,, \quad k,F\geq 0\,.
\end{equation}
Indeed, since
\begin{align*}
  h_{Z_\eta}(-u_0,u_1)&=\E\max(u_1\eta,u_0)-u_0\,,\\
  h_{Z_\eta}(u_1,-u_0)&=\E\max(u_1,u_0\eta)-u_0\,,
\end{align*}
for $u_0,u_1\geq0$, the symmetry property
(\ref{eq:tilde-z_eta=z_eta-}) is equivalent to~(\ref{eq:bs-sym}).

The Euclidean space $\R^2$ can be equipped with various norms.
Each norm on $\R^2$ can be described as the support function of a
centrally symmetric (with respect to the origin) convex body that
contains the origin in its interior. Although $M_\eta$ in
Lemma~\ref{eq:sup-rep} is a subset of $\R_+^2$ and so does not
contain the origin in its interior, it is possible to use
$h_{M_\eta}$ to define the \emph{norm} on the whole plane as
\begin{displaymath}
  \|x\|_\eta=h_{M_\eta}(|x|)=h_{\hat M_\eta}(x)\,,
\end{displaymath}
where $|x|$ is the vector composed of the absolute values of the
components of $x\in\R^2$ and $\hat M_\eta$ is obtained as the
union of symmetrical transforms of $M_\eta$ with respect to the
coordinate lines, see Figure~\ref{fi:Z-K}. For instance in the
martingale setting the call price satisfies
\begin{displaymath}
  e^{rT}c(k,F)=\|x\|_\eta-k\,,\quad x=(k,F)\in\R^2_+\,.
\end{displaymath}
Conversely, each norm on $\R_+^2$ determines uniquely the
distribution of an integrable non-negative random variable.

Note that $\eta$ is self-dual if and only if the norm
$\|\cdot\|_\eta$ is symmetric, i.e.\
$\|(u_0,u_1)\|_\eta=\|(u_1,u_0)\|_\eta$ for all
$(u_0,u_1)\in\R^2_+$.

\begin{example}
  \label{eg:e-norm}
  Consider the $\ell_p$-norm on $\R^2$, which is clearly symmetric.
  Evaluating the $\ell_p$-norm of $(t,1)$, we arrive at
  \begin{displaymath}
    (t^p+1)^{1/p}=\E \max(t,\eta)
    =t\P(\eta\leq t)+\int_t^\infty xp_\eta(x)\,dx\,, \quad t>0\,,
  \end{displaymath}
  assuming that $\eta$ is absolutely continuous with density $p_\eta$.
  Differentiating with respect to $t$ yields that
  \begin{displaymath}
    \P(\eta\leq t)=t^{p-1}(t^p+1)^{1/p-1}\,.
  \end{displaymath}
  Thus, $\eta$ has the density
  \begin{displaymath}
    p_\eta(t)=(p-1)t^{p-2}(t^p+1)^{1/p-2}\,,\quad t>0\,,
  \end{displaymath}
  which is shown to imply the self-duality of $\eta$, see
  Corollary~\ref{cor:univ}(a).
\end{example}

We conclude this section by stating a relation between norms and
\emph{extreme values}. It is known~\cite{falk06,mo08e} that each norm
$\|\cdot\|$ on $\R_+^2$ corresponds to a bivariate max-stable random
vector $(\xi_1,\xi_2)$ with unit Fr\'echet marginals, i.e.
\begin{displaymath}
  \P(\xi_1\leq u_1^{-1},\;\xi_2\leq u_2^{-1})
  =\exp\{-\|(u_1,u_2)\|\}\,,\quad (u_1,u_2)\in\R_+^2\,.
\end{displaymath}
An important norm on $\R^2$ related to the Black-Scholes formula and
the theory of extreme values is mentioned in Example~\ref{eg:B-S}.

\section{Multivariate symmetry}
\label{sec:mult-symm}

\subsection{Characterisation of distributions with symmetry properties}
\label{sec:char-distr-with}

It has been shown in Section~\ref{sec:symm-prop-lift} that geometry of
the lift zonoid for a single asset price has a fundamental financial
importance.  The lifting operation amounts to adding an extra
coordinate to the asset price or prices and so increases the dimension
by one. For instance, the lift zonoid associated with the integrable
price of a single asset is a subset of the plane, which is always
centrally symmetric convex and compact.  It is well known \cite{schn}
that all centrally symmetric planar compact convex sets are zonoids,
while this is not the case in dimension 3 and more. This fact already
suggests an important dimensional effect that appears when dealing
with more than one asset.

For the multiasset case the direct relationship between lift zonoids
and lift max-zonoids is also lost. Indeed, the maximum of two numbers
can be related to the stop-loss transform as $\max(a,b)=(a-b)_++b$,
while this no longer holds for the maximum of three numbers. The
family of multivariate symmetries is also considerably richer than in
the planar case.

Let $S_T=(S_{T1},\dots,S_{Tn})$ and integrable
$\eta=(\eta_1,\dots,\eta_n)$ be as defined in~(\ref{eq:det-eta}).
Assume that all coordinates of $\eta$ are positive, i.e.\
$\eta\in\EE^n=(0,\infty)^n$ and $\eta=e^\xi$ for a random vector
$\xi=(\xi_1,\dots,\xi_n)$, where the exponential function is applied
coordinatewisely. For simplicity of notation, we do not write time
$T$ as a subscript of $\eta$ and incorporate the forward prices
$F_j$, $j=1,\dots,n$, into payoff functions, i.e.\ payoffs will be
real-valued functions of $\eta$.

In the sequel two specific payoff functions are of particular
importance, namely
\begin{displaymath}
  f_b(u_0,u_1,\dots,u_n)
  =\Big(\sum_{l=1}^n u_l\eta_l+u_0\Big)_+,\quad
   u_0, u_1,\dots,u_n\in\R\,,
\end{displaymath}
for a European basket option and
\begin{align*}
  f_m(u_0,u_1,\dots,u_n)
  =u_0\vee\bigvee_{l=1}^nu_l\eta_l\,,\qquad
  u_0, u_1,\dots,u_n\geq 0\,,
\end{align*}
for a European derivative on the maximum of $n$ weighted risky
assets together with a riskless bond, where $\vee$ denotes the
maximum operation. Despite the fact that the payoff functions $f_b$
and $f_m$ depend on $\eta$, we stress their dependence on the
coefficients, since it is crucial for symmetry properties.

By~(\ref{eq:max}), call and put options on the maximum of several
assets can be written by means of the payoff function $f_m$, e.g.
\begin{displaymath}
  \Big(\bigvee_{l=1}^nu_l\eta_l - k\Big)_+=f_m(k,u_1,\dots,u_n)-k\,,
  \quad k\geq 0\,.
\end{displaymath}
In view of Theorem~\ref{thr:uniq-lmz}, prices of these options
uniquely characterise the distribution of an integrable random
vector $\eta\in\R^n_+$.

If $X$ is the segment that joins the origin in $\R^{n+1}$ and
$(1,\eta)$, then $f_b$ becomes the support function of $X$, so that
the expected payoff is the support function
$h_{Z_\eta}(u_0,u_1,\dots,u_n)$ of the lift zonoid $Z_\eta$, i.e.
\begin{displaymath}
  \E f_b(u_0,u_1,\dots,u_n)= h_{Z_\eta}(u_0,u_1,\dots,u_n)\,.
\end{displaymath}
Similarly, for $u_0,u_1,\dots,u_n\geq 0$, the expectation of $f_m$
becomes the support function of the lift max-zonoid $M_\eta$.

Fix an arbitrary asset number $i\in\{1,\dots,n\}$ and assume that $\Q$
is a probability measure that makes $\eta$ integrable. Recall that
$\E$ without subscript denotes the expectation with respect to $\Q$,
otherwise the subscript is used to indicate the relevant probability
measure.  Since
$\eta^{1/2}=(\eta_1^{1/2},\dots,\eta_n^{1/2})=e^{\thf\xi}$ is
integrable, we can define new probability measures $\Q^i$ and $\sE^i$
by
\begin{displaymath}
  \frac{d\Q^i}{d\Q}=\frac{\eta_i}{\EQ\eta_i}\,,\qquad
  \frac{d\sE^i}{d\Q}=\frac{e^{\thf\xi_i}}{\EQ
  e^{\thf\xi_i}}\,.
\end{displaymath}
Hence, $\sE^i$ is the \emph{Esscher} (exponential) transform of $\Q$
with parameter $\thf e_i$, where $e_i$ is the $i$th standard basis
vector in $\R^n$, see~\cite{sat90} and~\cite[Ex.~7.3]{sat00} for the
Esscher transform in the context of multivariate L\'evy processes.
Since $\E_{\sE^i}e^{-\thf\xi_i}=(\EQ e^{\thf\xi_i})^{-1}$, we see
that $\Q$ is the Esscher transform of $\sE^i$ with parameter $-\thf
e_i$.

For simplicity of notation, define families of functions
$\kappa_i:\EE^n\mapsto\EE^n$ and linear mappings $K_i:\R^n\mapsto\R^n$
acting as
\begin{align}
  \kappa_i(x)&=\left(\frac{x_1}{x_i},\dots,\frac{x_{i-1}}{x_i},
    \frac{1}{x_i},\frac{x_{i+1}}{x_i},\dots,\frac{x_n}{x_i}\right)\,,\nonumber \\
  \label{eq:def-Ki}
  K_i x&=(x_1-x_i,\dots,x_{i-1}-x_i,-x_i,x_{i+1}-x_i,
  \dots,x_n-x_i)\,
\end{align}
for $i=1,\dots,n$. The linear mapping $K_i$ can be represented by
the matrix $K_i=(k_{lm})_{lm=1}^n$ with $k_{ll}=1$ for all $l\neq
i$, $k_{li}=-1$ for $l=1,\dots,n$ with all remaining entries being
$0$. Note that $\kappa_i$ and $K_i$ are self-inverse, i.e.\
$\kappa_i(\kappa_i(x))=x$ and $K_iK_ix=x$, and that the $i$th
coordinate of $K_ix$ is $-x_i$. The transpose of $K_i$ is denoted
by $K_i^\top$. In the following we consider vectors as rows or
columns depending on the situation.

The permutation of the zero-coordinate with the $i$th coordinate of
a vector $(u_0,u)\in\R^{n+1}$ is denoted by
\begin{displaymath}
  \pi_i(u_0,u)=(u_i,u_1,\dots,u_{i-1},u_0,u_{i+1},\dots,u_n)\quad
   \text{ for } i=1,\dots,n\,.
\end{displaymath}
If $B\subset\R^{n+1}$, then $\pi_i(B)$ is the reflection of $B$ at
the hyperplane $\{(u_0,u)\in\R^{n+1} : u_i=u_0\}$.

Finally, $\varphi_\xi^\Q$ (resp. $\varphi_\xi^{\sE^i}$) denotes
the characteristic function of the random vector $\xi$ under the
probability measure $\Q$ (resp. $\sE^i$).

Univariate versions of the statements (i),~(iii),~(vi), and (vii)
of the following theorem are already known from~\cite[Th.~2.2,
Cor.~2.5]{car:lee08}.

\begin{theorem}
  \label{th:mult-eq}
  Let $\eta=e^\xi$ be an $n$-dimensional $\Q$-integrable random vector
  with positive components and let $i$ be a fixed number from
  $\{1,\dots,n\}$.  The following conditions are equivalent.
  \begin{itemize}
  \item[(i)] \label{ae:basket}
    For all $u_0\in\R$ and $u\in\R^n$,
    \begin{displaymath}
      \EQ f_b(u_0,u) = \EQ f_b(\pi_i(u_0,u))\,.
    \end{displaymath}
  \item[(ii)] For all $u_0\geq 0$ and $u\in\R_+^n$,
    \begin{displaymath}
      \EQ f_m(u_0,u) = \EQ f_m(\pi_i(u_0,u))\,.
    \end{displaymath}
  \item[(iii)] For any payoff function $f:\EE^n\mapsto\R$ such that
    $\EQ |f(\eta)|<\infty$ we have
    \begin{displaymath}
      \EQ f(\eta) =\EQ[f(\kappa_i(\eta))\eta_i]\,.
    \end{displaymath}
  \item[(iv)] The lift zonoid $Z_\eta$ of $\eta$ satisfies
    $\pi_i(Z_\eta)=Z_\eta$.
  \item[(v)] The lift max-zonoid $M_\eta$ of $\eta$ satisfies
    $\pi_i(M_\eta)=M_\eta$.
  \item[(vi)] The distribution of $\eta$ under $\Q$ is identical to
    the distribution of $\tilde\eta^i=\kappa_i(\eta)$ under $\Q^i$.
  \item[(vii)] The distributions of $\xi$ and $K_i\xi$ under $\sE^i$
    coincide.
  \item[(viii)] For every $u\in\R^n$,
    \begin{displaymath}
      \varphi_\xi^{\sE^i}(u)=\varphi_\xi^{\sE^i}(K_i^\top u)
    \end{displaymath}
    or, equivalently,
    \begin{displaymath}
      \varphi_\xi^\Q\Big(u-\thf \imagi e_i\Big)
      =\varphi_\xi^\Q\Big(K_i^\top u-\thf \imagi e_i\Big)\,,
    \end{displaymath}
    where $\imagi=\sqrt{-1}$ is the imaginary unit and $e_i$ is the
    $i$th standard basis vector in $\R^n$.
  \end{itemize}
\end{theorem}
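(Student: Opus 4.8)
The plan is to establish the equivalences by a cyclic chain of implications rather than proving each pair separately, exploiting the fact that lift zonoids (Theorem~\ref{thr:uniq-lmz} and the cited \cite[Th.~2.21]{mos02}) and lift max-zonoids (Theorem~\ref{thr:uniq-lmz}) each determine the distribution of $\eta$ uniquely. First I would observe that (i) is literally the statement that the support function $h_{Z_\eta}$ is invariant under the coordinate-swap $\pi_i$, since $\E f_b(u_0,u)=h_{Z_\eta}(u_0,u)$ and $\pi_i$ acts on $\R^{n+1}$ by permuting the $0$th and $i$th coordinates; because a convex body is determined by its support function, this gives (i)$\Leftrightarrow$(iv) immediately. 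The same reasoning with $f_m$ and $h_{M_\eta}$ on the positive orthant gives (ii)$\Leftrightarrow$(v). To connect the zonoid and max-zonoid pictures I would note that $\pi_i(Z_\eta)=Z_{\pi_i\text{-reflected }\eta}$ corresponds, via the construction of $M_\eta$ from the random crosspolytope, to the same reflection of $M_\eta$; more cleanly, since both $Z_\eta$ and $M_\eta$ determine the law of $\eta$, and the reflection $\pi_i$ is an isometry sending the lift zonoid of $\eta$ to the lift zonoid of $\tilde\eta^i=\kappa_i(\eta)$ taken under a reweighted measure, it suffices to identify that reweighted measure, which is exactly $\Q^i$.

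The conceptual heart is the change-of-measure computation linking (iii), (vi) and (iv)/(v). For (iv)$\Rightarrow$(iii): the identity $\pi_i(Z_\eta)=Z_\eta$ means $h_{Z_\eta}(\pi_i(u_0,u))=h_{Z_\eta}(u_0,u)$, i.e. $\EQ(u_i+\sum_{l\neq i}u_l\eta_l+u_0\eta_i)_+ = \EQ(u_0+\sum_l u_l\eta_l)_+$ after relabelling; dividing through formally by $\eta_i$ inside the first expectation and substituting $d\Q^i/d\Q=\eta_i/\EQ\eta_i$ (note $\EQ\eta_i=1$ will follow, as in the univariate self-dual case) converts the left side into $\EQ[\eta_i\,(u_0+\sum_l u_l(\kappa_i\eta)_l)_+]$. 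This is exactly (iii) for the family of payoffs $f=f_b(u_0,\cdot)$; since such affine-positive-part functions are dense enough — their linear combinations approximate arbitrary integrable payoffs, or one invokes that vanilla basket prices determine the distribution as stated in Section~\ref{sec:norms-stop-loss} — (iii) for all integrable $f$ follows. Conversely (iii)$\Rightarrow$(iv) by specialising $f$ to $f_b$. The statement (vi) is then just a restatement of (iii): $\EQ[f(\kappa_i\eta)\eta_i]=\EQ f(\eta)$ for all bounded $f$ says precisely that $\kappa_i(\eta)$ under $\Q^i$ has the same law as $\eta$ under $\Q$ (using $\EQ\eta_i=1$).

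The last link is from the measure statement (vi) to the Esscher/characteristic-function statements (vii) and (viii). Here I would pass to logarithms: $\eta=e^\xi$, $\kappa_i(\eta)=e^{K_i\xi}$ directly from the definitions~(\ref{eq:def-Ki}), and $d\Q^i/d\Q=\eta_i/\EQ\eta_i=e^{\xi_i}/\EQ e^{\xi_i}$. The point is that $\Q^i$ and $\sE^i$ differ only by an extra half-power: $\sE^i$ is the Esscher transform with parameter $\thf e_i$, and one checks $d\Q^i/d\sE^i=e^{\thf\xi_i}/\E_{\sE^i}e^{\thf\xi_i}$. The symmetry (vi) — law of $e^{K_i\xi}$ under $\Q^i$ equals law of $e^\xi$ under $\Q$ — is equivalent, after this reweighting, to: law of $K_i\xi$ under $\sE^i$ equals law of $\xi$ under $\sE^i$, because $K_i$ exchanges the roles of the $\thf e_i$-tilt and the $-\thf e_i$-tilt (recall the $i$th coordinate of $K_i\xi$ is $-\xi_i$), so the two half-Esscher factors cancel symmetrically. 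That is (vii). Finally (vii)$\Leftrightarrow$(viii) is the standard fact that two laws coincide iff their characteristic functions coincide, together with the computation $\E_{\sE^i}e^{\imagi\langle u,K_i\xi\rangle}=\E_{\sE^i}e^{\imagi\langle K_i^\top u,\xi\rangle}$; the "equivalently" reformulation in terms of $\varphi_\xi^\Q$ follows by writing $\varphi_\xi^{\sE^i}(u)=\varphi_\xi^\Q(u-\thf\imagi e_i)/\varphi_\xi^\Q(-\thf\imagi e_i)$ and noting $K_i^\top(-\thf\imagi e_i)$ combines with the tilt appropriately, the normalising constants cancelling.

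The main obstacle I expect is making the informal "divide by $\eta_i$" step in the (iv)$\Leftrightarrow$(iii) passage fully rigorous on all of $\R^{n+1}$ rather than just where the sign bookkeeping is transparent — one must check the cases where $u_0$ or some $u_l$ are negative, and confirm that the density $\eta_i/\EQ\eta_i$ is well-defined, which forces $\EQ\eta_i<\infty$ (integrability, given) and, a posteriori, $\EQ\eta_i=1$ from the symmetry itself (apply (iii) to $f\equiv 1$, or compare the two sides of (i) at a suitable $(u_0,u)$). Everything else is either a routine support-function identification or a standard change-of-variables in the characteristic function; the Esscher-transform bookkeeping in (vi)$\Rightarrow$(vii)$\Rightarrow$(viii) is mechanical once the half-power normalisation is set up correctly.
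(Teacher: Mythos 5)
Your plan is correct and follows essentially the same route as the paper: the support-function identification gives (i)$\Leftrightarrow$(iv) and (ii)$\Leftrightarrow$(v), the reflection identity $\pi_i(Z_\eta)=Z_{\tilde\eta^i}$ under $\Q^i$ (the paper's Lemma~\ref{le:dual-bodies}) plus the uniqueness theorems link these to (vi) and hence (iii), and the half-power Esscher bookkeeping with $e^{\thf\xi_i}=e^{-\thf(K_i\xi)_i}$ yields (vii) and (viii) exactly as in the paper's proof. The only caution is in the passage from (iii)-for-basket-payoffs to (iii)-for-all-integrable-payoffs: of your two suggested justifications, the ``linear combinations approximate arbitrary integrable payoffs'' route is shaky in $L^1$, so you should commit to the second one (lift zonoids determine distributions, hence the distributional identity (vi), hence (iii) by change of measure), which is what the paper does.
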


Since (vi) corresponds to the duality transform in the univariate
setting (see Section~\ref{sec:duality}), we say that $\eta$ satisfying
one of the above conditions is \emph{self-dual with respect to the
  $i$th numeraire} and write shortly $\eta\in\SD_i$. If $\eta$ is
self-dual with respect to all numeraires $i=1,\dots,n$, we call $\eta$
\emph{jointly self-dual}.

\begin{remark}[Relaxing of (i) and (ii)]
  \label{rem:reduce-c}
  In view of the positive homogeneity of payoff functions $f_b$ and
  $f_m$ it suffices to impose (i) and (ii) for parameter
  vectors $(u_0,u)$ with norm one in $\R^{n+1}$, or, in~(i), for any
  fixed $u_0\neq 0$ and any $u\in\R^n$.
  Condition (ii) can be assumed only for any fixed $u_0>0$ and $u$ with strictly
  positive coordinates, i.e.\ for $u\in\EE^n$.
\end{remark}

\begin{remark}[Martingale property]
  \label{rem:m-p}
  If $\eta\in\SD_i$, then (iii) applied to $f$ identically equal one
  (or symmetry conditions (iv), (v), (vi) for lift (max-) zonoids) imply
  that $\EQ\eta_i=1$, i.e.\ $\Q$ is a martingale measure for the $i$th
  component of $\eta$ in the one-period setting. However, $\Q$ does not need
  to be a martingale measure for other components of $\eta$, quite differently
  from the univariate case \cite{car:lee08}. The martingale property for all
  components is ensured by requiring that $\eta$ is jointly self-dual. Otherwise
  it has to be imposed additionally, if needed.
\end{remark}

\begin{remark}
  \label{rem:sd-forward}
  If the forward prices are not included in the payoff function, then
  condition (iii) for the self-duality of $\eta$ with respect to the
  $i$th numeraire can be equivalently expressed as
  \begin{multline*}
    \EQ f(S_{T1},\dots,S_{Tn})=\EQ f(F\circ\eta)=\EQ[f(F\circ\kappa_i(\eta))\eta_i]\\
    =\EQ
    \Big[f\big(\frac{S_{T1}F_i}{S_{Ti}},\dots,\frac{S_{T(i-1)}F_i}{S_{Ti}},
    \frac{(F_i)^2}{S_{Ti}},\frac{S_{T(i+1)}F_i}{S_{Ti}},\dots,\frac{S_{Tn}F_i}{S_{Ti}}\big)
    \frac{S_{Ti}}{F_i}\Big]\,,
  \end{multline*}
  by applying (iii) to $\tilde{f}(\eta)=f(F\circ\eta)$.
\end{remark}

\begin{remark}[Conditioning in Theorem~\ref{th:mult-eq}]
  \label{rem:tau-T-mult}
  All conditions of Theorem~\ref{th:mult-eq} can be written
  conditionally on a fixed event or conditionally on a
  $\sigma$-algebra.  This has the following application for stochastic
  processes.  Consider a family $\{\eta(t),\,t\geq0\}$ of random
  vectors being self-dual with respect to the $i$th numeraire. If
  $\tau$ is a non-negative random variable, which is independent of
  $\{\eta(t), \,t\geq0\}$ then $\eta(\tau)$ satisfies all statements
  in Theorem~\ref{th:mult-eq} with the expectations taken
  conditionally on the $\sigma$-algebra generated by $\tau$,
  i.e.\ $\eta(\tau)$ is conditionally self-dual with respect to the
  $i$th numeraire.
\end{remark}

To prove Theorem~\ref{th:mult-eq} we need the following multivariate
extension of the duality principle at maturity by reflection, see
Lemma~\ref{le:Z-of-dual}. The \emph{dual} lift zonoid
$Z_{\tilde\eta^i}$ (resp. lift max-zonoid $M_{\tilde\eta^i}$) with
respect to the $i$th numeraire $\eta_i$ is defined for the random
vector
$\tilde\eta^i=(\tilde\eta_1^i,\dots,\tilde\eta_n^i)=\kappa_i(\eta)$
and the expectation with respect to $\Q^i$.

\begin{lemma}
  \label{le:dual-bodies}
  Let $\eta=(\eta_1,\dots,\eta_n)$ be an integrable random
  vector with $\E\eta_i=1$ for a fixed $i\in\{1,\dots,n\}$.
  Then 
  \begin{displaymath}
    Z_{\tilde\eta^i}=\pi_i(Z_\eta)
    \quad \text{and}\quad M_{\tilde\eta^i}=\pi_i(M_\eta)\,.
  \end{displaymath}
\end{lemma}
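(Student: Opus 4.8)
The plan is to proceed exactly as in the univariate case (Lemma~\ref{le:Z-of-dual}), computing support functions and using the change of measure from $\Q$ to $\Q^i$. Recall that $h_{Z_\eta}(u_0,u)=\EQ f_b(u_0,u)=\EQ\big(u_0+\sum_{l=1}^n u_l\eta_l\big)_+$ and, for nonnegative arguments, $h_{M_\eta}(u_0,u)=\EQ f_m(u_0,u)=\EQ\big(u_0\vee\bigvee_{l=1}^n u_l\eta_l\big)$. Since $\E\eta_i=1$, the density $d\Q^i/d\Q=\eta_i$ is well defined, and for any integrable $g$ we have $\E_{\Q^i}g(\tilde\eta^i)=\EQ\big[g(\kappa_i(\eta))\,\eta_i\big]$.

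First I would treat the lift zonoid. For $(u_0,u)\in\R^{n+1}$,
\begin{align*}
  h_{Z_{\tilde\eta^i}}(u_0,u)
  &=\E_{\Q^i}\Big(u_0+\sum_{l=1}^n u_l\tilde\eta^i_l\Big)_+
   =\EQ\Big[\Big(u_0+\sum_{l\neq i}u_l\frac{\eta_l}{\eta_i}+u_i\frac{1}{\eta_i}\Big)_+\eta_i\Big]\\
  &=\EQ\Big(u_0\eta_i+\sum_{l\neq i}u_l\eta_l+u_i\Big)_+
   =\EQ f_b\big(u_i,u_1,\dots,u_{i-1},u_0,u_{i+1},\dots,u_n\big)\\
  &=h_{Z_\eta}\big(\pi_i(u_0,u)\big)=h_{\pi_i(Z_\eta)}(u_0,u)\,,
\end{align*}
where the last equality holds because $\pi_i$ swaps the $0$th and $i$th coordinates, so $h_{\pi_i(B)}(v)=h_B(\pi_i v)$ for any compact convex $B$. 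Since a convex body is determined by its support function, $Z_{\tilde\eta^i}=\pi_i(Z_\eta)$.

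For the lift max-zonoid the computation is the same, using that $\pi_i$ permutes coordinates so it commutes with $\max$: for $u_0\geq0$, $u\in\R_+^n$,
\begin{align*}
  h_{M_{\tilde\eta^i}}(u_0,u)
  &=\E_{\Q^i}\Big(u_0\vee\bigvee_{l=1}^n u_l\tilde\eta^i_l\Big)
   =\EQ\Big[\Big(u_0\vee\bigvee_{l\neq i}u_l\frac{\eta_l}{\eta_i}\vee u_i\frac{1}{\eta_i}\Big)\eta_i\Big]\\
  &=\EQ\Big(u_0\eta_i\vee\bigvee_{l\neq i}u_l\eta_l\vee u_i\Big)
   =\EQ f_m\big(\pi_i(u_0,u)\big)=h_{M_\eta}\big(\pi_i(u_0,u)\big)\,,
\end{align*}
which again gives $M_{\tilde\eta^i}=\pi_i(M_\eta)$ on the positive cone, hence everywhere by the support-function characterisation. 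The only mild subtlety — the main point to get right — is the bookkeeping of which coordinate plays the role of $u_0$ after multiplying through by $\eta_i$: the factor $\eta_i$ turns the constant term $u_0$ into the coefficient of $\eta_i$ and the coefficient $u_i/\eta_i$ of $\tilde\eta^i_i$ into the constant $u_i$, which is precisely the transposition $\pi_i$. Everything else is routine, and the integrability needed to move the expectation inside is guaranteed by integrability of $\eta$ together with $\E\eta_i=1$.
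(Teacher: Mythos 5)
Your proof is correct and follows essentially the same route as the paper: compute the support function of $Z_{\tilde\eta^i}$ under $\Q^i$, change measure back to $\Q$ via the density $\eta_i$, absorb the positive factor $\eta_i$ into the positively homogeneous payoff, and recognise the result as $h_{Z_\eta}(\pi_i(u_0,u))$. The paper states only that the max-zonoid case is ``similar,'' whereas you write it out explicitly; otherwise the arguments coincide.
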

\begin{proof}
  For $(u_0,u)\in\R^{n+1}$ we have
  \begin{align*}
    h_{Z_{\tilde\eta^i}}(u_0,u)
    &=\E_{\Q^i}\Big(\sum_{l=1}^nu_l\tilde\eta_l^i+u_0\Big)_+
    =\E_{\Q^i}\Big(\sum_{l=1,\; l\neq i}^n u_l\frac{\eta_l}{\eta_i}
      +\frac{u_i}{\eta_i}+u_0\Big)_+\\
    &=\E \left[\Big(\sum_{l=1,\; l\neq i}^n u_l\frac{\eta_l}{\eta_i}
        +\frac{u_i}{\eta_i}+u_0\Big)_+\eta_i\right]\\
    &=\E \Big(\sum_{l=1,\; l\neq i}^n u_l\eta_l
      +u_i+u_0\eta_i\Big)_+\\
    &=h_{Z_\eta}(\pi_i(u_0,u))=h_{\pi_i(Z_\eta)}(u_0,u)\,.
  \end{align*}
  The proof for lift max-zonoids is similar.
\end{proof}

\begin{proof}[Proof of Th.~\ref{th:mult-eq}]
  We will establish all equivalences in several steps.\smallskip

  \noindent
  (i)$\Rightarrow$(iv)$\Rightarrow$(vi) The definition of the lift
  zonoid, (i) (also implying $\E\eta_i=1$), and Lemma~\ref{le:dual-bodies}
  imply that for all $(u_0,u)\in\R^{n+1}$
  \begin{align*}
    h_{Z_\eta}(u_0,u)&=\EQ f_b(u_0,u)
    =\EQ f_b(\pi_i(u_0,u))\\
    &=h_{Z_\eta}(\pi_i(u_0,u))
    =h_{\pi_i(Z_\eta)}(u_0,u)
    =h_{Z_{\tilde\eta^i}}(u_0,u)\,,
  \end{align*}
  so that $Z_\eta$ and $Z_{\tilde\eta^i}$ coincide (see (iv)) as
  having identical support functions. Since the lift zonoid uniquely
  determines the distribution of an integrable random vector, this
  implies (vi). \smallskip

  \noindent
  (vi)$\Rightarrow$(iii)$\Rightarrow$(i)
  The definition of $\Q^i$, the self-inverse property of $\kappa_i$
  and (vi) yield that
  \begin{displaymath}
    \E_\Q f(\eta)=\E_{\Q^i}[f(\eta)\eta_i^{-1}]
    =\E_{\Q^i} [f(\kappa_i(\tilde\eta^i))\tilde\eta_i^i]
    =\E_\Q[f(\kappa_i(\eta))\eta_i]\,,
  \end{displaymath}
  so that (iii) holds. By applying~(iii) for the payoff function
  $f_b$ of a basket option we arrive at (i). \smallskip

  \noindent
  (iii)$\Rightarrow$(ii)$\Rightarrow$(v)$\Rightarrow$(vi)$\Rightarrow$(iii)
  By applying~(iii) to the payoff function $f_m$ we obtain (ii).
  Now the definition of the lift max-zonoid,
  (ii), and Lemma~\ref{le:dual-bodies}
  yield that
  \begin{displaymath}
    h_{M_\eta}(u_0,u)=h_{M_\eta}(\pi_i(u_0,u))=h_{\pi_i(M_\eta)}(u_0,u)
     =h_{M_{\tilde\eta^i}}(u_0,u)\,
  \end{displaymath}
  for every $(u_0,u)\in\EE^{n+1}$, and thus, for every $(u_0,u)\in\R^{n+1}$,
  i.e.\ (v) and~(vi) hold, since the lift max-zonoid uniquely
  identifies the distribution, by Theorem~\ref{thr:uniq-lmz}.
  It is already shown that (vi) implies
  (iii). \smallskip

  \noindent
  (iii)$\Leftrightarrow$(vii)$\Leftrightarrow$(viii) Since
  $m_i=\EQ(e^{\thf\xi_i})$ is finite,
  \begin{align*}
    \EQ f(\eta)&=\EQ f(e^\xi)=m_i\E_{\sE^i}[f(e^\xi)e^{-\thf\xi_i}]\,,\\
    \EQ[f(\kappa_i(\eta))\eta_i]
     &=\EQ [f(e^{K_i\xi})e^{\xi_i}]
     =m_i\E_{\sE^i}[f(e^{K_i\xi})e^{\thf\xi_i}]\,.
  \end{align*}
  Thus, (iii) yields that
  \begin{equation}
    \label{eq:e_se-thfx-e_se}
    \E_{\sE^i}[f(e^\xi)e^{-\thf\xi_i}]=
    \E_{\sE^i}[f(e^{K_i\xi})e^{\thf\xi_i}]
    =\E_{\sE^i}[f(e^{K_i\xi})e^{-\thf(K_i\xi)_i}]
  \end{equation}
  for any $\Q$-integrable payoff function $f$. Recall that the $i$th
  coordinate of $K_i\xi$ is $-\xi_i$. Choosing $f(x)=g(x)e^{\thf x_i}$,
  we see that the $\sE^i$-expectations of $g(e^\xi)$ and $g(e^{K_i\xi})$
  coincide for all continuous functions $g$ with bounded support,
  whence $\xi$ coincides in distribution with $K_i\xi$ under $\sE^i$.
  Conversely, if $\xi$ and $K_i\xi$ share the same distribution under
  $\sE^i$, then (\ref{eq:e_se-thfx-e_se}) holds and implies (iii).

  Furthermore, (vii) is equivalent to
  \begin{displaymath}
    \varphi_\xi^{\sE^i}(u)=\varphi_{K_i\xi}^{\sE^i}(u)
    =\varphi_\xi^{\sE^i}(K_i^\top u)
  \end{displaymath}
  for every $u\in\R^n$. Writing the characteristic functions as
  $\sE^i$-expectations and referring to the change of measure, the
  latter condition is equivalent to
  \begin{displaymath}
    \EQ\Big[e^{\imagi\langle u,\xi\rangle}\frac{e^{\thf\xi_i}}{\EQ
      e^{\thf\xi_i}}\Big]
    =\EQ\Big[e^{\imagi\langle K_i^\top
      u,\xi\rangle}\frac{e^{\thf\xi_i}}{\EQ e^{\thf\xi_i}}\Big]\,,
    \quad u\in\R^n\,,
  \end{displaymath}
  so that
  \begin{displaymath}
    \varphi_\xi^\Q\Big(u-\thf \imagi e_i\Big)
    =\varphi_\xi^\Q\Big(K_i^\top u-\thf \imagi e_i\Big)
  \end{displaymath}
  for all $u\in\R^n$.
\end{proof}

In view of Theorem~\ref{th:mult-eq}(iv,v), examples of random
vectors $\eta\in\SD_i$ can be derived by constructing lift (max-)
zonoids which are symmetric with respect to the hyperplane
$\{(u_0,\dots,u_n)\in\R^{n+1}:\; u_0=u_i\}$, see
Examples~\ref{eg:e-norm} and~\ref{ex:mult-unit-ball}. The following
result can be helpful for such constructions. Its univariate version
is stated in~\cite[Ex.~8]{schr99}.

\begin{theorem}
  \label{co:mult-dens}
  Consider an integrable random vector $\eta\in\EE^n$ with distribution
  $\Q$.
  \begin{itemize}
  \item[(a)] If $\eta$ is absolutely continuous with probability
    density $p_\eta$, then $\eta\in\SD_i$ if and only if
    \begin{equation}
      \label{eq:mult-eta-dens}
      p_\eta(x)=x_i^{-(n+2)}p_\eta(\kappa_i(x))
      \qquad \text{ for almost all } x\in\EE^n\,,
    \end{equation}
    equivalently, the density $p_\xi$ of $\xi=\log\eta$ satisfies
    \begin{equation}
      \label{eq:mult-xi-dens}
      p_\xi(x)=e^{-x_i}p_\xi(K_i x)\quad
      \text{ for almost all } x\in\R^n\,.
    \end{equation}
  \item[(b)] If $\eta$ is discrete, then $\eta\in\SD_i$ if and only if
    $\Q(\eta=\kappa_i(x))=x_i\Q(\eta=x)$ for each atom $x$ of $\eta$.
  \end{itemize}
\end{theorem}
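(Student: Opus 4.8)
The plan is to derive both (a) and (b) directly from the equivalence of $\eta\in\SD_i$ with condition (iii) of Theorem~\ref{th:mult-eq}, namely $\EQ f(\eta)=\EQ[f(\kappa_i(\eta))\eta_i]$, by testing this identity against suitable payoff functions and performing a change of variables. Since $\kappa_i$ is self-inverse, condition (iii) is symmetric in the two sides, so it suffices to characterise when the two expressions agree for all integrable $f$.

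For part (a), first I would write out (iii) as an integral against the density,
\begin{displaymath}
  \int_{\EE^n} f(x)\,p_\eta(x)\,dx
  =\int_{\EE^n} f(\kappa_i(x))\,x_i\,p_\eta(x)\,dx\,.
\end{displaymath}
In the right-hand integral I substitute $y=\kappa_i(x)$; since $\kappa_i$ is an involution on $\EE^n$, we have $x=\kappa_i(y)$ and the $i$th coordinate is $x_i=1/y_i$. The Jacobian of $\kappa_i$ is the main computational point: writing $\kappa_i(y)_l=y_l/y_i$ for $l\neq i$ and $\kappa_i(y)_i=1/y_i$, a direct determinant computation gives $|\det D\kappa_i(y)|=y_i^{-(n+1)}$. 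Hence the right-hand side becomes
\begin{displaymath}
  \int_{\EE^n} f(y)\,\frac{1}{y_i}\,p_\eta(\kappa_i(y))\,y_i^{-(n+1)}\,dy
  =\int_{\EE^n} f(y)\,y_i^{-(n+2)}\,p_\eta(\kappa_i(y))\,dy\,.
\end{displaymath}
Equating the two integrals for all integrable $f$ forces $p_\eta(y)=y_i^{-(n+2)}p_\eta(\kappa_i(y))$ for almost all $y$, which is~(\ref{eq:mult-eta-dens}); conversely this identity plainly implies (iii). For the equivalent statement in terms of $p_\xi$, I would change variables $\eta=e^\xi$ (so $p_\eta(e^x)=p_\xi(x)\prod_l e^{-x_l}$, using the coordinatewise exponential), note that $\kappa_i(e^x)=e^{K_i x}$ from the definitions of $\kappa_i$ and $K_i$, and observe that $(K_ix)_i=-x_i$ while $\sum_l(K_ix)_l=\sum_{l\neq i}(x_l-x_i)-x_i=\sum_l x_l -(n+1)x_i$; substituting these into~(\ref{eq:mult-eta-dens}) and cancelling the common factors yields precisely $p_\xi(x)=e^{-x_i}p_\xi(K_ix)$, which is~(\ref{eq:mult-xi-dens}).

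For part (b), the argument is the same in spirit but without Jacobians. If $\eta$ is discrete with atoms $\{x\}$, condition (iii) reads $\sum_x f(x)\Q(\eta=x)=\sum_x f(\kappa_i(x))\,x_i\,\Q(\eta=x)$ for all $f$ with $\sum_x|f(x)|\Q(\eta=x)<\infty$; re-indexing the right-hand sum by $y=\kappa_i(x)$ (again using that $\kappa_i$ is an involution, and that $x_i=1/y_i$) turns it into $\sum_y f(y)\,y_i^{-1}\,\Q(\eta=\kappa_i(y))$, and comparing coefficients of $f$ at each atom $y$ gives $\Q(\eta=y)=y_i^{-1}\Q(\eta=\kappa_i(y))$, i.e.\ $\Q(\eta=\kappa_i(y))=y_i\Q(\eta=y)$, as claimed; the converse is immediate by reversing the re-indexing.

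I do not expect a genuine obstacle here — the result is a transcription of Theorem~\ref{th:mult-eq}(iii) through a change of variables. The one step requiring care is the Jacobian computation for $\kappa_i$: one should verify $|\det D\kappa_i(x)|=x_i^{-(n+1)}$ cleanly, for instance by factoring $\kappa_i$ as the coordinatewise map $x\mapsto x/x_i$ followed by inverting the $i$th coordinate, or by expanding the determinant along the $i$th row. A secondary point to state carefully is that testing against "all integrable $f$'' (equivalently, all bounded continuous $f$ with bounded support, or all indicator functions) legitimately allows one to pass from equality of the two integrals/sums to the pointwise (a.e.) identity of the integrands; this is standard and I would simply invoke it.
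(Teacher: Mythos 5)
Your proposal is correct and follows essentially the same route as the paper: both parts are obtained by rewriting Theorem~\ref{th:mult-eq}(iii) as an integral (resp.\ sum), changing variables via the involution $\kappa_i$ to produce the factor $x_i^{-(n+2)}$ (resp.\ $x_i^{-1}$), and testing against indicator functions to pass to the pointwise identity. Your explicit Jacobian $|\det D\kappa_i(x)|=x_i^{-(n+1)}$ and the exponential change of variables leading to~(\ref{eq:mult-xi-dens}) are correct computations that the paper only sketches as the ``classical density transformation,'' and your converse in (b) by reversing the re-indexing is a harmless shortcut of the paper's verification via the support function of $Z_\eta$.
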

\begin{proof}
  (a) Condition (iii) of Theorem~\ref{th:mult-eq} can be written in
  the integral form as
  \begin{displaymath}
    \int_{\R_+^n}f(x)p_\eta(x)\,dx
    =\int_{\R_+^n}f(\kappa_i(y))y_ip_\eta(y)\,dy
    =\int_{\R_+^n}f(x)\frac{1}{x_i^{n+2}}p_\eta(\kappa_i(x))\,dx\,,
  \end{displaymath}
  where the last equality is obtained by changing variables
  $x=\kappa_i(y)$ and noticing that $\kappa_i(\kappa_i(x))=x$.

  Consider the function $f(x)=\one_{x\in [a_1,b_1]\times\dots\times
    [a_n,b_n]}$ for any parameters \linebreak
  $a_1,\dots,a_n,b_1,\dots,b_n\in\R$.  Differentiating both sides with
  respect to $b_1,\dots,b_n$ and by using the dominated convergence,
  we get~(\ref{eq:mult-eta-dens}) almost everywhere.  For the
  converse, write the right-hand side of~(iii) as integral, refer to
  (\ref{eq:mult-eta-dens}) and change variables.  The equivalence
  between~(\ref{eq:mult-eta-dens}) and~(\ref{eq:mult-xi-dens}) can be
  seen by the classical density transformation.

  \noindent
  (b) In the discrete case~(iii) can be written as
  \begin{equation}
    \label{eq:proof-mult-disc}
    \sum_{l}f(x^l)\Q(\eta=x^l)=\sum_{l}f(\kappa_i (x^l))x^l_i\Q(\eta=x^l)\,,
  \end{equation}
  where the sum stretches over all atoms $x^l$ with $x^l_i$ being the
  $i$th component of $x^l$. Since~(\ref{eq:proof-mult-disc}) also
  holds for $f(x^l)=\one_{x=x^l}$ for any fixed atom $x$, we obtain
  that $x^*=\kappa_i(x)$ is also an atom with
  \begin{displaymath}
    \Q(\eta=x)=x^*_i\Q(\eta=x^*)
    =\frac{1}{x_i}\Q(\eta=\kappa_i(x))\,.
  \end{displaymath}
  For the converse we have for any $(u_0,u)\in\R^{n+1}$
  \begin{align*}
    h_{Z_\eta}(u_0,u)
    &=\sum_{l}\Big(u_0+\sum_{m=1}^nu_mx^l_m\Big)_+\Q(\eta=x^l)\\
    &=\sum_{l}\Big(u_0+\sum_{m=1}^nu_mx^l_m\Big)_+\;
    \frac{1}{x^l_i}\Q(\eta=\kappa_i(x^l))\\
    &=\sum_{l^*}\Big(u_i+u_0x^{l^*}_i
    +\sum_{m=1,\; m\neq i}^nu_m x^{l^*}_m\Big)_+
    \Q(\eta=x^{l^*})\\
    &=h_{Z_\eta}(\pi_i(u_0,u))\,,
  \end{align*}
  where $x^{l^*}=\kappa_i(x^l)$, so that we obtain (i) of
  Theorem~\ref{th:mult-eq}.
\end{proof}

Now we give a result about the marginal distribution of $\eta_i$ for
the random vector $\eta$ being self-dual with respect to this
numeraire.

\begin{lemma}
  \label{le:self-dual-marg}
  If $\eta\in\SD_i$, then $\eta_i$ is a self-dual random variable.
\end{lemma}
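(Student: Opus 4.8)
The plan is to reduce the multivariate self-duality of $\eta$ with respect to the $i$th numeraire to the univariate self-duality of the single coordinate $\eta_i$ by applying the characterisation in Theorem~\ref{th:mult-eq}(iii) to payoff functions that depend only on the $i$th coordinate. Concretely, I would take an arbitrary integrable payoff $g:\EE\mapsto\R$ for the single asset $\eta_i$, i.e.\ $\EQ|g(\eta_i)|<\infty$, and apply condition (iii) to the function $f(x)=g(x_i)$ on $\EE^n$. Since the $i$th coordinate of $\kappa_i(\eta)$ is exactly $\eta_i^{-1}$, condition (iii) becomes
\begin{displaymath}
  \EQ g(\eta_i)=\EQ[g(\eta_i^{-1})\eta_i]\,,
\end{displaymath}
which is precisely the characterisation of univariate self-duality of $\eta_i$ given in Theorem~\ref{th:gen-sym} (with $F=1$, which suffices there by positive homogeneity). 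Since $\eta\in\SD_i$ already guarantees $\EQ\eta_i=1$ by Remark~\ref{rem:m-p}, the random variable $\eta_i$ is integrable and the one-period martingale condition needed for self-duality holds.

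More directly one can argue at the level of lift zonoids: by Theorem~\ref{th:mult-eq}(iv), $\pi_i(Z_\eta)=Z_\eta$, and the projection of $Z_\eta$ onto the $(u_0,u_i)$-coordinate plane is exactly the univariate lift zonoid $Z_{\eta_i}$ (this follows from the definition of the support function, restricting to vectors $(u_0,u)$ supported on coordinates $0$ and $i$). Since $\pi_i$ restricted to that plane is the reflection in the line $\{u_0=u_i\}$, the symmetry $\pi_i(Z_\eta)=Z_\eta$ forces $\tilde Z_{\eta_i}=Z_{\eta_i}$, which is precisely the defining condition (\ref{eq:tilde-z_eta=z_eta-}) for $\eta_i$ to be self-dual. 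Either route gives the claim.

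The only point requiring a little care — and the closest thing to an obstacle here — is the integrability bookkeeping: one must note that $\eta\in\SD_i$ forces $\EQ\eta_i=1$ (so the change of measure to $\Q^i$ is legitimate and $\eta_i$ qualifies as a univariate self-dual candidate), and that applying (iii) to $f(x)=g(x_i)$ is valid whenever $\EQ|g(\eta_i)|<\infty$, which is exactly the hypothesis in the univariate characterisation. Beyond that the argument is a routine specialisation of the $n$-dimensional statement to functions of a single coordinate, so I expect the proof to be short.

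\begin{proof}[Proof sketch]
  By Theorem~\ref{th:mult-eq}, $\eta\in\SD_i$ implies in particular $\EQ\eta_i=1$, so $\eta_i$ is an integrable positive random variable and its lift zonoid $Z_{\eta_i}$ is centrally symmetric about $(\thf,\thf)$. Applying condition~(iii) of Theorem~\ref{th:mult-eq} to a payoff function of the form $f(x)=g(x_i)$, and noting that the $i$th coordinate of $\kappa_i(\eta)$ equals $\eta_i^{-1}$, we obtain $\EQ g(\eta_i)=\EQ[g(\eta_i^{-1})\eta_i]$ for every $g$ with $\EQ|g(\eta_i)|<\infty$. By Theorem~\ref{th:gen-sym} this is equivalent to $\tilde Z_{\eta_i}=Z_{\eta_i}$, i.e.\ $\eta_i$ is self-dual. (Alternatively, the projection of $Z_\eta$ onto the $(u_0,u_i)$-plane is $Z_{\eta_i}$, and the hyperplane reflection $\pi_i$ restricts there to the reflection in $\{u_0=u_i\}$, so $\pi_i(Z_\eta)=Z_\eta$ from Theorem~\ref{th:mult-eq}(iv) yields $\tilde Z_{\eta_i}=Z_{\eta_i}$.)
\end{proof}
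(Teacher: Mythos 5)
Your argument is correct and is essentially the same as the paper's: both proofs simply specialise one of the equivalent conditions of Theorem~\ref{th:mult-eq} to the pair of coordinates $(0,i)$ and recognise the resulting identity as univariate self-duality of $\eta_i$. The only (immaterial) difference is that the paper uses condition~(ii) with $u$ supported on coordinates $0$ and $i$, arriving directly at the max-symmetry (\ref{eq:bs-sym}), whereas you use condition~(iii) with payoffs $f(x)=g(x_i)$ (or, equivalently, condition~(iv) via projection of $Z_\eta$ onto the $(u_0,u_i)$-plane) and then invoke Theorem~\ref{th:gen-sym}.
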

\begin{proof}
  Choose in Theorem~\ref{th:mult-eq}(ii) vector $u$ with all
  coordinates being zero apart from $u_0$ and $u_i$. Then (ii) reads
  $\EQ(u_0\vee u_i\eta_i)=\EQ(u_i\vee u_0\eta_i)$ for every
  $u_0,u_i\geq0$. This is exactly the symmetry condition
  (\ref{eq:bs-sym}).
\end{proof}

\subsection{Jointly self-dual random vectors}
\label{sec:self-dual-random}

Recall that random vector $\eta$ is called \emph{jointly self-dual}
if it is self-dual with respect to all numeraires.  Since
permutations of coordinate $0$ and an arbitrary $i\in\{1,\dots,n\}$
generate by successive applications the transpositions of any two
$i,j\in\{0,1,\dots,n\}$, the expected payoff functions $f_b$ and
$f_m$ for jointly self-dual $\eta$ are invariant with respect to any
permutation of their arguments, e.g.
\begin{equation}
  \label{eq:full-sym-in-basket}
  \E f_b(u_0,u_1,\dots,u_n)=\E f_b(u_{l_0},u_{l_1},\dots,u_{l_n})
\end{equation}
for each permutation $i\mapsto l_i$. In view of this,
Theorem~\ref{th:mult-eq} implies the following result.

\begin{theorem}
  \label{thr:sda}
  Random vector $\eta$ is jointly self-dual if and only if its lift
  (respectively lift max-) zonoid $Z_\eta$ (respectively $M_\eta$) is
  symmetric with respect to each hyperplane
  $\{(u_0,u_1,\dots,u_n)\in\R^{n+1}:\; u_i=u_j\}$ for all
  $i,j=0,\dots,n$, $i\neq j$.
\end{theorem}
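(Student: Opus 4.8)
The plan is to derive this as a direct corollary of Theorem~\ref{th:mult-eq} together with the observation already recorded in~(\ref{eq:full-sym-in-basket}). The key point is the group-theoretic remark: permutations that exchange the zero-coordinate with an arbitrary index $i\in\{1,\dots,n\}$ generate, under composition, the full symmetric group on $\{0,1,\dots,n\}$. Hence being self-dual with respect to every numeraire $i$ is equivalent to full permutation-invariance of the relevant objects.

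First I would assume $\eta$ is jointly self-dual, i.e.\ $\eta\in\SD_i$ for every $i=1,\dots,n$. By Theorem~\ref{th:mult-eq}, condition (iv) gives $\pi_i(Z_\eta)=Z_\eta$ and condition (v) gives $\pi_i(M_\eta)=M_\eta$, for each $i=1,\dots,n$. Since $\pi_i$ is the reflection at the hyperplane $\{u_i=u_0\}$, this already covers all hyperplanes $\{u_i=u_j\}$ with one of the indices equal to $0$. For a general pair $i\neq j$ with $i,j\in\{1,\dots,n\}$, I would note that the transposition of coordinates $i$ and $j$ can be written as a composition $\pi_i\circ\pi_j\circ\pi_i$ (or equivalently $\pi_j\circ\pi_i\circ\pi_j$), which is the standard identity expressing $(i\,j)$ in terms of transpositions with a common fixed symbol $0$. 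Applying this composition of symmetries of $Z_\eta$ (resp.\ $M_\eta$) shows the set is invariant under reflection at $\{u_i=u_j\}$ as well. This proves the ``only if'' direction.

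For the converse, suppose $Z_\eta$ (resp.\ $M_\eta$) is symmetric with respect to every hyperplane $\{u_i=u_j\}$, $i\neq j$. In particular, taking $j=0$, we get $\pi_i(Z_\eta)=Z_\eta$ (resp.\ $\pi_i(M_\eta)=M_\eta$) for every $i=1,\dots,n$, which is precisely condition (iv) (resp.\ (v)) of Theorem~\ref{th:mult-eq} for each $i$; by that theorem $\eta\in\SD_i$ for all $i$, i.e.\ $\eta$ is jointly self-dual. One should briefly remark that this converse direction needs only the subfamily of hyperplanes through the zero-coordinate, so the hypothesis as stated is slightly redundant on that side, but this is harmless.

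I do not anticipate a genuine obstacle: the proof is essentially a translation of the generation statement for $S_{n+1}$ by the transpositions $(0\,i)$ into reflections of convex bodies. The only point requiring a line of care is verifying that the composite reflection $\pi_i\pi_j\pi_i$ indeed equals the reflection at $\{u_i=u_j\}$ — this is immediate by checking the action on coordinates (it swaps $u_i$ and $u_j$ and fixes all others, including $u_0$), but I would state it explicitly since reflections, unlike permutations, are not closed under composition in general and one wants to be sure the relevant composite lands back in the set of coordinate-swap reflections. Everything else is a direct invocation of Theorem~\ref{th:mult-eq} and~(\ref{eq:full-sym-in-basket}).
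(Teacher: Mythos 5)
Your proposal is correct and follows essentially the same route as the paper, which also derives the theorem from Theorem~\ref{th:mult-eq} together with the observation that the transpositions exchanging coordinate $0$ with an arbitrary $i$ generate all transpositions of $\{0,1,\dots,n\}$ (this is exactly the remark preceding~(\ref{eq:full-sym-in-basket})). Your explicit check that $\pi_i\circ\pi_j\circ\pi_i$ is the reflection at $\{u_i=u_j\}$, and the note that the converse needs only the hyperplanes through the zero-coordinate, are accurate refinements of the same argument.
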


\begin{cor}
  \label{cor:comp}
  If $\eta$ is jointly self-dual, then all its components are identically
  distributed self-dual random variables with expectation one and
  $\eta$ is exchangeable, i.e.\ its distribution does not change after
  any permutation of its coordinates.
\end{cor}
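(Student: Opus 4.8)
The plan is to derive each of the three asserted properties of a jointly self-dual $\eta$ from Theorem~\ref{thr:sda} together with the lemmas already established. First, for the claim that all components are identically distributed self-dual random variables with expectation one: from Lemma~\ref{le:self-dual-marg} applied with each $i\in\{1,\dots,n\}$ (valid since joint self-duality means $\eta\in\SD_i$ for all $i$), each $\eta_i$ is a self-dual random variable, and by Remark~\ref{rem:m-p} each satisfies $\EQ\eta_i=1$. That they are \emph{identically} distributed follows from the exchangeability claim, so I would prove exchangeability and then note this as an immediate consequence.

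\textbf{Exchangeability.} The core step is to show that the distribution of $\eta$ is invariant under any permutation of its coordinates $1,\dots,n$. The clean route is via the lift zonoid: by Theorem~\ref{thr:sda}, $Z_\eta$ is symmetric with respect to every hyperplane $\{u_i=u_j\}$ for $0\le i<j\le n$; in particular, for $1\le i<j\le n$ the reflection of $\R^{n+1}$ at $\{u_i=u_j\}$ is exactly the coordinate transposition swapping the $i$th and $j$th entries, and since this maps $Z_\eta$ to itself, the support function $h_{Z_\eta}$ is invariant under that transposition. As noted in the text preceding Theorem~\ref{thr:sda}, transpositions $(i\,j)$ with $1\le i<j\le n$ together generate the full symmetric group on $\{1,\dots,n\}$ (indeed even allowing the $0$-coordinate), so $h_{Z_\eta}$ is invariant under every permutation of the coordinates $1,\dots,n$. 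Because the lift zonoid determines the distribution uniquely (cited as \cite[Th.~2.21]{mos02}), and the permuted random vector $\eta_\pi=(\eta_{\pi(1)},\dots,\eta_{\pi(n)})$ has lift zonoid equal to the image of $Z_\eta$ under the corresponding coordinate permutation of $\R^{n+1}$ (fixing the $0$-coordinate), we conclude $Z_{\eta_\pi}=Z_\eta$ and hence $\eta_\pi\overset{d}{=}\eta$. Applying this to transpositions $(i\,j)$ shows $\eta_i\overset{d}{=}\eta_j$ for all $i,j$, giving the "identically distributed" part.

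\textbf{Remaining bookkeeping.} Assembling: exchangeability is shown; identical distribution of the $\eta_i$ is the special case of exchangeability for transpositions; self-duality of each $\eta_i$ is Lemma~\ref{le:self-dual-marg}; expectation one is Remark~\ref{rem:m-p} (or follows directly since a self-dual random variable has mean one, as stated in Section~\ref{sec:symmetry}). I would write the proof essentially in that order, leading with the lift-zonoid symmetry and the generation-of-$S_n$-by-transpositions observation, then the two short appeals to Lemma~\ref{le:self-dual-marg} and Remark~\ref{rem:m-p}. I do not anticipate a serious obstacle here: the only point requiring a line of care is spelling out that reflection at $\{u_i=u_j\}$ really is the coordinate transposition and that the lift zonoid of a permuted vector is the correspondingly permuted lift zonoid — both are routine from the definition of $h_{Z_\eta}$ via $f_b$ and the fact that permuting the arguments of $f_b$ permutes the components of $\eta$ accordingly.
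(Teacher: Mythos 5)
Your proposal is correct and follows essentially the same route as the paper: both rest on the hyperplane symmetries from Theorem~\ref{thr:sda}, the identification of those reflections with coordinate transpositions, the uniqueness of the distribution given the associated convex body, and Lemma~\ref{le:self-dual-marg} for the self-duality and mean one of the marginals. The only (immaterial) differences are that the paper works with the lift max-zonoid and Theorem~\ref{thr:uniq-lmz} — writing out the invariance of $\EQ f_m$ and setting all but two weights to zero to compare the marginals directly — whereas you use the lift zonoid with \cite[Th.~2.21]{mos02} and obtain the identical distribution of the components as the transposition case of exchangeability.
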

\begin{proof}
  The components of $\eta$ are self-dual by
  Lemma~\ref{le:self-dual-marg} and so have expectation one.  If
  $\eta$ is jointly self-dual, then Theorem~\ref{thr:sda} yields that for
  every $(u_0,u)\in\R_+^{n+1}$ and any $i,j=1,\dots,n$,
  \begin{displaymath}
    \EQ\Big(u_0\vee u_i\eta_i\vee
     \bigvee_{l=1,\; l\neq i}^n u_l\eta_l\Big)
     =\EQ\Big(u_0\vee u_i\eta_j\vee u_j\eta_i\vee
     \bigvee_{l=1,\; l\neq i,j}^n u_l\eta_l\Big)\,.
   \end{displaymath}
   By setting $u_l=0$ for all $l\neq 0,i$ we arrive at
   \begin{displaymath}
     \EQ(u_0\vee u_i\eta_i)=\EQ(u_0\vee u_i\eta_j)
     \quad\text{ for every } (u_0,u_i)\in\R_+^2\,.
   \end{displaymath}
   Thus, for any $i,j=1,\dots,n$ the random variables $\eta_i$ and
   $\eta_j$ have the same lift max-zonoid. By Theorem~\ref{thr:uniq-lmz},
   all coordinates of $\eta$ share the same distribution.

   Theorem~\ref{thr:sda} yields that $\eta=(\eta_1,\dots,\eta_n)$ and
   $(\eta_{l_1},\dots,\eta_{l_n})$ obtained by any permutation of its
   coordinates share the same lift max-zonoid and thus the same
   distribution, i.e.\ $\eta$ is exchangeable.
\end{proof}

It should be noted that the converse statement to
Corollary~\ref{cor:comp} does not hold, i.e.\ the exchangeability of
$\eta$ does not imply joint self-duality. This is easily seen as a
consequence of the following result, which says that any non-trivial
random vector $\eta$ with independent coordinates cannot be jointly
self-dual.

\begin{theorem}
  \label{thr:triv}
  Assume that $n\geq2$.
  \begin{itemize}
  \item[(a)] If $\eta\in\SD_i$ and $\eta_i$ and $\eta_j$ are
    independent for some $j\neq i$, then $\eta_i$ equals 1 almost
    surely.
  \item[(b)] If $\eta$ is a jointly self-dual random vector with independent
    coordinates, then all coordinates of $\eta$ are deterministic and
    equal 1 almost surely.
  \end{itemize}
\end{theorem}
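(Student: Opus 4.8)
The plan is to derive (a) first and then obtain (b) as an immediate consequence. For (a), I would use the characteristic-function criterion, Theorem~\ref{th:mult-eq}(viii), which says that $\eta=e^\xi\in\SD_i$ is equivalent to
\begin{displaymath}
  \varphi_\xi^{\sE^i}(u)=\varphi_\xi^{\sE^i}(K_i^\top u)\,,\quad u\in\R^n\,.
\end{displaymath}
The key observation is that independence of $\eta_i$ and $\eta_j$ under $\Q$ translates into independence of $\xi_i$ and $\xi_j$ under the Esscher transform $\sE^i$ as well: since $d\sE^i/d\Q$ depends only on $\xi_i$, tilting by $e^{\thf\xi_i}$ does not destroy the product structure between the $i$th coordinate and any coordinate independent of it. Consequently the $\sE^i$-characteristic function factorises, $\varphi_\xi^{\sE^i}(u)=g(u_i)\,\varphi^{\sE^i}_{\xi_j}(u_j)\cdot(\dots)$ in the relevant two variables. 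Now I would evaluate the self-duality identity along the line $u=(0,\dots,0,u_i,0,\dots,0,u_j,0,\dots)$ with only the $i$th and $j$th entries nonzero. Because the $i$th coordinate of $K_i^\top u$ is $-u_i-u_j$ (reading off the matrix $K_i$: row $i$ has $-1$ in every column, so $(K_i^\top u)_i=-\sum_l u_l$, which on this two-variable slice equals $-u_i-u_j$) while the $j$th coordinate of $K_i^\top u$ is still $u_j$, the identity becomes, after cancelling the common $\xi_j$-factor,
\begin{displaymath}
  \varphi_{\xi_i}^{\sE^i}(u_i)=\varphi_{\xi_i}^{\sE^i}(-u_i-u_j)
  \quad\text{for all } u_i,u_j\in\R\,.
\end{displaymath}
Fixing $u_i$ and letting $u_j$ range over $\R$ forces $\varphi_{\xi_i}^{\sE^i}$ to be constant, hence identically $1$, so $\xi_i=0$ a.s.\ under $\sE^i$ and therefore $\eta_i=1$ a.s.\ under $\Q$.

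For part (b), if $\eta$ is jointly self-dual with independent coordinates then in particular $\eta\in\SD_i$ for each $i$, and for $n\ge 2$ each $\eta_i$ is independent of (say) $\eta_j$ with $j\neq i$; applying part (a) for every $i$ gives $\eta_i=1$ a.s.\ for all $i$.

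The main obstacle I anticipate is making the factorisation argument fully rigorous: one must justify that $\xi_i$ stays independent of $\xi_j$ under $\sE^i$ (clean, since the density ratio is a function of $\xi_i$ alone), and one must handle the possibility that $\varphi_{\xi_j}^{\sE^i}(u_j)$ vanishes for some $u_j$ so that the cancellation is not literally valid everywhere. This can be sidestepped by choosing $u_j$ in a neighbourhood of $0$ where the characteristic function is nonzero, which already yields a whole interval of values forcing $\varphi_{\xi_i}^{\sE^i}$ to be constant, and a characteristic function constant on an interval around the origin (equalling $1$ there) must be $\equiv 1$. An alternative, perhaps cleaner, route avoiding characteristic functions altogether is to use Theorem~\ref{th:mult-eq}(iii) with a product payoff $f(\eta)=g(\eta_i)h(\eta_j)$: independence gives $\EQ g(\eta_i)\,\EQ h(\eta_j)=\EQ[g(1/\eta_i)\eta_i]\,\EQ[h(\eta_j)]$, hence $\EQ g(\eta_i)=\EQ[g(1/\eta_i)\eta_i]$ for all $g$, i.e.\ $\eta_i$ is a self-dual random variable; then apply it again with $f(\eta)=g(\eta_i)h(\eta_i\eta_j)$ — no wait, the sharp conclusion $\eta_i\equiv 1$ still needs the extra leverage that $\kappa_i$ also inverts $\eta_i$ inside the $\eta_j$-slot, so the characteristic-function slice above is the decisive step and I would keep it.
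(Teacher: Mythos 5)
Your proof is correct, but it follows a genuinely different route from the paper's. The paper works with condition (ii) of Theorem~\ref{th:mult-eq}: setting $u_l=0$ for $l\neq 0,i,j$ and then $u_i=0$ gives $\EQ(u_0\vee u_j\eta_j)=\EQ(u_0\eta_i\vee u_j\eta_j)$; using the self-duality of $\eta_i$ (Lemma~\ref{le:self-dual-marg}) and conditioning on $\eta_j$, the right-hand side becomes $\EQ(u_0\vee u_j\eta_i\eta_j)$, whence $\eta_j$ and $\eta_i\eta_j$ have the same lift max-zonoid and hence the same distribution (Theorem~\ref{thr:uniq-lmz}); independence then gives $\varphi_{\xi_j}=\varphi_{\xi_i}\varphi_{\xi_j}$, so $\varphi_{\xi_i}\equiv1$ near the origin. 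You instead work directly with criterion (viii), exploiting that the Esscher tilt $d\sE^i/d\Q\propto e^{\thf\xi_i}$ is a function of $\xi_i$ alone and therefore preserves the independence of $\xi_i$ and $\xi_j$; restricting the identity $\varphi_\xi^{\sE^i}(u)=\varphi_\xi^{\sE^i}(K_i^\top u)$ to the $(u_i,u_j)$-slice (where indeed $(K_i^\top u)_i=-u_i-u_j$ and $(K_i^\top u)_j=u_j$) and cancelling the nonvanishing factor $\varphi_{\xi_j}^{\sE^i}(u_j)$ for $u_j$ near $0$ forces $\varphi_{\xi_i}^{\sE^i}\equiv1$ on a neighbourhood of the origin, hence $\xi_i=0$ a.s.\ under $\sE^i$ and, by equivalence of $\sE^i$ and $\Q$, under $\Q$ as well. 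Both arguments terminate in the same elementary fact about characteristic functions equal to one near zero; yours bypasses Lemma~\ref{le:self-dual-marg} and the uniqueness theorem for lift max-zonoids at the price of the (correctly justified) measure-change bookkeeping, while the paper's version stays closer to the option-payoff interpretation and reduces to the transparent distributional identity $\eta_j\overset{d}{=}\eta_i\eta_j$. Your half-finished aside about product payoffs $g(\eta_i)h(\eta_j)$ is indeed a dead end for exactly the reason you give --- $\kappa_i$ sends the $j$th slot to $\eta_j/\eta_i$, so the expectation does not factor as first written --- but since you discard it, the main argument is unaffected. Part (b) is deduced from (a) identically in both proofs.
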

\begin{proof}
  It suffices to prove only (a). By Theorem~\ref{th:mult-eq}(ii)
  letting $u_l=0$ for $l\neq 0,i,j$,
  \begin{displaymath}
    \EQ\big(u_0\vee u_i\eta_i\vee u_j\eta_j\big)
    =\EQ\big(u_i\vee u_0\eta_i\vee u_j\eta_j\big)
  \end{displaymath}
  for all $u_0,u_i,u_j\in\R_+$. In particular, if $u_i=0$, then
  \begin{displaymath}
    \EQ(u_0\vee u_j\eta_j)=\EQ(u_0\eta_i\vee u_j\eta_j)\quad
    \text{ for all } (u_0,u_j)\in\R_+^2\,.
  \end{displaymath}
  Since $\eta_i$ is self-dual by Lemma~\ref{le:self-dual-marg}, the
  conditioning on $\eta_j$ yields that $\E(u_0\eta_i\vee
  u_j\eta_j)=\E(u_0\vee u_j\eta_i\eta_j)$. Hence,
  \begin{displaymath}
    \EQ(u_0\vee u_j\eta_j)=\EQ(u_0\vee u_j\eta_i\eta_j)
    \quad\text{ for all } (u_0,u_j)\in\R_+^2\,,
  \end{displaymath}
  whence $\eta_j$ coincides in distribution with $\eta_i\eta_j$, see
  Theorem~\ref{thr:uniq-lmz}.  If $\xi_l=\log\eta_l$ for $l=i,j$, then
  $\xi_j$ and $\xi_i+\xi_j$ share the same distribution. Therefore,
  the characteristic function of $\xi_i$ identically equals one for
  some neighbourhood of the origin, whence $\xi_i=0$ almost surely.
\end{proof}

\begin{remark}[Random vectors sampled from L\'evy processes]
  \label{rem:levy-time}
  Assume that $\zeta_1,\dots,\zeta_n$ are independent integrable random
  variables.  Consider the vector
  \begin{displaymath}
    \eta=(\zeta_1,\zeta_1\zeta_2,\zeta_1\zeta_2\zeta_3,\dots,\zeta_1\cdots\zeta_n)\,.
  \end{displaymath}
  This construction is important, since then $\xi=\log\eta$ is a
  vector whose components form a random walk.
  However, $\eta$ cannot be jointly self-dual as a vector, unless in the
  trivial deterministic case. Indeed, setting $u_0,u_1,u_2\geq 0$ and
  $u_3=\cdots=u_n=0$, writing the expected payoff $f_m$ conditionally
  on $\zeta_2$ and using the self-duality of $\zeta_1$ (which follows
  from $\eta\in\SD_1$)
  we see that
  \begin{displaymath}
    \EQ(u_0\vee u_1\eta_1\vee u_2\eta_2)
    =\EQ(u_0\zeta_1\vee u_1\vee u_2\zeta_2)
  \end{displaymath}
  is symmetric in $u_0,u_1,u_2$ if $\eta$ is jointly self-dual. Thus,
  $(\zeta_1,\zeta_2)$ is a jointly self-dual vector with independent
  components, which is necessarily trivial by Theorem~\ref{thr:triv}.
  An extension of this argument shows that $\zeta_1=\cdots=\zeta_n=1$
  almost surely.  Therefore, it is not possible to obtain
  jointly self-dual random vectors by taking exponentials of the values
  of a L\'evy process at different time points.
\end{remark}

\begin{example}
  \label{ex:mult-unit-ball}
  The most obvious convex body in $\R^{n+1}$ being symmetric with respect
  to the hyperplanes $u_0=u_i$, $i=1,\dots,n$, is the closed unit ball
  $B_1(0)$ of radius one centred at the origin. The value of the corresponding
  derivative equals the discounted magnitude of the weight vector. It is
  shown in~\cite{mo08e} that $M_\eta=B_1(0)\cap \R_+^{n+1}$ is a max-zonoid.
  By Theorem~\ref{thr:sda}, $M_\eta$ is the lift max-zonoid of a
  jointly self-dual random vector $\eta$, such that for all
  $(u_0,u)\in\R_+^{n+1}$
  \begin{align*}
    h_{M_\eta}(u_0,u)=\|(u_0,u)\|&=\E\Big(u_0\vee\bigvee_{l=1}^n u_l\eta_l\Big)
    =u_0+\int_{u_0}^\infty \P\Big(\bigvee_{l=1}^nu_l\eta_l>t\Big)\,dt\\
    &=u_0+\int_{u_0}^\infty\Big(1-F_\eta\big(\frac{t}{u_1},\dots,\frac{t}{u_n}\big)\Big)\,dt\,,
  \end{align*}
  where $F_\eta$ is the joint cumulative distribution function of
  $\eta$.  Using the expression for the Euclidean norm $\|(u_0,u)\|$,
  differentiating with respect to all components and setting $u_0=1$
  yield the following expression for the density of $\eta$
  \begin{displaymath}
    p_\eta(u)=\frac{2^n\Gamma(n+\thf)}
    {\sqrt\pi\big(1+\sum_{l=1}^n u_l^{-2}\big)^{\thf+n}\prod_{l=1}^n u_l^3}\,,
    \quad u=(u_1,\dots,u_n)\in\EE^n\,,
  \end{displaymath}
  where $\Gamma(\cdot)$ denotes the Gamma function. It is easy to
  check that $p_\eta$
  satisfies~(\ref{eq:mult-eta-dens}) for every $i=1,\dots,n$.
\end{example}

\begin{example}
  \label{example:sda}
  Let $\zeta_0,\zeta_1,\dots,\zeta_n$ be i.i.d.\ self-dual random
  variables (their examples are provided in
  Section~\ref{sec:symm-distr}). Define $\eta_i=\zeta_0\zeta_i$ for
  $i=1,\dots,n$. Conditioning on $\zeta_1,\dots,\zeta_n$ yields that
  \begin{align*}
    \E(u_0\vee u_1\eta_1\vee\cdots \vee u_n\eta_n)
    &=\E\big[\E(u_0\vee\zeta_0(u_1\zeta_1\vee\cdots\vee u_n\zeta_n)
    |\zeta_1,\dots,\zeta_n)\big]\\
    &= \E(u_0\zeta_0\vee u_1\zeta_1\vee\cdots\vee u_n\zeta_n)
  \end{align*}
  is symmetric in $u_0,u_1,\dots,u_n$, i.e.\ $\eta=(\eta_1,\dots,\eta_n)$
  is jointly self-dual. Note that $\eta_1,\dots,\eta_n$ are all self-dual random
  variables, but are no longer independent. In particular if the $\zeta$'s
  are log-normally distributed with $\mu=-\thf$ and $\sigma=1$, then
  $\log\eta$ is normally distributed with mean $(-\thf,\dots,-\thf)$
  and the covariance matrix having diagonal elements one and all other
  $\thf$. We will return to this situation in
  Example~\ref{ex:log-n-mult}.
\end{example}

\subsection{Exponentially self-dual infinitely divisible random
  vectors}
\label{sec:expon-self-dual-1}

A random vector $\xi$ has an infinitely divisible distribution if
and only if $\xi=L_1$ for a L\'evy process $L_t$, $t\geq0$,
see~\cite{sat99}. In view of the widespread use of L\'evy models
for derivative pricing we aim to characterise infinitely divisible
random vectors $\xi=\log\eta$ for $\eta$ being self-dual with
respect to the $i$th numeraire or all numeraires.  If
$\eta\in\SD_i$, then $\xi$ is said to be \emph{exponentially
self-dual} with respect to the $i$th numeraire and we write
shortly $\xi\in\ESD_i$.

The Euclidean norm $\|\cdot\|$ is not invariant with respect to the
transformation $x\mapsto K_i x$ defined by (\ref{eq:def-Ki}).  For
simplifying the formulation of the results we introduce the following
norm on $\R^n$
\begin{equation}
  \label{eq:new-norm}
  \tn u\tn^2 =\thf(\|u\|^2+\|K_iu\|^2)\,,
  \quad u\in\R^n\,,
\end{equation}
where the number $i\in\{1,\dots,n\}$ is \emph{fixed} in the sequel.
It is easy to see that $\tnc$ is indeed a norm, which is equivalent
to the Euclidean norm on $\R^n$.  Since $K_i$ is self-inverse, $\tn
u\tn=\tn K_i u\tn$ for every $u\in\R^n$.

We use the following formulation of the \emph{L\'evy-Khintchine
  formula}, see~\cite[Ch.~2]{sat99}, for the characteristic function of
$\xi$
\begin{multline}
  \label{eq:levy-k}
  \varphi_\xi^\Q(u)=\E e^{\imagi\langle u,\xi\rangle}=
  \exp\bigg\{\imagi\langle\gammaQ,u\rangle-\thf\langle u,Au\rangle\\
  +\int_{\R^n}(e^{\imagi\langle u,x\rangle}-1
  -\imagi\langle u,x\rangle\one_{\tn x\tn\leq 1})d\nuQ(x)\bigg\}\,,
\end{multline}
for $u\in\R^n$, where $A$ is a symmetric non-negative definite
$n\times n$ matrix, $\gammaQ\in\R^n$ is a constant vector and
$\nuQ$ is a measure on $\R^n$ (called the L\'evy measure)
satisfying $\nuQ(\{0\})=0$ and
\begin{equation}
  \label{eq:nu-conditions}
  \quad\int_{\R^n}\min(\|x\|^2,1)d\nuQ(x)<\infty\,.
\end{equation}
Note that the latter condition can be equivalently written in the
new norm $\tnc$.

\begin{theorem}
  \label{th:inv-div-mult}
  Let $\eta$ be an integrable random vector under probability measure
  $\Q$ such that $\xi=\log\eta$ is infinitely divisible under $\Q$.
  Then $\xi\in\ESD_i$ if an only if for the generating triplet
  $(A,\nuQ,\gammaQ)$ the following three conditions hold.
  \begin{itemize}
  \item[(1)] The matrix $A=(a_{lj})_{lj=1}^n$ satisfies
    $a_{ij}=a_{ji}=\thf a_{ii}$ for all $j=1,\dots,n$, $j\neq i$.
  \item[(2)] The L\'evy measure satisfies
    \begin{equation}
      \label{eq:dnuqx=e-x_idn-text}
      d\nuQ(x)=e^{-x_i}d\nuQ(K_ix)\quad \text{ almost everywhere }
    \end{equation}
    meaning that $\nuQ(B)=\int_{K_iB} e^{x_i}d\nuQ(x)$ for all Borel
    $B$.
  \item[(3)] The $i$th coordinate of  $\gammaQ$ satisfies
    \begin{equation}
      \label{eq:gamm-=gamm-left}
      \gammaQ_i=\int_{\tn x\tn\leq 1}x_i(1-e^{\thf x_i})\,d\nuQ(x)-\thf a_{ii}\,.
    \end{equation}
  \end{itemize}
\end{theorem}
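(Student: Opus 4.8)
The plan is to reduce the statement to the equivalence $\xi\in\ESD_i\iff\xi\stackrel{d}{=}K_i\xi$ under the Esscher transform $\sE^i$, which is Theorem~\ref{th:mult-eq}(vii) (recall $\xi\in\ESD_i$ means $\eta\in\SD_i$), and then to match the two L\'evy triplets. First I would record the effect of $\sE^i$ on the triplet. Since $\eta$ is $\Q$-integrable, $\E e^{\thf\xi_i}=\E\sqrt{\eta_i}\leq\thf(1+\E\eta_i)<\infty$, so $\sE^i$ is well defined; by the standard behaviour of an infinitely divisible law under an exponential tilt with parameter $\thf e_i$ (see~\cite{sat99}, and~\cite[Ex.~7.3]{sat00} for the multivariate L\'evy setting), $\xi$ is again infinitely divisible under $\sE^i$, with Gaussian part still $A$, L\'evy measure $d\nuQ^{\sE^i}(x)=e^{\thf x_i}\,d\nuQ(x)$, and drift (with respect to the same truncation function $\one_{\tn x\tn\leq 1}$ used in~(\ref{eq:levy-k}))
\begin{displaymath}
  \gammaQ^{\sE^i}=\gammaQ+\thf A e_i+\int_{\tn x\tn\leq 1}x\,(e^{\thf x_i}-1)\,d\nuQ(x)\,,
\end{displaymath}
all the integrals being finite because $\|x\|^2$ is $\nuQ$-integrable over $\{\tn x\tn\leq 1\}$ by~(\ref{eq:nu-conditions}) and $t\mapsto e^{\thf t}$ is bounded there.

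Next I would push this triplet forward through the linear self-inverse map $K_i$. Here is the point for which the norm $\tnc$ was introduced: since $\tn K_i x\tn=\tn x\tn$, the truncation set $\{\tn x\tn\leq 1\}$ is $K_i$-invariant, so $K_i\xi$ has, under $\sE^i$ and with respect to the same truncation, the triplet $\bigl(K_iAK_i^\top,\ (K_i)_*\nuQ^{\sE^i},\ K_i\gammaQ^{\sE^i}\bigr)$, where $(K_i)_*$ is the push-forward, and \emph{with no compensating drift term}. By uniqueness of the L\'evy--Khintchine representation, $\xi$ and $K_i\xi$ agree in law under $\sE^i$ exactly when
\begin{displaymath}
  A=K_iAK_i^\top\,,\qquad \nuQ^{\sE^i}=(K_i)_*\nuQ^{\sE^i}\,,\qquad \gammaQ^{\sE^i}=K_i\gammaQ^{\sE^i}\,,
\end{displaymath}
and it remains to translate these three identities into conditions (1), (2) and (3).

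For the first identity I would compute $K_iAK_i^\top$ entrywise from $(K_ix)_i=-x_i$ and $(K_ix)_l=x_l-x_i$ for $l\neq i$; comparing with $A$, the entries in row and column $i$ force $a_{ij}=a_{ji}=\thf a_{ii}$ for $j\neq i$, and conversely these relations give $K_iAK_i^\top=A$, which is~(1). For the second, testing $\nuQ^{\sE^i}=(K_i)_*\nuQ^{\sE^i}$ against functions of the form $\one_{B}(x)e^{-\thf x_i}$ and using $(K_ix)_i=-x_i$ turns the identity into $\nuQ(B)=\int_{K_iB}e^{x_i}\,d\nuQ(x)$ for every Borel $B$, i.e.~(2). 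For the third, observe that $K_iv=v$ holds if and only if $v_i=0$ (again since $(K_iv)_i=-v_i$ and $(K_iv)_l=v_l-v_i$), so $\gammaQ^{\sE^i}=K_i\gammaQ^{\sE^i}$ reduces to the single scalar equation $\gammaQ^{\sE^i}_i=0$, i.e.\ $\gammaQ_i+\thf a_{ii}+\int_{\tn x\tn\leq1}x_i(e^{\thf x_i}-1)\,d\nuQ(x)=0$, which rearranges to~(\ref{eq:gamm-=gamm-left}).

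The bulk of the work, and the only place needing genuine care, is the Esscher change-of-triplet bookkeeping together with the verification that the push-forward under $K_i$ introduces no drift correction — which is precisely what the tailored norm $\tnc$ is designed to guarantee, since it makes the truncation set $K_i$-invariant. After that, matching the three triplet components is a routine computation, so I expect no further obstacles.
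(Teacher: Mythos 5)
Your proposal is correct and follows essentially the same route as the paper: reduce to the equality in law of $\xi$ and $K_i\xi$ under $\sE^i$ (the paper uses the characteristic-function form (viii) of Theorem~\ref{th:mult-eq}, you use the equivalent form (vii)), apply the Esscher change-of-triplet formulas, exploit the $K_i$-invariance of $\tnc$ so that the truncation produces no drift correction, and match the three components of the triplet by uniqueness of the L\'evy--Khintchine representation. Your translations of the three matching conditions into (1), (2) and (3) agree with the paper's, and since every step is an equivalence the converse is covered as well.
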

\begin{proof}
  Since $\eta$ is positive integrable, $0<\EQ e^{\thf\xi_i}<\infty$,
  so that the Esscher transform $\sE^i$ of $\Q$ with parameter
  $\thf e_i$ and the inverse transform are well defined.
  According to \cite{sat90} or~\cite[Ex.~7.3]{sat00}, $\xi$ under
  $\sE^i$ has also an infinitely divisible distribution, so that
  \begin{displaymath}
    \varphi_\xi^{\sE^i}(u)=
    \exp\bigg\{\imagi\langle\gamma^{\sE^i},u\rangle-\thf\langle u,Au\rangle\\
    +\int_{\R^n}(e^{\imagi\langle u,x\rangle}-1
    -\imagi\langle u,x\rangle\one_{\tn x\tn\leq 1})d\nu^{\sE^i}(x)\bigg\}
  \end{displaymath}
  for a new vector $\gamma^{\sE^i}$ and L\'evy measure
  $\nu^{\sE^i}$. Note that the matrix $A$ is invariant under the
  Esscher transform, see~\cite{sat90} or~\cite[Ex.~7.3]{sat00}.

  By~Theorem~\ref{th:mult-eq}(viii) $\xi\in\ESD_i$ if and only if
  \begin{equation}
    \label{eq:varph-u-quadt}
    \varphi_\xi^{\sE^i}(u)=\varphi_\xi^{\sE^i}(K_i^\top u)
    \quad\text{ for all } u\in\R^n\,.
  \end{equation}
  By the L\'evy-Khintchine formula,
  \begin{multline*}
    \varphi_\xi^{\sE^i}(K_i^\top u)=
    \exp\bigg\{\imagi\langle\gamma^{\sE^i},K_i^\top u\rangle-\thf\langle
    K_i^\top u,AK_i^\top u\rangle\\
    +\int_{\R^n}(e^{\imagi\langle K_i^\top u,x\rangle}-1
    -\imagi\langle K_i^\top u,x\rangle\one_{\tn x\tn\leq 1})d\nu^{\sE^i}(x)\bigg\}\,.
  \end{multline*}
  Noticing that $\langle K_i^\top u,x\rangle=\langle u,K_ix\rangle$,
  changing the variable $x$ to $K_ix$ in the last integral, using the
  $K_i$-invariance of $\tnc$ and the self-inverse property of $K_i$,
  we see that
  \begin{multline*}
    \varphi_\xi^{\sE^i}(K_i^\top u)=
    \exp\bigg\{\imagi\langle K_i\gamma^{\sE^i},u\rangle-\thf\langle
    u,K_iAK_i^\top u\rangle\\
    +\int_{\R^n}(e^{\imagi\langle u,x\rangle}-1
    -\imagi\langle u,x\rangle\one_{\tn x\tn\leq 1})d\nu^{\sE^i}(K_ix)\bigg\}\,.
  \end{multline*}
  The uniqueness of the parameters $A$, $\nu^{\sE^i}$, and
  $\gamma^{\sE^i}$ of the L\'evy-Khintchine representation for
  $\varphi_\xi^{\sE^i}$ (see~\cite[Th.~8.1]{sat99}) implies that
  (\ref{eq:varph-u-quadt}) holds if and only if
  \begin{align}
    \label{eq:phi1}
    A&=K_iAK_i^\top\,,\\
    \label{eq:phi2}
    \gamma^{\sE^i}&=K_i\gamma^{\sE^i}\,,
  \end{align}
  and the L\'evy measure $\nu^{\sE^i}$ is $K_i$-invariant.

  Using the self-inverse property of $K_i$, representing $K_i$ as the
  difference of the unit matrix and the matrix $K'_i$ which has all
  zeroes apart from the $i$th column $1,\dots,1,2,1,\dots,1$ with $2$
  at the $i$th position, and by equating the entries in $K'_iA=A(K'_i
  )^\top$, we easily obtain that~(\ref{eq:phi1}) holds if and only if
  $a_{ij}=a_{ji}=\thf a_{ii}$ for every $j=1,\dots,n$, $j\neq i$.
  Next, (\ref{eq:phi2}) holds if and only if the $i$th component of
  $\gamma^{\sE^i}$ is $0$.

  Since the norm $\tnc$ does not change integrability properties in
  the L\'evy-Khintchine representation, it is possible to replicate
  the proof from~\cite{sat90} to show that the
  Esscher transform with parameter $ -\thf e_i$ leaves $A$ invariant
  while other parts of the L\'evy triplet are transformed as
  \begin{align*}
    d\nuQ(x)&=e^{-\thf x_i}d\nu^{\sE^i}(x)\,,\\
    \gammaQ&=\gamma^{\sE^i}
    +\int_{\tn x\tn\leq 1}x(e^{-\thf x_i}-1)d\nu^{\sE^i}(x)+A\big(-\thf e_i\big)\,.
  \end{align*}
  The latter condition is equivalent to
  (\ref{eq:gamm-=gamm-left}), noticing that the $i$th component of
  $A(\thf e_i)$ is $a_{ii}/2$, $d\nu^{\sE^i}(x)=e^{\thf x_i}d\nuQ(x)$
  and $\gamma^{\sE^i}$ has zero as its
  $i$th component, while other components are arbitrary.

  Furthermore, for almost all $x$,
  \begin{displaymath}
    d\nuQ(x)=e^{-\thf x_i}d\nu^{\sE^i}(x)
    =e^{-x_i}e^{-\thf (-x_i)}d\nu^{\sE^i}(K_ix)
    =e^{-x_i}d\nuQ(K_ix)\,,
  \end{displaymath}
  where again we used the fact that the $i$th component of $K_ix$ is
  $-x_i$ and that $\nu^{\sE^i}$ is $K_i$-invariant.

  Conversely, the integrability of $\eta$ ensures the existence of the
  Esscher transform of $\Q$ with parameter $\thf e_i$.  By doing this
  transform and the converse calculations it is easy to verify that
  Theorem~\ref{th:mult-eq}(viii) applies, i.e.\ $\eta=e^\xi\in\SD_i$.
\end{proof}

Since an infinitely divisible random variable $\xi$ is symmetric if
and only if $\gamma$ vanishes and the L\'evy measure is symmetric, the
above proof is very short in the univariate case and immediately
yields the corresponding univariate result stated
in~\cite{faj:mor06,faj:mor06b}, see also \cite{car:lee08} and
Corollary~\ref{cor:univ-inf-div}.

The $\SD_i$-property of $\eta$ implies that the $i$th component of
$\eta$ has expectation one. If this holds for other components,
e.g.\ if $\eta$ forms a martingale, this imposes further
restrictions on the coordinates of $\gammaQ$, namely
\begin{displaymath}
  \gammaQ_j+\thf a_{jj}+\int_{\R^n}(e^{x_j}-1
  -x_j\one_{\tn x\tn\leq 1})d\nuQ(x)=0\,, \quad j=1,\dots,n\,.
\end{displaymath}

\begin{remark}[The role of the norm]
  \label{rem:norm}
  If we use the Euclidean norm to define the truncation in
  (\ref{eq:levy-k}), then this change only affects the value of
  $\gammaQ$, while $A$ and $\nuQ$ remain the same.  If
  $\gammaQ_{\|\cdot\|}$ denotes the ``drift'' calculated for the
  Euclidean norm, then
  \begin{displaymath}
    \gammaQ_{\|\cdot\|} =\gammaQ
    +\int_{\R^n}x(\one_{\|x\|\leq 1}-\one_{\tn x\tn\leq
    1})d\nuQ\,,
  \end{displaymath}
  so that (\ref{eq:gamm-=gamm-left}) transforms into
  \begin{displaymath}
    \gammaQ_{\|\cdot\|,i}=\int_{\R^n}x_i\big(\one_{\|x\|\leq 1}-\one_{\tn
      x\tn\leq1}e^{\thf x_i}\big) d\nuQ(x)-\thf a_{ii}\,.
  \end{displaymath}
\end{remark}

\begin{remark}[Jointly self-dual case]
  \label{rem:scd}
  Assume that the conditions of Theorem~\ref{th:inv-div-mult} hold for
  each $i=1,\dots,n$. The first condition implies that $A$ equals up to
  a constant factor the matrix which has all $1$ on diagonal and $\thf$
  outside. By applying~(\ref{eq:dnuqx=e-x_idn-text}) consecutively to
  coordinates $i\neq j$ and noticing that $K_iK_jK_i$ defines the
  transposition of the $i$th and $j$th coordinates of $n$-dimensional
  vectors, we see that in this case the L\'evy measure $\nu$ is
  invariant under permutations and all components of $\gamma$ coincide.
\end{remark}

\begin{remark}[Finite mean case]
  \label{rem:fmc}
  Now we also assume that $\xi$ has finite mean, which is the case if
  and only if $\int_{\|x\|>1}\|x\|d\nuQ(x)<\infty$,
  see~\cite[Cor.~25.8]{sat99}. Then we can rewrite (\ref{eq:levy-k}) in
  the following form
  \begin{equation}
    \label{eq:levy-k-moment}
    \varphi_\xi^\Q(u)=\exp\left\{\imagi\langle\muQ,u\rangle
      -\thf\langle u,Au\rangle
      +\int_{\R^n}(e^{\imagi\langle u,x\rangle}-1
      -\imagi\langle u,x\rangle)d\nuQ(x)\right\}
  \end{equation}
  for $u\in\R^n$, where $\muQ$ is the $\Q$-expectation of $\xi$.
  Replicating the proof of Theorem~\ref{th:inv-div-mult} (or by
  adjusting $\gammaQ$ and using $d\nuQ(x)=e^{-x_i}d\nuQ(K_ix)$) we
  obtain that $\xi\in\ESD_i$ if and only if conditions (1) and (2) of
  Theorem~\ref{th:inv-div-mult} hold, while (\ref{eq:gamm-=gamm-left})
  is replaced by
  \begin{equation}
    \label{eq:mu-condition}
    \muQ_i=\int_{\R^n} x_i(1-e^{\thf x_i})\,d\nuQ(x)-\thf a_{ii}\,.
  \end{equation}

\end{remark}

\begin{example}[Log-normal distribution, Black-Scholes setting]
 \label{ex:log-n-mult}
 Assume that $\eta$ is log-normal with underlying normal vector
 $\xi=\log\eta$, so that
 \begin{displaymath}
   \varphi_\xi^\Q(u)=\exp\Big\{\imagi\langle\muQ,u\rangle
   -\thf\langle u,Au\rangle\Big\}\,,\quad
   u\in\R^n\,.
 \end{displaymath}
 Then $\eta\in\SD_i$ if and only if the covariance matrix
 $A=(a_{lm})_{lm=1}^n$ satisfies $a_{li}=a_{il}=\thf a_{ii}$ for
 $l=1,\dots,n$, $l\neq i$, and
 $\muQ_i=-\thf a_{ii}$, see~(\ref{eq:mu-condition}).

 Finally, $\eta$ is \emph{jointly} self-dual if and only if
 $a_{ll}=\sigma^2$ for all $l=1,\dots,n$, $a_{lm}=\thf\sigma^2$
 for all $l\neq m$, i.e.\ for $\sigma>0$ the correlations between
 $\xi_i$ and other components of $\xi$ are $\thf$, and the mean is
 $-\frac{\sigma^2}{2}$ for all $l=1,\dots,n$. The mean and covariance
 matrix of $\xi$ are
 then
 \begin{equation}
   \label{eq:mvar-normal}
   -\,\frac{\sigma^2}{2}(1,\dots,1) \quad\text{and}\quad
   \sigma^2
   \begin{pmatrix}
     1 & \thf &\cdots &\thf\\
     \thf & 1 & \cdots &\thf\\
     \vdots & \vdots & \vdots & \vdots\\
     \thf & \thf &\cdots & 1
   \end{pmatrix}\,.
 \end{equation}
\end{example}

\begin{remark}[Square integrable case and covariance]
  \label{re:korr}
  As a consequence of Example~\ref{ex:log-n-mult}, for log-normal
  $\eta=e^\xi\in\SD_i$ the correlations between the $i$th and other
  components of $\xi$ are $\thf\sqrt{a_{ii}/a_{ll}}$
  (assuming $a_{ii},a_{ll}>0$), while other
  correlations are not affected. In order to relax this
  correlation structure between the $i$th and other components,
  it is useful to introduce a jump component. For doing that, we assume
  $\int_{\|x\|>1}\|x\|^2d\nu(x)<\infty$, i.e.\ $\xi$ is
  square-integrable. Then the elements of the covariance matrix of
  $\xi$ are given by
  \begin{displaymath}
    v_{lj}=a_{lj}+\int_{\R^n}x_lx_jd\nu(x)\,,
  \end{displaymath}
  see~\cite[Ex.~25.12]{sat99}, i.e.\ despite of the constrains
  on the L\'evy measure given in~(\ref{eq:dnuqx=e-x_idn-text}) there are
  various possibilities for the covariance and correlation structures.
  Simple examples can be constructed as in the following remark,
  see also Remark~\ref{rem:l-measure} and Example~\ref{ex:levym-from-normal}
  (for $\alpha=1$).
\end{remark}

\begin{remark}[L\'evy measures]
  \label{rem:lm-sdc}
  Assume that $\xi\in\ESD_i$ is infinitely divisible with the L\'evy
  measure $\nu$.  If $\nu$ is finite, then the second condition of
  Theorem~\ref{th:inv-div-mult} means that random vector $\zeta$
  distributed according to the normalised $\nu$ is $\ESD_i$ itself.
  In particular, if $\nu$ is absolutely continuous, its density
  satisfies (\ref{eq:mult-xi-dens}). An immediate example of $\nu$
  is Gaussian law with the mean and variance from
  Example~\ref{ex:log-n-mult}, so that $e^\zeta$ is log-normally
  distributed as in Example~\ref{ex:log-n-mult}. Since this $\nu$
  is finite, the non-Gaussian part of $\xi$ corresponds to the
  compound Poisson law with Gaussian jumps.
\end{remark}

\subsection{Quasi-self-dual vectors}
\label{sec:quasi-self-dual-1}

As we have seen, the symmetry properties of random price changes
(interpretation of $\eta$ in a risk-neutral case) are considered
separately from the forward prices of the assets. In some cases,
notably for semi-static hedging of barrier options with carrying
costs, see~\cite{car:cho97,car:cho02,car:lee08}, the symmetry is
imposed on price changes adjusted with carrying costs
$a=e^\lambda\in\R^n$, where usually $\lambda_j=r-q_j$ for the
risk-free interest rate $r$ and $q_j$ is the dividend yield of the
$j$th asset, $j=1,\dots,n$.

In view of applications to derivative pricing it is natural to
assume that all components of $\eta$ have expectation one, i.e.\
$\Q$ is a one-period martingale measure. Then the random vector
\begin{displaymath}
  e^\lambda\circ\eta
  =(e^{\lambda_{1}}\eta_{1},\dots,e^{\lambda_{n}}\eta_{n})
\end{displaymath}
cannot be self-dual with respect to the $i$th numeraire (resp.\ for
all numeraires) unless $\lambda_{i}=0$ (resp.\ all components of
$\lambda$ vanish), since the multiplication by $e^\lambda$ moves the
expectation away from one. One can however relate
$e^\lambda\circ\eta$ to a self-dual random vector by means of a
power transformation.

\begin{definition}
  A random vector $\eta\in\EE^n$ is said to be \emph{quasi-self-dual}
  (of order $\alpha$) if there exist $\lambda\in\R^n$ and $\alpha\neq0$
  such that $(e^\lambda\circ \eta)^\alpha$ is integrable and self-dual with
  respect to the $i$th numeraire. We then write
  $\eta\in\QSD_i(\lambda,\alpha)$.
\end{definition}

If $\eta\in\QSD_i(\lambda,\alpha)$, then $\E
(e^{\lambda_i}\eta_{i})^\alpha=1$ by
Lemma~\ref{le:self-dual-marg}, so that the values of $\alpha$ and
$\lambda_i$ are closely related to each other.  Later in this
section, we discuss this relation for a special case of
quasi-self-dual L\'evy models. If useful, $\lambda$ can also have
other interpretations than being the pure carrying costs and one
can also drop the assumption that $\eta$ is a one-period
martingale itself. If imposed, the martingale assumption will be
explicitly mentioned.

By Theorem~\ref{th:mult-eq}(iii), $\eta\in\QSD_i(\lambda,\alpha)$
yields that
\begin{equation}
  \label{eq:quasi-sd-gen-sym-dif-cc}
  \E f(e^\lambda\circ\eta)
  =\E f\big(\big((e^\lambda\circ\eta)^\alpha\big)^\frac{1}{\alpha}\big)
  =\E[f(\kappa_i(e^\lambda\circ\eta))(a_{i}\eta_{i})^\alpha]\,.
\end{equation}

Define random vector $\zeta=\lambda+\xi$, where $\eta=e^\xi$ for
$\xi=(\xi_{1},\dots,\xi_{n})$. Then $e^\lambda\circ\eta=e^\zeta$.
If we consider the payoff function as a function of asset prices
$S_T=(S_{T1},\dots,S_{Tn})$ with $S_{Tj}=S_{0j}e^{\zeta_{j}}$ for
$j=1,\dots,n$, then (\ref{eq:quasi-sd-gen-sym-dif-cc}) can be
written as
\begin{displaymath}
  \E f(S_T)
  =\E\Big[f\Big(\frac{S_{0i}}{S_{Ti}}
  (S_{T1},\dots,S_{T(i-1)},S_{0i},S_{T(i+1)},
  \dots,S_{Tn}\big)\Big)\Big(\frac{S_{Ti}}{S_{0i}}\Big)^\alpha\Big]\,.
\end{displaymath}

Fix an asset number $i\in\{1,\dots,n\}$ and assume now that $\Q$
is a probability measure such that
$\eta\in\QSD_i(\lambda,\alpha)$.  Since $\eta^\alpha$ is positive
integrable, $0<\EQ e^{\frac{\alpha}{2}\zeta_{i}}<\infty$.
Hence, we can define probability measure $\tilde\sE^i$ by
\begin{displaymath}
  \frac{d\tilde\sE^i}{d\Q}
  =\frac{e^{\frac{\alpha}{2}\zeta_{i}}}{\EQ e^{\frac{\alpha}{2}\zeta_{i}}}\,,
  \qquad
  \frac{d\Q}{d\tilde\sE^i}
  =\frac{e^{-\frac{\alpha}{2}\zeta_{i}}}
   {\E_{\tilde\sE^i} e^{-\frac{\alpha}{2}\zeta_{i}}}\,,
\end{displaymath}
i.e.\ the Esscher transform of $\Q$ with parameter $\frac{\alpha}{2}
e_i$ and the corresponding inverse transform.

It is obvious that $\eta\in\QSD_i(\lambda,\alpha)$ is equivalent to
any of the condition of Theorem~\ref{th:mult-eq} for $(e^\lambda\circ
\eta)^\alpha=e^{\alpha\zeta}$.  The following theorem yields a more
direct characterisation.

\begin{theorem}
  \label{th:mult-eq-quasi-sd}
  Let $\eta^\alpha$ be integrable for some $\alpha\neq0$.  Then
  $\eta\in\QSD_i(\lambda,\alpha)$ is equivalent to one of the
  following conditions for $\zeta$ defined from
  $e^\zeta=e^{\lambda}\circ\eta=e^{\lambda+\xi}$.
  \begin{itemize}
  \item[(i)] For any payoff function $f:\EE^n\mapsto\R$ such that
    $\E|f(e^\zeta)|<\infty$
    \begin{equation}
      \label{eq:quasi-sd-gen-sym-zeta}
      \E f(e^\zeta)
      =\E [f(e^{K_i\zeta})e^{\alpha\zeta_{i}}]\,.
    \end{equation}
  \item[(ii)] The distributions of $\zeta$ and $K_i\zeta$ under
    $\tilde\sE^i$ coincide.
  \item[(iii)] For every $u\in\R^n$,
    \begin{displaymath}
      \varphi_{\zeta}^{\tilde\sE^i}(u)=\varphi_{\zeta}^{\tilde\sE^i}(K_i^\top u)
    \end{displaymath}
    or, equivalently,
    \begin{displaymath}
      \varphi_{\xi}^\Q\Big(u-\frac{\alpha}{2}\imagi e_i\Big)
      =\varphi_{\xi}^\Q\Big(K_i^\top u-\frac{\alpha}{2}\imagi e_i\Big)
        e^{-\imagi\lambda_i(\sum_{l=1}^n u_l+u_i)}\,.
    \end{displaymath}
  \end{itemize}
  Moreover, if additionally $\eta$ is integrable, we have that
  $\eta\in\QSD_i(\lambda,\alpha)$ if and only if
  (\ref{eq:quasi-sd-gen-sym-zeta}) holds for $f$ being payoffs from
  basket options with arbitrary strikes and weights of assets.
\end{theorem}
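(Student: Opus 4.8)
The plan is to reduce everything to Theorem~\ref{th:mult-eq} applied to the auxiliary vector $\eta'=(e^{\lambda}\circ\eta)^{\alpha}=e^{\alpha\zeta}$, which is $\Q$-integrable precisely because $\eta^{\alpha}$ is (the constant factor $e^{\alpha\lambda}$ plays no role), and whose underlying log-vector is $\xi'=\alpha\zeta$. By the very definition of $\QSD_{i}(\lambda,\alpha)$ one has $\eta\in\QSD_{i}(\lambda,\alpha)$ if and only if $\eta'\in\SD_{i}$, so it is enough to rewrite conditions (iii), (vii) and (viii) of Theorem~\ref{th:mult-eq} for $\eta'$ as (i), (ii) and (iii) of the present statement. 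Three elementary observations do the work: first, $\kappa_{i}(e^{\alpha\zeta})=e^{\alpha K_{i}\zeta}$, so the $\kappa_{i}$ in Theorem~\ref{th:mult-eq}(iii) for $\eta'$ is governed by $K_{i}$ acting on $\zeta$; second, $K_{i}(\alpha\zeta)=\alpha K_{i}\zeta$ and $\alpha\neq0$; third, the Esscher transform of $\Q$ with parameter $\frac12 e_{i}$ that Theorem~\ref{th:mult-eq} attaches to $\eta'$ has density proportional to $e^{\frac12\xi'_{i}}=e^{\frac{\alpha}{2}\zeta_{i}}$, which, since $\zeta_{i}=\lambda_{i}+\xi_{i}$ and the factor $e^{\frac{\alpha}{2}\lambda_{i}}$ cancels in the normalisation, is exactly $\tilde\sE^{i}$, and one may equally write $d\tilde\sE^{i}/d\Q\propto e^{\frac{\alpha}{2}\xi_{i}}$.

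With these in hand I would translate the three conditions in turn. For (i): Theorem~\ref{th:mult-eq}(iii) for $\eta'$ reads $\E g(e^{\alpha\zeta})=\E[g(e^{\alpha K_{i}\zeta})e^{\alpha\zeta_{i}}]$ for every $\Q$-integrable $g$, and substituting $g(y)=f(y^{1/\alpha})$ (a bijection of $\EE^{n}$ that preserves the integrability constraint) turns this into (\ref{eq:quasi-sd-gen-sym-zeta}), and conversely. For (ii): Theorem~\ref{th:mult-eq}(vii) for $\eta'$ says that $\xi'=\alpha\zeta$ and $K_{i}\xi'=\alpha K_{i}\zeta$ have the same law under $\tilde\sE^{i}$, which, dividing by $\alpha\neq0$, is exactly the coincidence in law of $\zeta$ and $K_{i}\zeta$ under $\tilde\sE^{i}$. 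For (iii): Theorem~\ref{th:mult-eq}(viii) for $\eta'$ is $\varphi_{\alpha\zeta}^{\tilde\sE^{i}}(u)=\varphi_{\alpha\zeta}^{\tilde\sE^{i}}(K_{i}^{\top}u)$; since $\varphi_{\alpha\zeta}^{\tilde\sE^{i}}(u)=\varphi_{\zeta}^{\tilde\sE^{i}}(\alpha u)$ and $K_{i}^{\top}(\alpha u)=\alpha K_{i}^{\top}u$, rescaling the argument gives the first displayed form. The equivalent $\Q$-form follows by writing $\zeta=\lambda+\xi$ and moving to $\Q$ via $d\tilde\sE^{i}/d\Q\propto e^{\frac{\alpha}{2}\xi_{i}}$, so that $\varphi_{\zeta}^{\tilde\sE^{i}}(u)$ equals $e^{\imagi\langle u,\lambda\rangle}\varphi_{\xi}^{\Q}(u-\frac{\alpha}{2}\imagi e_{i})$ up to the constant $\varphi_{\xi}^{\Q}(-\frac{\alpha}{2}\imagi e_{i})$; inserting $u$ and $K_{i}^{\top}u$, cancelling this constant, and using $\langle K_{i}^{\top}u-u,\lambda\rangle=\langle u,K_{i}\lambda-\lambda\rangle$ together with $K_{i}\lambda-\lambda=-\lambda_{i}(1,\dots,1)-\lambda_{i}e_{i}$ produces exactly the stated factor $e^{-\imagi\lambda_{i}(\sum_{l}u_{l}+u_{i})}$.

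For the ``moreover'' assertion, one direction is immediate: when $\eta$ is integrable, every basket payoff $f(x)=(u_{0}+\sum_{l}u_{l}x_{l})_{+}$ is $\Q$-integrable at $x=e^{\zeta}=e^{\lambda}\circ\eta$, so (i), which holds whenever $\eta\in\QSD_{i}(\lambda,\alpha)$, in particular yields (\ref{eq:quasi-sd-gen-sym-zeta}) for such $f$. For the converse I would let $\mu_{1}$ be the law of $e^{\zeta}$ under $\Q$ and $\mu_{2}$ the measure $\mu_{2}(B)=\E[\one_{e^{K_{i}\zeta}\in B}\,e^{\alpha\zeta_{i}}]$, the image of $x_{i}^{\alpha}\mu_{1}(dx)$ under $\kappa_{i}$, so that (\ref{eq:quasi-sd-gen-sym-zeta}) for a given $f$ reads $\int f\,d\mu_{1}=\int f\,d\mu_{2}$. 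Specialising (\ref{eq:quasi-sd-gen-sym-zeta}) to the constant payoff ($u_{0}=1$, $u=0$) gives $\mu_{2}(\EE^{n})=\E e^{\alpha\zeta_{i}}=1$, and specialising it to the coordinate payoffs ($u_{0}=0$, $u=e_{l}$) gives, using the integrability of $\eta$, that $\E[e^{\zeta_{l}+(\alpha-1)\zeta_{i}}]=\E e^{\zeta_{l}}<\infty$ for $l\neq i$ and $\E e^{(\alpha-1)\zeta_{i}}=\E e^{\zeta_{i}}<\infty$, so $\mu_{1}$ and $\mu_{2}$ are probability measures with finite first moments. Consequently the identities $\int f\,d\mu_{1}=\int f\,d\mu_{2}$ over all basket payoffs say precisely that the lift zonoids $Z_{\mu_{1}}$ and $Z_{\mu_{2}}$ have equal support functions, hence coincide; since lift zonoids determine distributions (\cite[Th.~2.21]{mos02}), $\mu_{1}=\mu_{2}$, so (\ref{eq:quasi-sd-gen-sym-zeta}) holds for every $\Q$-integrable $f$, i.e.\ (i) holds and $\eta\in\QSD_{i}(\lambda,\alpha)$.

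Most of the argument is bookkeeping: keeping straight which Esscher measure is which, the cancellation of the $\lambda_{i}$ factor, and the $K_{i}$-algebra. The one step that needs genuine care is the converse of the ``moreover'' claim, where before invoking lift-zonoid uniqueness one must verify that $\mu_{2}$ is a genuine probability measure with finite first moment; this is exactly where the extra hypothesis that $\eta$ itself (not merely $\eta^{\alpha}$) is integrable enters, via the coordinatewise specialisations of (\ref{eq:quasi-sd-gen-sym-zeta}).
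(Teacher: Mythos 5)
Your proposal is correct and follows essentially the same route as the paper: both reduce the statement to Theorem~\ref{th:mult-eq} applied to $e^{\alpha\zeta}$ (with the Esscher measure $\tilde\sE^i$ playing the role of $\sE^i$ and the extra factor $e^{-\imagi\lambda_i(\sum_l u_l+u_i)}$ arising from $\langle K_i^\top u-u,\lambda\rangle$), and both prove the ``moreover'' converse by checking $\E e^{\alpha\zeta_i}=1$, passing to the tilted measure, and invoking the lift-zonoid uniqueness theorem \cite[Th.~2.21]{mos02}. The only cosmetic difference is that you invoke the statement of Theorem~\ref{th:mult-eq} for the auxiliary vector directly, whereas the paper replays its proof with the substitutions $\xi\mapsto\zeta$, $\thf\mapsto\frac{\alpha}{2}$, $\sE^i\mapsto\tilde\sE^i$.
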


Note also that all conditions of Theorem~\ref{th:mult-eq-quasi-sd}
can be written conditionally on a fixed event or conditionally on a
$\sigma$-algebra, cf.\ Remark~\ref{rem:tau-T-mult}. The joint
quasi-self-duality can be achieved by raising the components of
$\eta$ to different powers.

\begin{proof}[Proof of Th.~\ref{th:mult-eq-quasi-sd}]
  For (i) it suffices to note that $\eta$ is quasi-self-dual if and
  only if $\alpha\zeta\in\ESD_i$ and refer to
  (\ref{eq:quasi-sd-gen-sym-dif-cc}) and Theorem~\ref{th:mult-eq}(iii).

  Replace $\eta$ by $e^\lambda\circ\eta$, $\EQ [f(\kappa_i(\eta))\eta_i]$ by
  $\EQ[f(\kappa_i(e^\lambda\circ\eta))(e^{\lambda_i}\eta_{i})^\alpha]$,
  $\sE^i$ by $\tilde\sE^i$, $\xi$ by $\zeta$, and $\thf$ by
  $\frac{\alpha}{2}$ in the proof
  of the equivalence (iii)$\Leftrightarrow$(vii) in
  Theorem~\ref{th:mult-eq} to see that (i) is equivalent to (ii). A similar
  argument yields the equivalence of~(ii) and
  $\varphi_{\zeta}^{\tilde\sE^i}(u)
  =\varphi_{\zeta}^{\tilde\sE^i}(K_i^\top u)$ for all $u\in\R^n$ as
  well as the equivalence of this equation with
  \begin{equation}
    \label{eq:char-f-zeta}
    \varphi_{\zeta}^\Q\Big(u-\frac{\alpha}{2}\imagi e_i\Big)
    =\varphi_{\zeta}^\Q\Big(K_i^\top u-\frac{\alpha}{2}\imagi e_i\Big)
  \end{equation}
  for all $u\in\R^n$. Writing the characteristic functions as
  $\Q$-expectations and using that $\zeta=\lambda+\xi$ yields
  that
  \begin{multline*}
    \EQ \exp\big\{\imagi\langle u-\frac{\alpha}{2}\imagi e_i,\xi\rangle
      +\imagi\langle u-\frac{\alpha}{2}\imagi e_i,\lambda \rangle\big\}
    \\=\EQ \exp\big\{\imagi\langle K_i^\top u-\frac{\alpha}{2}\imagi e_i,\xi\rangle
      +\imagi\langle K_i^\top u-\frac{\alpha}{2}\imagi e_i,\lambda \rangle\big\}\,.
  \end{multline*}
  Dividing by $\exp\{\imagi\langle
  u-\frac{\alpha}{2}\imagi e_i,\lambda \rangle\}$ yields the
  equivalence of the second statement in~(iii)
  and~(\ref{eq:char-f-zeta}).

  If for integrable $\eta=e^{\zeta-\lambda}$
  \begin{equation}
    \label{eq:quasi-on-sup-f}
    \E(u_0+\langle u,e^\zeta\rangle)_+
    =\E\big[(u_0+\langle u,\kappa_i(e^\zeta)\rangle)_+ e^{\alpha\zeta_i}\big]
  \end{equation}
  holds for every $(u_0,u)\in\R^{n+1}$ we first have that
  $\E e^{\alpha\zeta_i}=1$ by letting $u_0=1$ and
  $u_1=u_2=\dots=u_n=0$. Hence, we can define the measure $\P$ by
  \begin{displaymath}
    \frac{d\P}{d\Q}=e^{\alpha\zeta_i}\,,
  \end{displaymath}
  so that
  \begin{displaymath}
   \E(u_0+\langle u,e^\zeta\rangle)_+
    =\E\big[(u_0+\langle u,\kappa_i(e^\zeta)\rangle)_+ e^{\alpha\zeta_i}\big]
    =\E_\P(u_0+\langle u,\kappa_i(e^\zeta)\rangle)_+
  \end{displaymath}
  for every $(u_0,u)\in\R^{n+1}$, i.e., by~\cite[Th.~2.21]{mos02},
  $e^\zeta$ under $\Q$ and $\kappa_i(e^\zeta)$ under $\P$ share the
  same distribution. Hence, a payoff function is $\Q$-integrable
  if and only if $\E_\P|f(\kappa_i(e^{\zeta}))|<\infty$ and for every
  $\Q$-integrable payoff-function we have
  \begin{displaymath}
    \E f(e^\zeta)=\E_\P f(\kappa_i(e^\zeta))
    =\E [f(\kappa_i(e^\zeta))e^{\alpha\zeta_{i}}]\,,
  \end{displaymath}
  i.e.\ we arrive at~(\ref{eq:quasi-sd-gen-sym-zeta}). The other
  implication is obvious.
  \end{proof}

We now use Theorem~\ref{th:mult-eq-quasi-sd} to characterise all
quasi-self-dual $\eta$ such that $\xi=\log\eta$ is infinitely
divisible with the L\'evy-Khintchine representation~(\ref{eq:levy-k}).

\begin{theorem}
  \label{th:inv-div-mult-quasi-sd}
  Let the random vector $\xi=\log\eta$ be infinitely divisible under
  $\Q$ with the generating triplet $(A,\nuQ,\gammaQ)$ and let
  $\eta^\alpha$ be integrable for some $\alpha\neq0$. Then
  $\eta\in\QSD_i(\lambda,\alpha)$ if an only if the following three
  conditions hold.
  \begin{itemize}
  \item[(1)] The matrix $A=(a_{lj})_{lj=1}^n$ satisfies
    $a_{ij}=a_{ji}=\thf a_{ii}$ for all $j=1,\dots,n$, $j\neq i$.
  \item[(2)] The L\'evy measure satisfies
    \begin{equation}
     \label{eq:cond-2}
      d\nuQ(x)=e^{-\alpha x_i}d\nuQ(K_ix)\quad \text{ almost everywhere }
    \end{equation}
    meaning that $\nuQ(B)=\int_{K_iB} e^{\alpha x_i}d\nuQ(x)$ for all Borel
    $B$.
  \item[(3)] The $i$th coordinate of  $\gammaQ$ satisfies
    \begin{equation}
      \label{eq:gamm-=gamm-left-quasi-sd}
      \gammaQ_i
      =\int_{\tn x\tn\leq 1}x_i(1-e^{\frac{\alpha}{2} x_i})\,d\nuQ(x)
      -\frac{\alpha}{2} a_{ii}-\lambda_i\,.
    \end{equation}
  \end{itemize}
\end{theorem}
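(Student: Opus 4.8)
The plan is to reduce the statement to Theorem~\ref{th:mult-eq-quasi-sd}(iii) and then to run the proof of Theorem~\ref{th:inv-div-mult} almost verbatim, with $\thf$ replaced throughout by $\frac{\alpha}{2}$ and with $\xi$ replaced by $\zeta=\lambda+\xi$. Since $\eta^\alpha$ is positive and integrable, $0<\EQ e^{\frac{\alpha}{2}\zeta_i}<\infty$, so the Esscher transform $\tilde\sE^i$ of $\Q$ with parameter $\frac{\alpha}{2}e_i$ and its inverse are well defined; as a change of measure $\tilde\sE^i$ agrees with the Esscher transform of the same parameter relative to $\xi$, the deterministic factor $e^{\frac{\alpha}{2}\lambda_i}$ cancelling. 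By Theorem~\ref{th:mult-eq-quasi-sd}(iii), $\eta\in\QSD_i(\lambda,\alpha)$ is equivalent to $\varphi_\zeta^{\tilde\sE^i}(u)=\varphi_\zeta^{\tilde\sE^i}(K_i^\top u)$ for every $u\in\R^n$, so the task is to translate this identity into conditions on the generating triplet $(A,\nuQ,\gammaQ)$ of $\xi$.

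Next I would record the L\'evy triplet of $\zeta$ under $\tilde\sE^i$. By the invariance of the Gaussian matrix under Esscher transforms (see~\cite{sat90}, \cite[Ex.~7.3]{sat00}) and the standard exponential-tilting formulae, $\xi$ under $\tilde\sE^i$ is infinitely divisible with matrix $A$, L\'evy measure $d\nu^{\tilde\sE^i}(x)=e^{\frac{\alpha}{2}x_i}\,d\nuQ(x)$, and the usual corrected drift; since translating by the constant vector $\lambda$ changes only the drift, $\zeta$ is infinitely divisible under $\tilde\sE^i$ with triplet $(A,\nu^{\tilde\sE^i},\gamma^{\tilde\sE^i})$, where
\begin{displaymath}
  \gamma^{\tilde\sE^i}=\gammaQ+\lambda+\int_{\tn x\tn\leq1}x\big(e^{\frac{\alpha}{2}x_i}-1\big)\,d\nuQ(x)+\tfrac{\alpha}{2}Ae_i\,.
\end{displaymath}
Then, exactly as in the proof of Theorem~\ref{th:inv-div-mult} (writing $\langle K_i^\top u,x\rangle=\langle u,K_ix\rangle$, substituting $x\mapsto K_ix$, using the $K_i$-invariance of $\tnc$ and the self-inverse property of $K_i$, and invoking the uniqueness of the L\'evy-Khintchine representation, see~\cite[Th.~8.1]{sat99}), the identity $\varphi_\zeta^{\tilde\sE^i}(u)=\varphi_\zeta^{\tilde\sE^i}(K_i^\top u)$ for all $u$ is equivalent to the three conditions $A=K_iAK_i^\top$, $\gamma^{\tilde\sE^i}=K_i\gamma^{\tilde\sE^i}$, and the $K_i$-invariance of $\nu^{\tilde\sE^i}$.

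Finally I would rewrite these three conditions as (1)--(3). The equation $A=K_iAK_i^\top$ does not involve $\alpha$ or $\lambda$, so it is equivalent to (1) by the same computation ($K_i=I-K_i'$) as in Theorem~\ref{th:inv-div-mult}. The $K_i$-invariance of $\nu^{\tilde\sE^i}$ reads $\int_B e^{\frac{\alpha}{2}x_i}\,d\nuQ(x)=\int_{K_iB}e^{\frac{\alpha}{2}x_i}\,d\nuQ(x)$ for all Borel $B$; substituting $x\mapsto K_ix$ on the right and using $(K_ix)_i=-x_i$ gives $e^{\frac{\alpha}{2}x_i}\,d\nuQ(x)=e^{-\frac{\alpha}{2}x_i}\,d\nuQ(K_ix)$, i.e.\ $d\nuQ(x)=e^{-\alpha x_i}\,d\nuQ(K_ix)$, which is~(2). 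Since $\gamma^{\tilde\sE^i}=K_i\gamma^{\tilde\sE^i}$ holds iff the $i$th coordinate of $\gamma^{\tilde\sE^i}$ vanishes and the $i$th coordinate of $Ae_i$ is $a_{ii}$, the last condition becomes $\gammaQ_i+\lambda_i+\int_{\tn x\tn\leq1}x_i\big(e^{\frac{\alpha}{2}x_i}-1\big)\,d\nuQ(x)+\tfrac{\alpha}{2}a_{ii}=0$, which is precisely~\eqref{eq:gamm-=gamm-left-quasi-sd}. The converse direction follows by reversing these equivalences, the integrability of $\eta^\alpha$ again ensuring that $\tilde\sE^i$ and its inverse exist.

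I do not anticipate a genuine obstacle: all the substance lies in Theorems~\ref{th:mult-eq-quasi-sd} and~\ref{th:inv-div-mult}, and what is new is only the bookkeeping — tracking the factor $\frac{\alpha}{2}$ through the Esscher formulae and noticing that the carrying-cost vector $\lambda$ enters the L\'evy triplet of $\zeta$ solely through its drift, so that it only modifies condition~(3) by the additive term $-\lambda_i$. The one point that deserves care is that the truncation in~(\ref{eq:levy-k}) is taken in the norm $\tnc$ rather than the Euclidean one; this is harmless since $\tnc$ is equivalent to $\|\cdot\|$ and hence does not alter the integrability condition~(\ref{eq:nu-conditions}), exactly as already exploited in the proof of Theorem~\ref{th:inv-div-mult}.
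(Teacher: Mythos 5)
Your proposal is correct and follows essentially the same route as the paper's proof: reduce to Theorem~\ref{th:mult-eq-quasi-sd}(iii), rerun the argument of Theorem~\ref{th:inv-div-mult} with $\sE^i$, $\xi$, $\thf$ replaced by $\tilde\sE^i$, $\zeta=\lambda+\xi$, $\frac{\alpha}{2}$, and observe that $\lambda$ enters the triplet only through the drift, contributing the $-\lambda_i$ in condition~(3). You merely write out explicitly the Esscher-transformed triplet and the translation of the three invariance conditions, which the paper leaves as a substitution exercise.
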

\begin{proof}
  Denote $\zeta=\lambda+\xi$. Since $0<\EQ
  e^{\frac{\alpha}{2}\zeta_{i}}<\infty$, the Esscher transform
  $\tilde\sE^i$ of $\Q$ with parameter $\frac{\alpha}{2} e_i$ and the
  inverse transform are well defined.  Therefore, $\zeta$ under
  $\tilde\sE^i$ has also an infinitely divisible distribution. By
  using Theorem~\ref{th:mult-eq-quasi-sd}(iii) instead of
  Theorem~\ref{th:mult-eq}(viii) and replacing $\sE^i$ by $\tilde\sE^i$, $\xi$
  by $\zeta$, $\thf$ by $\frac{\alpha}{2}$~(\ref{eq:levy-k})
  in the proof of Theorem~\ref{th:inv-div-mult}, we obtain~(1), (2), and
  \begin{displaymath}
    \gammaQ_i=
    \int_{\tn x\tn\leq 1} x_i(1-e^{\frac{\alpha}{2}x_i})d\nuQ(x)
    -\frac{\alpha}{2}a_{ii}
  \end{displaymath}
  for the generating triplet of $\zeta$ under $\Q$. Since
  $\xi=\zeta-\lambda$ we only have to adjust $\gammaQ_i$ by
  $-\lambda_i$ to finish the proof of the first implication.

  The integrability of $\eta^\alpha$ implies the existence of the
  Esscher transform of $\Q$ with parameter $\frac{\alpha}{2} e_i$. By
  doing this transform and the converse calculations it is easy to
  verify that Theorem~\ref{th:mult-eq-quasi-sd}(iii) applies, i.e.\
  $\eta=e^{\xi}\in\QSD_i(\lambda,\alpha)$.
\end{proof}

Note that condition~(1) is identical to
Theorem~\ref{th:inv-div-mult}(1).  As a consequence of
Theorem~\ref{th:inv-div-mult}(2)
and~\ref{th:inv-div-mult-quasi-sd}(2), we immediately get the
following results.

\begin{cor}
  \label{co:nontr-levy-m}
  Let the random vector $\xi=\log\eta$ be infinitely divisible under
  $\Q$ with non-vanishing L\'evy-measure $\nu$.  Then $\eta$ cannot be
  quasi-self-dual of two different orders with respect to the same
  numeraire.
\end{cor}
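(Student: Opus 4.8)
The plan is to feed both quasi-self-duality assumptions into the L\'evy-measure constraint of Theorem~\ref{th:inv-div-mult-quasi-sd}(2) (which, for order $1$, is Theorem~\ref{th:inv-div-mult}(2)) and then to observe that the two resulting constraints are compatible only when $\nuQ$ is concentrated on the hyperplane $\{x\in\R^n:\,x_i=0\}$. Concretely, I would suppose for contradiction that $\eta\in\QSD_i(\lambda,\alpha)$ and $\eta\in\QSD_i(\lambda',\alpha')$ for some $\lambda,\lambda'\in\R^n$ and orders $\alpha,\alpha'\neq0$ with $\alpha\neq\alpha'$. Since $\xi=\log\eta$ is infinitely divisible, Theorem~\ref{th:inv-div-mult-quasi-sd}(2) applies to each order and gives, almost everywhere,
$$
  d\nuQ(x)=e^{-\alpha x_i}\,d\nuQ(K_ix)
  \qquad\text{and}\qquad
  d\nuQ(x)=e^{-\alpha' x_i}\,d\nuQ(K_ix)\,,
$$
that is, $\nuQ(B)=\int_{K_iB}e^{\alpha x_i}\,d\nuQ(x)=\int_{K_iB}e^{\alpha' x_i}\,d\nuQ(x)$ for every Borel set $B$. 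Note that $\lambda$ and $\lambda'$ do not enter here, so it is irrelevant that the two carrying-cost vectors may differ.

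Next I would compare the two relations. Writing $\mu$ for the image (push-forward) of $\nuQ$ under the self-inverse map $K_i$, both relations present $\nuQ$ as absolutely continuous with respect to $\mu$, with Radon--Nikodym densities $x\mapsto e^{-\alpha x_i}$ and $x\mapsto e^{-\alpha' x_i}$, respectively. Two densities of one and the same measure must agree $\mu$-almost everywhere, so $(\alpha-\alpha')x_i=0$ for $\mu$-almost all $x$, hence $x_i=0$ for $\mu$-almost all $x$ because $\alpha\neq\alpha'$, and therefore $x_i=0$ for $\nuQ$-almost all $x$ because $\nuQ\ll\mu$. (A variant that avoids the image measure: applying $K_i$ in the first relation and using $(K_ix)_i=-x_i$ gives $d\nuQ(K_ix)=e^{\alpha x_i}\,d\nuQ(x)$, so dividing the two relations yields $e^{(\alpha'-\alpha)x_i}=1$, i.e.\ $x_i=0$, for $\nuQ$-almost every $x$.) Either way $\nuQ$ is concentrated on the hyperplane $\{x_i=0\}$, which contradicts the assumption that $\nuQ$ does not vanish, and this finishes the proof. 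Conditions (1) and (3) of Theorem~\ref{th:inv-div-mult-quasi-sd}, i.e.\ the constraints on the Gaussian part $A$ and on $\gammaQ_i$, are never used.

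The step that needs care is the measure-theoretic reading of ``$d\nuQ(x)=e^{-\alpha x_i}\,d\nuQ(K_ix)$'': one must be explicit that $d\nuQ(K_ix)$ denotes the push-forward of $\nuQ$ under the involution $K_i$, and that this relation makes $\nuQ$ absolutely continuous with respect to that push-forward, so that the pointwise comparison of the two exponential densities is legitimate $\nuQ$-almost everywhere (and the exceptional $\nuQ$-null set is disposed of). Granting that bookkeeping, the rest is a one-line cancellation.
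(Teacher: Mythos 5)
Your route is exactly the paper's: the corollary is stated there as an immediate consequence of Theorem~\ref{th:inv-div-mult}(2) and Theorem~\ref{th:inv-div-mult-quasi-sd}(2), with no further argument supplied, and your reading of $d\nuQ(x)=e^{-\alpha x_i}\,d\nuQ(K_ix)$ as an identity of Radon--Nikodym densities with respect to the push-forward of $\nuQ$ under the involution $K_i$ is the right way to make that ``immediate'' step rigorous. The cancellation $e^{(\alpha'-\alpha)x_i}=1$ for $\nuQ$-almost every $x$ is correct, as is your observation that the two carrying-cost vectors and conditions (1) and (3) play no role in this part of the argument.

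The final sentence, however, does not close the loop. You deduce that $\nuQ$ is concentrated on the hyperplane $\{x\in\R^n:\;x_i=0\}$ and declare this to contradict ``$\nuQ$ non-vanishing''. That is a contradiction only when $n=1$, where the hyperplane is $\{0\}$ and $\nuQ(\{0\})=0$ by definition of a L\'evy measure. For $n\geq2$ a non-zero L\'evy measure can perfectly well be carried by $\{x_i=0\}$, and such measures are genuine obstructions to the literal statement: on that hyperplane $K_i$ acts as the identity and $e^{-\alpha x_i}\equiv 1$, so condition (2) of Theorem~\ref{th:inv-div-mult-quasi-sd} holds for \emph{every} $\alpha$, the integral in condition (3) vanishes, and $\gammaQ_i=-\frac{\alpha}{2}a_{ii}-\lambda_i$ can be arranged for any order $\alpha$ by adjusting $\lambda_i$. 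What your computation actually proves is that two distinct orders force $\nuQ(\{x:\;x_i\neq0\})=0$; to obtain the corollary one must read the hypothesis ``non-vanishing'' as $\nuQ(\{x:\;x_i\neq0\})>0$ (equivalently, the $i$th marginal L\'evy measure is non-zero). This imprecision is already present in the paper's statement, but your proof should have recorded that the argument only yields concentration on the hyperplane and flagged the needed strengthening of the hypothesis, rather than asserting a contradiction that is not there in dimension $n\geq2$.
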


\begin{cor}
  \label{cor:all-T}
  If $\xi_t$, $t\geq0$, is the L\'evy process with generating triplet
  $(A,\nuQ,\gammaQ)$ that satisfies the conditions of
  Theorem~\ref{th:inv-div-mult-quasi-sd}, then
  $e^{\xi_t}\in\QSD_i(\lambda t,\alpha)$ for all $t\geq0$.
\end{cor}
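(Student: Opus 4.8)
The plan is to reduce the statement to Theorem~\ref{th:inv-div-mult-quasi-sd} applied to the random vector $\xi_t$ for each fixed $t\geq0$. Recall the standard fact (see~\cite[Th.~8.1]{sat99}) that if $\xi_t$ is a L\'evy process whose value at time one has generating triplet $(A,\nuQ,\gammaQ)$ with respect to the truncation function $\one_{\tn x\tn\leq1}$, then $\xi_t$ is infinitely divisible with generating triplet $(tA,\,t\nuQ,\,t\gammaQ)$ with respect to the \emph{same} truncation function; this follows from $\varphi_{\xi_t}^\Q(u)=(\varphi_{\xi_1}^\Q(u))^t$ together with the L\'evy--Khintchine representation~(\ref{eq:levy-k}). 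The decisive point is that the truncation region $\tn x\tn\leq1$ does not depend on $t$, so that no correction terms appear in the drift.

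First I would dispose of the integrability hypothesis of Theorem~\ref{th:inv-div-mult-quasi-sd}: since $e^{\alpha\xi_1}=(e^{\xi_1})^\alpha$ is integrable, its components satisfy $\int_{\tn x\tn>1}e^{\alpha x_j}\,d\nuQ(x)<\infty$ for $j=1,\dots,n$ by~\cite[Th.~25.17]{sat99}, which is a condition on $\nuQ$ alone. It therefore holds verbatim for $t\nuQ$, and the same criterion yields integrability of $e^{\alpha\xi_t}=(e^{\xi_t})^\alpha$ for every $t\geq0$.

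Next I would check that the triplet $(tA,\,t\nuQ,\,t\gammaQ)$ together with the carrying-cost vector $\lambda t$ satisfies conditions~(1)--(3) of Theorem~\ref{th:inv-div-mult-quasi-sd}, given that $(A,\nuQ,\gammaQ)$ and $\lambda$ do. Condition~(1) is a collection of linear identities among the entries of $A$ and is preserved under multiplication by $t$. Condition~(2), namely $d\nuQ(x)=e^{-\alpha x_i}\,d\nuQ(K_ix)$, is homogeneous in $\nuQ$ and so passes to $t\nuQ$. For condition~(3) one multiplies
\begin{displaymath}
  \gammaQ_i=\int_{\tn x\tn\leq1}x_i\big(1-e^{\frac{\alpha}{2}x_i}\big)\,d\nuQ(x)
  -\frac{\alpha}{2}a_{ii}-\lambda_i
\end{displaymath}
by $t$ and observes that every term on the right is homogeneous of degree one in $(\nuQ,A,\lambda)$, so that the $i$th coordinate $t\gammaQ_i$ of the drift of $\xi_t$ equals
\begin{displaymath}
  \int_{\tn x\tn\leq1}x_i\big(1-e^{\frac{\alpha}{2}x_i}\big)\,d(t\nuQ)(x)
  -\frac{\alpha}{2}(ta_{ii})-(\lambda t)_i\,,
\end{displaymath}
which is exactly condition~(3) for $\xi_t$ with carrying costs $\lambda t$. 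Theorem~\ref{th:inv-div-mult-quasi-sd} then gives $e^{\xi_t}\in\QSD_i(\lambda t,\alpha)$, as claimed.

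The argument is essentially bookkeeping; the only place calling for care is the simultaneous scaling of the carrying-cost vector to $\lambda t$, which is precisely what makes the $\lambda_i$-term in~(3) scale consistently with the integral and the $a_{ii}$-term (had the truncation region depended on $t$, the drift would scale less cleanly). Alternatively one could bypass the triplet computation and work with Theorem~\ref{th:mult-eq-quasi-sd}(iii): raising the identity $\varphi_{\xi_1}^\Q(u-\frac{\alpha}{2}\imagi e_i)=\varphi_{\xi_1}^\Q(K_i^\top u-\frac{\alpha}{2}\imagi e_i)\,e^{-\imagi\lambda_i(\sum_{l}u_l+u_i)}$ to the power $t$, and using $\varphi_{\xi_t}^\Q=(\varphi_{\xi_1}^\Q)^t$ (valid also on the relevant strip by~\cite[Th.~25.17]{sat99}), turns the factor $e^{-\imagi\lambda_i(\sum_l u_l+u_i)}$ into $e^{-\imagi\,t\lambda_i(\sum_l u_l+u_i)}$, i.e.\ replaces $\lambda$ by $\lambda t$; the triplet route is, however, more transparent about the role of each hypothesis.
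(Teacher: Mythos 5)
Your argument is correct, but your primary route is genuinely different from the paper's. The paper's proof is a one-liner: it notes $\varphi_{\xi_t}^\Q(u)=(\varphi_{\xi_1}^\Q(u))^t$ and raises the characteristic-function identity of Theorem~\ref{th:mult-eq-quasi-sd}(iii) to the power $t$, which turns the factor $e^{-\imagi\lambda_i(\sum_{l}u_l+u_i)}$ into $e^{-\imagi t\lambda_i(\sum_{l}u_l+u_i)}$ and hence replaces $\lambda$ by $\lambda t$ --- exactly the alternative you sketch in your closing sentences. Your main argument instead scales the generating triplet to $(tA,\,t\nuQ,\,t\gammaQ)$ and verifies conditions (1)--(3) of Theorem~\ref{th:inv-div-mult-quasi-sd} directly, together with the integrability hypothesis via \cite[Th.~25.17]{sat99}. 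The triplet route is longer but makes explicit why each hypothesis is homogeneous of degree one in $(A,\nuQ,\gammaQ,\lambda)$ and therefore preserved, and it correctly isolates the point that the truncation region $\tn x\tn\leq 1$ is $t$-independent, so no drift corrections appear; it also treats carefully two points the paper's proof leaves implicit, namely that the exponential moment condition for $e^{\alpha\xi_t}$ propagates from $t=1$ to all $t$, and that the identity in (iii) is evaluated at complex arguments so that $\varphi_{\xi_t}^\Q=(\varphi_{\xi_1}^\Q)^t$ must be justified on the relevant strip. The paper's route buys brevity; yours buys transparency about the role of each condition.
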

\begin{proof}
  It suffices to note that $\phi_{\xi_t}^\Q(u)=(\phi_{\xi_1}^\Q(u))^t$
  for all $t\geq0$ and raise the corresponding identity from
  Theorem~\ref{th:mult-eq-quasi-sd}(iii) into power $t$.
\end{proof}

\begin{remark}[L\'evy measures in the quasi-self-dual case]
  \label{rem:l-measure}
  In order to construct a L\'evy measure $\nu$ satisfying
  (\ref{eq:cond-2}), note that
  \begin{displaymath}
    e^{\frac{\alpha}{2}x_i}d\nu(x)=e^{\frac{\alpha}{2}(K_ix)_i}d\nu(K_ix)\,,
  \end{displaymath}
  meaning that the measure $\nu_0$ with density
  $\frac{d\nu_0}{d\nu}(x)=e^{\frac{\alpha}{2}x_i}$ is
  $K_i$-invariant. Therefore, in the background one always needs to
  have a $K_i$-invariant L\'evy measure.

  Since the Lebesgue measure on $\R^n$ is $K_i$-invariant, a simple
  example of $\nu_0$ is provided by the Lebesgue
  measure restricted onto $B_R$, where $B_R=\{x:\tn x\tn\leq R\}$ is
  the ball of radius $R$ in the $\tnc$ norm. A further implication of
  the $K_i$-invariance property of the Lebesgue measure on $\R^n$ is that
  the Lebesgue density $p_{\nu_0}$ of an absolutely continuous
  $K_i$-invariant measure $\nu_0$ is also $K_i$-invariant, i.e.\
  $p_{\nu_0}(x)=p_{\nu_0}(K_i x)$ for almost every $x\in\R^n$.
  Then~(\ref{eq:cond-2}) can be equivalently written as
  $p_\nu(x)=e^{-\alpha x_i}p_\nu(K_ix)$.  Clearly,
  condition~(\ref{eq:nu-conditions}) is always satisfied for a finite
  $\nu$ without atom at the origin, which then yields the compound
  Poisson part of $\xi$ from $\eta=e^\xi\in\QSD_i(\lambda,\alpha)$.
  The integrability condition on $\eta^\alpha$ additionally requires
  that
  \begin{displaymath}
    \int_{\|x\|>1}
    e^{\alpha x_j}e^{-\frac{\alpha}{2}x_i}\,d\nu_0(x)<\infty\,,\quad j=1,\dots,n\,,
  \end{displaymath}
  see~\cite[Th.~25.17]{sat99}.
\end{remark}

\begin{remark}[Determining $\alpha$ from the carrying costs in the
  risk-neutral case]
  \label{rem:alpha}
  Assume that $\E\eta_j=1$ for all $j=1,\dots,n$ and
  $\eta\in\QSD_i(\lambda,\alpha)$ with given $\lambda$. Since
  $\phi_\xi^\Q(-\imagi e_j)=\E\eta_j=1$, we see that
  \begin{equation}
    \label{eq:stand-gamma}
    \gammaQ_j=-\int_{\R^n}(e^{x_j}-1-x_j\one_{\tn x\tn\leq 1})
      d\nuQ(x)-\thf a_{jj}\,,\quad j=1,\dots,n\,.
  \end{equation}
  If $\alpha=1$, then the above condition for $j=i$ yields
  (\ref{eq:gamm-=gamm-left}) (or (\ref{eq:gamm-=gamm-left-quasi-sd})
  for $\alpha=1$ and $\lambda=0$). Indeed, it suffices to check that
  \begin{multline*}
    \int_{\R^n}(1-e^{x_i}+x_ie^{\thf x_i}\one_{\tn x\tn\leq 1})\,d\nuQ(x)\\
    =\int\limits_{\{x_i<0\}}(1-e^{x_i}+x_ie^{\thf x_i}
      \one_{\tn x\tn\leq 1})e^{-x_i}\,d\nuQ(K_ix)
    +\int\limits_{\{x_i>0\}}(1-e^{x_i}+x_ie^{\thf x_i}
      \one_{\tn x\tn\leq 1})\,d\nuQ(x)\\
    =\int\limits_{\{y_i>0\}}(e^{y_i}-1-y_ie^{\thf y_i}
      \one_{\tn y\tn\leq 1})\,d\nuQ(y)
    +\int\limits_{\{x_i>0\}}(1-e^{x_i}+x_ie^{\thf x_i}
      \one_{\tn x\tn\leq 1})\,d\nuQ(x)=0\,.
  \end{multline*}
  However, for non-vanishing $\lambda$ we need to
  combine~(\ref{eq:stand-gamma}) with~(\ref{eq:gamm-=gamm-left-quasi-sd})
  to see that $\alpha$ must satisfy
   \begin{displaymath}
    -\int_{\R^n}(e^{x_i}-1-x_i\one_{\tn x\tn\leq 1})d\nuQ(x)-\thf a_{ii}
    =\int\limits_{\tn x\tn\leq 1} x_i(1-e^{\frac{\alpha}{2} x_i})\,d\nuQ(x)
    -\frac{\alpha}{2} a_{ii}-\lambda_i\,,
  \end{displaymath}
  or, equivalently,
  \begin{equation}
    \label{eq:levy-p}
    a_{ii}\alpha=a_{ii}-2\lambda_i+2
    \int_{\R^n}(e^{x_i}-1-x_ie^{\frac{\alpha}{2}x_i}\one_{\tn x\tn\leq 1})d\nuQ(x)\,.
  \end{equation}
  It should be noted that in the L\'evy processes setting from
  Corollary~\ref{cor:all-T} the values of $\alpha$ calculated for all
  $t\geq0$ coincide.
\end{remark}

\begin{remark}[Finite mean case]
  \label{rem:qsd-fm}
  Assume that $\eta=e^\xi\in\QSD_i(\lambda,\alpha)$.  If, as in
  Remark~\ref{rem:fmc}, $\xi$ has finite mean, then
  (\ref{eq:gamm-=gamm-left-quasi-sd}) is replaced by
  \begin{equation}
    \label{eq:mu-condition-quasi-sd}
    \muQ_i=\int_{\R^n} x_i(1-e^{\frac{\alpha}{2} x_i})\,d\nuQ(x)
    -\frac{\alpha}{2} a_{ii}-\lambda_i\,,
  \end{equation}
  where $\muQ$ is the expectation of $\xi$. If $\Q$ is a martingale
  measure for $\eta_i$, then $\varphi_\xi^\Q(-\imagi e_i)=\E
  e^{\xi_i}=1$ yields that
  \begin{equation}
    \label{eq:mu-i-finite-mean-c}
    \mu_i=-\int_{\R^n}(e^{x_i}-1-x_i)d\nu(x)-\thf a_{ii}\,.
  \end{equation}
  Combining~(\ref{eq:mu-condition-quasi-sd})
  with~(\ref{eq:mu-i-finite-mean-c}) yields
  \begin{align}
    a_{ii}\alpha&=a_{ii}-2\lambda_i+2
    \int_{\R^n}(e^{x_i}-1-x_ie^{\frac{\alpha}{2}x_i})d\nuQ(x)
    \nonumber\\
    \label{eq:levy-p-finit-mean}
    &=a_{ii}-2\lambda_i+2
    \int_{\R}(e^{x_i}-1-x_ie^{\frac{\alpha}{2}x_i})d\nuQ_i(x_i)\,,
  \end{align}
  where $\nuQ_i$ is the marginal L\'evy measure defined by
  $\nu_i(B)=\nu(\{x\in\R^n:\; x_i\in B\})$ for Borel $B\subset\R$,
  $0\notin B$, see~\cite[Prop.~11.10]{sat99}.

  Compared to~(\ref{eq:levy-p}), Equation~(\ref{eq:levy-p-finit-mean})
  yields a considerable simplification in calculating $\alpha$.
  Since $\nu_i$ is the L\'evy measure corresponding to $\eta_i$, it is
  possible to calculate $\alpha$ from only the distribution of the
  $i$th component of $\eta$ and the corresponding carrying costs
  $\lambda_i$.

  In the purely non-Gaussian case (i.e.\ if $A$ vanishes) it is useful
  to write the integral in (\ref{eq:levy-p-finit-mean}) as its
  principal value. Then the principal value of the integral of
  $x_ie^{\frac{\alpha}{2}x_i}$ vanishes, since
  $d\nu_i(x_i)=e^{-\frac{\alpha}{2}x_i}d\nu_{0i}(x_i)$ for a symmetric
  measure $\nu_{0i}$, and
  \begin{align*}
   \lambda_i&=\pvint (e^{x_i}-1)d\nu_i(x_i)
   =\pvint (e^{x_i}-1)e^{-\frac{\alpha}{2}x_i}d\nu_{0i}(x_i)\\
   &=\pvint (e^{(1-\frac{\alpha}{2})x_i}-e^{-\frac{\alpha}{2}x_i})d\nu_{0i}(x_i)\,.
  \end{align*}
  If $\nu_{0i}$ has a finite Laplace transform $\psi$ on the real
  line, then $\alpha$ solves
  \begin{displaymath}
    \lambda_i=\psi(1-\frac{\alpha}{2})-\psi(-\frac{\alpha}{2})\,.
  \end{displaymath}
\end{remark}

\begin{example}[Log-normal model with carrying costs]
  \label{ex:lnc-cc}
  By Corollary~\ref{co:nontr-levy-m}, among all log-infinitely
  divisible distributions only the log-normal one can be
  quasi-self-dual of two orders with respect to the
  same numeraire. Applying~(\ref{eq:levy-p}) for the univariate
  log-normal case with $a_{ii}=\sigma^2>0$ (and vanishing $\nu$) yields that
  \begin{displaymath}
    \alpha=1-\frac{2\lambda}{\sigma^2}\,,
  \end{displaymath}
  as stated in~\cite{car:cho97,car:cho02,car:lee08}. Hence, the
  univariate log-normal distribution in the Black-Scholes setting is
  self-dual and quasi-self-dual of order $1-\frac{2\lambda}{\sigma^2}$
  at the same time. By~(\ref{eq:levy-p}), this is also true for
  multivariate log-normal models from Example~\ref{ex:log-n-mult}
  being self-dual with respect to the $i$th numeraire, i.e.\ this
  distribution is at the same time quasi-self-dual of order
  $\alpha=1-2\lambda_i/a_{ii}$ with respect to the same numeraire.
\end{example}

\begin{example}[Determining $\alpha$ for non-trivial L\'evy measures]
  \label{ex:levym-from-normal}
  Start with the univariate case (i.e.\ $n=1$) and choose $\nu_0$ from
  Remark~\ref{rem:l-measure} to be the centred Gaussian measure with
  variance $\beta^2>0$.  If normalised to have the total mass one, $\nu$
  becomes the density of the normal law with mean
  $-\,\frac{\alpha\beta^2}{2}$ and variance $\beta^2$.
  Solving~(\ref{eq:levy-p}) or
  equivalently~(\ref{eq:levy-p-finit-mean}) for this particular
  measure $\nu$ and $a_{ii}=\sigma^2>0$ yields that
  \begin{displaymath}
    \alpha=\frac{1}{\beta^2\sigma^2}
    \left(2\mathrm{LambertW}\Big(\frac{\beta^2}{\sigma^2}
      \exp\big\{\frac{\beta^2(\lambda+1)}{\sigma^2}\big\}\Big)
      \sigma^2+\beta^2\sigma^2-2\beta^2\lambda-2\beta^2\right)\,,
  \end{displaymath}
  where $\mathrm{LambertW}(x)=g(x)$ is the principal branch of the
  $\mathrm{LambertW}$ function that satisfies
  $g(x)e^{g(x)}=x$ for all $x$.
  In the purely non-Gaussian case the required power is given by
  \begin{displaymath}
    \alpha=1-\frac{2}{\beta^2}\log(1+\lambda)\,.
  \end{displaymath}

  In the multivariate case we start with $\nu_0$ being the centred
  Gaussian law having positive definite covariance matrix $B$ that satisfies
  Theorem~\ref{th:inv-div-mult-quasi-sd}(1) for some fixed $i$ and
  define measure $\nu$ with density
  \begin{displaymath}
    \frac{d\nu}{d\nu_0}(x)=e^{-\frac{\alpha}{2}x_i}\,.
  \end{displaymath}
  Then~(\ref{eq:cond-2}) holds and the $i$th
  marginal $\nu_i$ of $\nu$ has the density $e^{-\frac{\alpha}{2}x_i}$
  with respect to the $i$th marginal of $\nu_0$, the latter being the
  centred Gaussian law with variance $\beta^2=b_{ii}$. Since the $i$th
  marginal for the normalised $\nu$ coincides with the L\'evy measure
  constructed above in the univariate case, we obtain the same
  $\alpha$ as in the univariate case with $\beta=\sqrt{b_{ii}}$.
\end{example}

\section{Distributions of self-dual random variables}
\label{sec:symm-distr}

\subsection{Characterisation and examples}
\label{sec:char-exampl}

In this section we specialise the results from
Section~\ref{sec:mult-symm} for studying self-dual random
\emph{variables}. Denote by $\bar{F}(x)=\P(\eta>x)$ the tail of the
cumulative distribution function of a positive random variable
$\eta$ and by
\begin{displaymath}
  \bar{F}_I(z)=\int_0^z \bar{F}(t)dt\,,\quad z\geq0\,,
\end{displaymath}
the \emph{integrated tail}. Note that $\bar{F}_I(0)=0$ and
$\bar{F}_I(\infty)=1$ in case $\E\eta=1$.

\begin{theorem}
  \label{thr:symm}
  An integrable positive random variable $\eta$ is self-dual if and
  only if $\bar{F}_I(\infty)=1$ and
  \begin{equation}
    \label{eq:gen-sym-funct}
    z\bar{F}_I(z^{-1})=\bar{F}_I(z)\quad\text{ for all } z>0\,.
  \end{equation}
\end{theorem}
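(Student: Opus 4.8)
The plan is to reduce self-duality of $\eta$ to the single family of vanilla-option identities $\E(\eta-k)_+=\E(1-k\eta)_+$, $k>0$, and then rewrite both sides in terms of the integrated tail $\bar F_I$, at which point (\ref{eq:gen-sym-funct}) comes out directly.

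First I would observe that $\eta$ is self-dual if and only if $\E\eta=1$ and $\E(\eta-k)_+=\E(1-k\eta)_+$ for every $k>0$. This follows from the description of $Z_\eta$ through its support function: by (\ref{eq:support-f-of-Q}) one has $\E(\eta-k)_+=h_{Z_\eta}(-k,1)$ and $\E(1-k\eta)_+=k\,\E(k^{-1}-\eta)_+=h_{Z_\eta}(1,-k)$, so the displayed identity, together with positive homogeneity, is exactly the symmetry $h_{Z_\eta}(u_0,u_1)=h_{Z_\eta}(u_1,u_0)$ on the two quadrants in which one coordinate is negative; on the first quadrant $h_{Z_\eta}(u_0,u_1)=u_0+u_1\E\eta$, which is symmetric precisely when $\E\eta=1$, and on the third quadrant $h_{Z_\eta}$ vanishes. (Alternatively, apply Theorem~\ref{th:gen-sym} with $F=1$ to $f(x)=(x-k)_+$, noting $(\eta^{-1}-k)_+\,\eta=(1-k\eta)_+$.)

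Next I would invoke the elementary stop-loss formulas, valid for any positive integrable $\eta$: for $a\ge0$, $\E(\eta-a)_+=\int_a^\infty\bar F(t)\,dt=\bar F_I(\infty)-\bar F_I(a)$, and for $b\ge0$, $\E(b-\eta)_+=\int_0^b\P(\eta<s)\,ds=b-\bar F_I(b)$. Substituting $a=k$ and $b=k^{-1}$ turns $\E(\eta-k)_+=k\,\E(k^{-1}-\eta)_+$ into
\begin{equation*}
  \bar F_I(\infty)-\bar F_I(k)=1-k\,\bar F_I(k^{-1})\qquad\text{for every }k>0 .
\end{equation*}
Since $\bar F_I(\infty)=\E\eta$, combining this with the first step gives that $\eta$ is self-dual if and only if $\bar F_I(\infty)=1$ and $\bar F_I(k)=k\,\bar F_I(k^{-1})$ for all $k>0$, which is the assertion; reading the chain backwards proves the converse. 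I do not anticipate a genuine obstacle: the two stop-loss identities are Fubini/layer-cake computations, and the only point needing care is the first reduction — in particular why $\E\eta=1$ (equivalently $\bar F_I(\infty)=1$) must be imposed to control the positive quadrant of $\R^2$, which is exactly why it appears in the statement.
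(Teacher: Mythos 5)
Your argument is correct, and every step checks out: the reduction of self-duality to $\E\eta=1$ together with $\E(\eta-k)_+=\E(1-k\eta)_+$ for all $k>0$ is a legitimate quadrant-by-quadrant reading of $h_{Z_\eta}(u_0,u_1)=h_{Z_\eta}(u_1,u_0)$, the two layer-cake identities $\E(\eta-a)_+=\bar F_I(\infty)-\bar F_I(a)$ and $\E(b-\eta)_+=b-\bar F_I(b)$ are standard (the atoms of $\eta$ form a Lebesgue-null set, so $\int_0^b\P(\eta\geq s)\,ds=\int_0^b\bar F(s)\,ds$), and the chain of equivalences runs in both directions. The route is, however, not the one the paper takes. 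The paper's proof rests on the single identity $\bar F_I(z)=\E\min(\eta,z)$: the forward implication is then one application of the general payoff symmetry (Theorem~\ref{th:gen-sym}, condition (iii) of Theorem~\ref{th:mult-eq}) to $f(x)=\min(x,z)$, giving $\E\min(\eta,z)=\E[\min(\eta^{-1},z)\eta]=\E\min(1,z\eta)=z\bar F_I(z^{-1})$ at a stroke, while the converse passes from $\min$ to $\max$ via $\E\eta=1$ and lands on the max-zonoid symmetry~(\ref{eq:bs-sym}). You instead test the symmetry on the vanilla payoffs $(x-k)_+$ and translate through the stop-loss formulas. What the paper's choice of $\min(\cdot,z)$ buys is that $\bar F_I$ appears as an expected payoff directly, so no subtraction of $\bar F_I(\infty)$ is needed and the functional equation drops out in one line; what your version buys is independence from the heavier Theorem~\ref{th:gen-sym} in the forward direction (you only mention it as an alternative) and a single reversible chain that handles both implications at once, at the cost of having to track the first-quadrant condition $\E\eta=1$ explicitly --- which, as you correctly note, is exactly why $\bar F_I(\infty)=1$ appears in the statement.
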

\begin{proof}
  It is easy to check that
  \begin{displaymath}
    \bar{F}_I(z)=\E \min(\eta,z)\,,\quad z\geq 0\,.
  \end{displaymath}
  Now apply Theorem~\ref{th:gen-sym} or Theorem~\ref{th:mult-eq}~(iii)
  to the payoff function $f(\eta)=\min(\eta,z)$ to see that
  \begin{displaymath}
    \bar{F}_I(z)=\E\min(\eta,z)=\E[\min(\eta^{-1},z)\eta]
    =\E\min(1,z\eta)=z\bar{F}_I(z^{-1})\,.
  \end{displaymath}

  In the opposite direction, (\ref{eq:gen-sym-funct}) yields that
  \begin{displaymath}
    \E\max(\eta,z)=\E[\eta+z-\min(\eta,z)]
    =\E[1+z\eta-\min(1,z\eta)]=\E\max(1,z\eta)\,,
  \end{displaymath}
  i.e.\ by rescaling (cf.~Remark~\ref{rem:reduce-c}) we arrive at
  the self-duality property (\ref{eq:bs-sym}).
\end{proof}

Theorem~\ref{co:mult-dens} in the univariate case yields the following
result, which is known from \cite[Ex.~8]{schr99}.

\begin{cor}
  \label{cor:univ}
  Let $\eta$ be a positive integrable random variable with
  distribution $\Q$.
  \begin{itemize}
  \item[(a)] If $\eta$ is absolutely continuous with probability density
    $p_\eta$, then $\eta$ is self-dual if and only if
    \begin{equation}
      \label{eq:dens-cond}
      p_\eta(x)=x^{-3}\,p_\eta(x^{-1})\quad\text{ for almost all } x>0\,.
    \end{equation}
    If $\xi=\log\eta$, the self-duality of $\eta$ (i.e.\ the
    exponential self-duality of $\xi$) is equivalent to
    \begin{displaymath}
      p_\xi(x)=e^{-x}\, p_\xi(-x) \quad\text{ for almost all } x\in\R\,.
    \end{displaymath}
  \item[(b)] If $\eta$ has a discrete distribution, then $\eta$ is
    self-dual if and only if $\Q(\eta=x^{-1})=x\Q(\eta=x)$ for each
    atom $x$ of $\eta$.
  \end{itemize}
\end{cor}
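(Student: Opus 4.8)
The plan is to obtain Corollary~\ref{cor:univ} as the one-dimensional instance ($n=1$, $i=1$) of Theorem~\ref{co:mult-dens}, once it is observed that the self-duality of a positive random variable introduced in Section~\ref{sec:symm-prop-lift} coincides with the property $\eta\in\SD_1$ studied in Section~\ref{sec:mult-symm}.

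First I would record the relevant specialisations of the notation from~(\ref{eq:def-Ki}). For $n=1$, $i=1$, the map $\kappa_1$ is $\kappa_1(x)=x^{-1}$, the linear map is $K_1x=-x$, and $\pi_1$ simply swaps the two coordinates of a vector in $\R^2$, i.e.\ it is the reflection at the line bisecting the first quadrant. Hence Theorem~\ref{th:mult-eq}(iv) reads $\pi_1(Z_\eta)=Z_\eta$, which is exactly~(\ref{eq:tilde-z_eta=z_eta-}); equivalently, by Theorem~\ref{th:mult-eq}(iii), $\eta\in\SD_1$ iff $\EQ f(\eta)=\EQ[f(\eta^{-1})\eta]$ for every integrable payoff, which is the condition of Theorem~\ref{th:gen-sym}. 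So ``self-dual'' and ``$\SD_1$'' name the same notion, and all the equivalent conditions of Theorem~\ref{th:mult-eq} together with Theorem~\ref{co:mult-dens} apply.

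Next I would simply quote Theorem~\ref{co:mult-dens} with $n=1$, $i=1$. For part~(a), since $n+2=3$ and $\kappa_1(x)=x^{-1}$, condition~(\ref{eq:mult-eta-dens}) becomes $p_\eta(x)=x^{-3}p_\eta(x^{-1})$, which is~(\ref{eq:dens-cond}), while~(\ref{eq:mult-xi-dens}) becomes $p_\xi(x)=e^{-x}p_\xi(-x)$. A self-contained check of the $\xi$-form, if wanted, is the routine density transformation: substitute $x=e^y$ in~(\ref{eq:dens-cond}) and use $p_\xi(y)=e^y p_\eta(e^y)$, exactly as at the end of the proof of Theorem~\ref{co:mult-dens}. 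For part~(b), the atomic condition $\Q(\eta=\kappa_1(x))=x_1\Q(\eta=x)$ of Theorem~\ref{co:mult-dens}(b) becomes $\Q(\eta=x^{-1})=x\,\Q(\eta=x)$ for every atom $x$ of $\eta$.

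There is essentially no obstacle here; the only point that deserves a line of justification is the identification of the two notions of self-duality, and that is immediate from the description of $\pi_1$ as the reflection at the diagonal. I would present the reduction to Theorem~\ref{co:mult-dens} as the main argument. Alternatively, and independently of Section~\ref{sec:mult-symm}, one could derive part~(a) directly from Theorem~\ref{thr:symm} or Theorem~\ref{th:gen-sym} by testing~(\ref{eq:gen-sym}) against indicators of intervals $[a,b]$ and differentiating in $b$ (dominated convergence), and part~(b) by testing against indicators of single atoms; I would mention this only as a remark.
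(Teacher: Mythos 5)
Your proposal is correct and follows exactly the paper's route: the paper gives no separate proof of Corollary~\ref{cor:univ} beyond the remark that it is Theorem~\ref{co:mult-dens} specialised to $n=1$, $i=1$, which is precisely your reduction (with $\kappa_1(x)=x^{-1}$, $K_1x=-x$, and exponent $n+2=3$). Your explicit identification of the univariate self-duality of Section~\ref{sec:symm-prop-lift} with the property $\eta\in\SD_1$ is a helpful clarification that the paper leaves implicit, but it does not change the argument.
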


Clearly, if the density $p_\eta$ is continuous, then
(\ref{eq:dens-cond}) holds for all $x>0$. For instance, the
probability density of the log-normal distribution of mean one
satisfies~(\ref{eq:dens-cond}). It is also satisfied by mixtures
of log-normal densities that appear in the (uncorrelated)
Hull-White stochastic volatility model,
see~\cite[Th.~3.1]{gul:stein06}. The self-duality property of
stochastic volatility models is explored in
\cite[Th.~3.1]{car:lee08}.

\begin{example}[Log-normal model]
  \label{eg:B-S}
  If $S_T=F\eta$ has the log-normal distribution, the Black-Scholes
  formula yields that
  \begin{equation}
    \label{eq:hr-norm}
    \E\max(F\eta,k)=F\Phi(d')+k\Phi(d'')\,,
  \end{equation}
  where $k,F>0$,
  \begin{displaymath}
    d'=\lambda+\frac{1}{2\lambda}\,\log\frac{F}{k}\,,\quad
    d''=\lambda-\frac{1}{2\lambda}\,\log\frac{F}{k}\,,\quad
    \lambda=\thf\sigma\sqrt T\,,
  \end{displaymath}
  and $\Phi$ is the cumulative distribution function for the standard
  normal variable. Note that the conventional Black-Scholes formula is
  obtained by subtracting $k$ from~(\ref{eq:hr-norm}) and then
  discounting. By looking at the right-hand side of~(\ref{eq:hr-norm})
  it is easy to see that it is symmetric with respect to $F$ and $k$,
  i.e.\ $\eta$ is a self-dual random variable.

  The right-hand side of~(\ref{eq:hr-norm}) defines a (symmetric) norm
  on $\R_+^2$ called the \emph{H\"usler-Reiss norm} of $x=(k,F)$,
  see~\cite{mo08e}.  Thus, the derivative given by the maximum of the
  asset price and the strike has the price given by the discounted
  norm of the vector composed of the forward and the strike.
  Notably, expression~(\ref{eq:hr-norm}) appears in the
  literature on extreme values, see~\cite{hues:reis89}, as the limit
  distribution of coordinatewise maxima for triangular arrays of
  bivariate Gaussian vectors with correlation $\varrho(n)$ that
  approaches one with rate $(1-\varrho(n))\log n\to\lambda^2\in
  [0,\infty]$ as $n\to\infty$.
\end{example}

In order to construct further examples of probability density
functions $p_\eta$ that satisfy~(\ref{eq:dens-cond}) it suffices to
define $p_\eta(x)$ for $x\geq1$ and then extend it for $x\in(0,1)$
using~(\ref{eq:dens-cond}) with a subsequent normalisation to ensure
that the total mass is one. Clearly, one has to bear in mind that
$\E\eta=1$ presumes the integrability of $xp_\eta(x)$ (alongside with
$p_\eta(x)$ itself) at zero and infinity.

\begin{example}[Self-dual random variables with heavy tails]
  \label{ex:havy-tailed}
  The log-normal distribution has a light tail at infinity. It is
  possible to construct a self-dual \emph{heavy-tail} distribution by
  setting
  \begin{equation}
    \label{eq:ex-dens}
    p(x) = \\
    \begin{cases}
      c_\gamma x^\gamma        & \text{if}\; x \in (0,1] \,,\\
      c_\gamma x^{-(3+\gamma)} & \text{if}\; x > 1\,,
    \end{cases}
  \end{equation}
  for $\gamma>-1$, where $c_\gamma=(1+\gamma)(2+\gamma)/(3+2\gamma)$
  normalises the probability density.
\end{example}

\begin{example}[Discrete self-dual random variable]
  If $\eta$ takes values $\thf,1,2$ with probabilities
  $\frac{1}{3},\thf,\frac{1}{6}$, then Corollary~\ref{cor:univ}(b)
  implies that $\eta$ is self-dual.
\end{example}

\begin{remark}
  If $\eta$ is not self-dual, then (\ref{eq:bs-sym}) is clearly
  violated, but the resulting inequalities can not be the same way
  around for every $k,F\geq0$.  Without loss of generality assume that
  $F=1$. Then $\E\max(k,\eta)\leq\E\max(k\eta,1)$ for all $k\geq0$
  with the strict inequality for some $k=k_0$ leads to a
  contradiction, since
  \begin{align*}
    \E\max(k_0,\eta)<\E\max(k_0\eta,1)
    &=k_0\E\max(\eta,k_0^{-1})\\
    &\leq k_0\E\max(k_0^{-1}\eta,1)
    =\E\max(\eta,k_0)\,.
  \end{align*}
\end{remark}

\subsection{Moments of self-dual random variables}
\label{sec:moments}

It is immediate that all self-dual random variables have expectation
one. Carr and Lee \cite[Cor.~2.6]{car:lee08} show that
\begin{equation}
  \label{eq:c-l-moments}
  \E\eta^n=\E\eta^{-n+1}\,,\quad n\geq1\,.
\end{equation}
In particular, if $\E\eta^2<\infty$, then
\begin{displaymath}
  \mathrm{Cov}(\eta^{-1},\eta)
  =1-\E\eta^{-1}=(\E\eta)^2-\E\eta^2=-\mathrm{Var}(\eta)\,.
\end{displaymath}

\begin{theorem}
  \label{thr:skew}
  Each non-trivial self-dual variable $\eta$ with finite third moment
  has a positive skewness
  $\E(\eta-\E\eta)^3/(\mathrm{Var}(\eta))^{3/2}$.
\end{theorem}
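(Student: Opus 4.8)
The plan is to exploit the moment identity $\E\eta^n=\E\eta^{-n+1}$ for $n\ge1$, recorded just above the statement, together with self-duality applied to a carefully chosen payoff. Since $\E\eta=1$, the skewness is positive exactly when $\E(\eta-1)^3>0$, i.e.\ when $\E\eta^3-3\E\eta^2+2>0$. Using the identity with $n=2$ gives $\E\eta^2=\E\eta^{-1}$, and with $n=3$ gives $\E\eta^3=\E\eta^{-2}$. Hence it suffices to show that $\E\eta^{-2}-3\E\eta^{-1}+2>0$ for non-trivial $\eta$, equivalently $\E[(\eta^{-1}-1)(\eta^{-1}-2)]>0$.

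First I would try to rewrite this quantity as an expectation of a manifestly non-negative (and not a.s.\ zero) random variable. The natural move is to symmetrise: apply the general self-duality relation (Theorem~\ref{th:gen-sym}, i.e.\ $\E f(\eta)=\E[f(\eta^{-1})\eta]$) to convert one of the two forms into the other and average. For instance $\E(\eta-1)^3=\E[(\eta^{-1}-1)^3\eta]$, so $2\E(\eta-1)^3=\E\big[(\eta-1)^3+(\eta^{-1}-1)^3\eta\big]$. Writing $\eta^{-1}-1=-(\eta-1)/\eta$ gives $(\eta^{-1}-1)^3\eta=-(\eta-1)^3/\eta^2$, so $2\E(\eta-1)^3=\E\big[(\eta-1)^3(1-\eta^{-2})\big]=\E\big[(\eta-1)^3(\eta^2-1)/\eta^2\big]=\E\big[(\eta-1)^4(\eta+1)/\eta^2\big]$. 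The integrand $(\eta-1)^4(\eta+1)/\eta^2$ is strictly positive whenever $\eta\neq1$ (recall $\eta>0$ a.s.), so $\E(\eta-1)^3>0$ for any non-trivial self-dual $\eta$, and the skewness, being this quantity divided by the positive number $(\mathrm{Var}\,\eta)^{3/2}$, is positive.

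The main things to check carefully are integrability and the non-triviality bookkeeping. For the symmetrisation step one needs $\E[(\eta^{-1}-1)^3\eta]$ to be well defined; since $\E\eta^3<\infty$ is assumed and the identity $\E\eta^n=\E\eta^{-n+1}$ forces $\E\eta^{-2}<\infty$, all the moments appearing ($\E\eta^{\pm1},\E\eta^{\pm2},\E\eta^3$) are finite, so expanding $(\eta-1)^4(\eta+1)/\eta^2=\eta^3-3\eta^2+2\eta-2+3\eta^{-1}-\eta^{-2}$ term by term is legitimate and the manipulation via Theorem~\ref{th:gen-sym} applied to the integrable payoff $f(x)=(x-1)^3$ (which has $\E|f(\eta)|<\infty$) is valid. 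For non-triviality: if $\eta\neq1$ on a set of positive probability then the strictly positive integrand has positive expectation, hence $\mathrm{Var}\,\eta>0$ as well, so the denominator is nonzero and the skewness is a well-defined positive number. I do not expect a serious obstacle here; the only mild care needed is to confirm that the finite-third-moment hypothesis together with the Carr--Lee moment identity indeed delivers finiteness of the negative moments used in the expansion.
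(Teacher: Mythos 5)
Your argument is correct and is essentially the paper's: both proofs use self-duality to rewrite $\E(\eta-\E\eta)^3$ as the expectation of a manifestly non-negative quantity that vanishes only when $\eta=1$ a.s. The paper adds zero-mean terms supplied by the moment identity to obtain $\E\big[(\eta-1)^2(\eta+\eta^{-1}-2)\big]=\E\big[(\eta-1)^4/\eta\big]$, while you apply the general symmetry relation to $f(x)=(x-1)^3$ and average to get $\tfrac12\E\big[(\eta-1)^4(\eta+1)/\eta^2\big]$; the two expressions agree (again by self-duality) and your integrability checks are adequate.
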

\begin{proof}
  In view of (\ref{eq:c-l-moments}),
  \begin{align*}
    \E(\eta-\E\eta)^3&=\E(\eta^3-3\E\eta^2+2)\\
    &=\E(\eta^3-3\E\eta^2+2+6(\eta-1)+\eta^{-1}-\eta^2)\\
    &=\E\Big[(\eta-1)^2(\eta+\eta^{-1}-2)\Big]\geq0\,.
  \end{align*}
  This also shows that the skewness vanishes if and only if $\eta=1$
  almost surely.\footnote{The authors thank a referee for suggesting
    the current proof.}
\end{proof}

\begin{remark}[Product of self-dual variables]
  \label{rem:prod-sd}
  If $\eta_1$ and $\eta_2$ are two independent self-dual random
  variables, then
  \begin{align*}
    \E\max(k,\eta_1\eta_2)&=\E[\E(\max(k,\eta_1\eta_2)|\eta_1)]\\
    &=\E[\E(\max(k\eta_2,\eta_1)|\eta_2)]=\E\max(k\eta_1\eta_2,1)\,,
  \end{align*}
  i.e.\ the product $\eta_1\eta_2$ is self-dual. By taking successive
  products it is possible to construct a sequence of self-dual random
  variables, whose logarithms build a random walk. Note however that
  the values of this random walk at different time points are not
  jointly self-dual, cf.\ Remark~\ref{rem:levy-time}.
\end{remark}

\subsection{Exponentially self-dual variables}
\label{sec:expon-self-dual}

Theorem~\ref{th:mult-eq}(viii) implies that $\xi$ is exponentially
self-dual if and only if the characteristic function $\phi_\xi^\Q$
satisfies
\begin{displaymath}
  \phi_\xi^\Q(u-\thf \imagi)=\phi_\xi^\Q(-u-\thf \imagi)\,,\quad
  u\in\R\,.
\end{displaymath}
If $\xi$ has an absolutely continuous distribution,
Corollary~\ref{cor:univ}(a) yields that $\xi$ is self-dual if and only
if $e^{\thf y}p_\xi(y)$ is an even function of $y$.

If $\xi$ is also infinitely divisible, then its distribution is
characterised by the L\'evy triplet $(\sigma^2,\nuQ,\gammaQ)$.  Note
that in the univariate case $K_1x=-x$, the norm (\ref{eq:new-norm})
becomes the Euclidean one and $A$ reduces to a single number
$\sigma^2$.  Theorem~\ref{th:inv-div-mult} yields the following
univariate result, known from Fajardo and
Mordecki~\cite{faj:mor06,faj:mor06b}; to see that their ``drift'' with
truncation function $\one_{|x|\leq 1}$ is equal to $\gammaQ$ from
Corollary~\ref{cor:univ-inf-div} use $e^{-x}d\nu(-x)=d\nu(x)$. The
latter condition on the L\'evy measure appears also in Carr and
Lee~\cite[Th.~4.1]{car:lee08}.

\begin{cor}
  \label{cor:univ-inf-div}
  An integrable random variable $\eta=e^\xi$ with $\xi$ being infinitely
  divisible represented by the L\'evy
  triplet $(\sigma^2,\nuQ,\gammaQ)$ is exponentially self-dual if
  and only if $d\nuQ(x)=e^{-x}d\nuQ(-x)$
  and
  \begin{equation}
    \label{eq:gammaq=int_xleq-1-x1}
    \gammaQ=\int_{|x|\leq 1} x(1-e^{\thf x})d\nuQ(x)-\frac{\sigma^2}{2}\,.
  \end{equation}
  If $\xi$ is integrable, then~(\ref{eq:gammaq=int_xleq-1-x1}) can be
  replaced by the following condition on its expectation
  \begin{displaymath}
    \muQ=\int_{\R} x(1-e^{\thf x})d\nuQ(x)-\frac{\sigma^2}{2}\,.
  \end{displaymath}
\end{cor}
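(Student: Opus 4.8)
The plan is to obtain this as the one-dimensional specialization of Theorem~\ref{th:inv-div-mult}. When $n=1$ the only admissible numeraire index is $i=1$, the self-inverse linear map becomes $K_1x=-x$, and since $\|K_1u\|=\|-u\|=\|u\|$ the auxiliary norm defined in~(\ref{eq:new-norm}) coincides with the ordinary absolute value $|\cdot|$; the matrix $A$ collapses to the single nonnegative number $\sigma^2=a_{11}$. Because $\eta=e^\xi$ is exponentially self-dual precisely when $\xi\in\ESD_1$, I would simply read off the three conditions of Theorem~\ref{th:inv-div-mult} in this degenerate setting.

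Condition~(1) of Theorem~\ref{th:inv-div-mult}, requiring $a_{1j}=a_{j1}=\thf a_{11}$ for $j\neq1$, is vacuously satisfied, since no index $j\in\{1,\dots,n\}$ with $j\neq i$ exists when $n=1$; hence it imposes no restriction on $\sigma^2$. Condition~(2), $d\nuQ(x)=e^{-x_i}d\nuQ(K_ix)$, becomes exactly $d\nuQ(x)=e^{-x}d\nuQ(-x)$. Condition~(3), which reads $\gammaQ_i=\int_{\tn x\tn\leq1}x_i(1-e^{\thf x_i})\,d\nuQ(x)-\thf a_{ii}$, becomes $\gammaQ=\int_{|x|\leq1}x(1-e^{\thf x})\,d\nuQ(x)-\frac{\sigma^2}{2}$, which is~(\ref{eq:gammaq=int_xleq-1-x1}). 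This establishes the first equivalence.

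For the finite-mean assertion I would invoke Remark~\ref{rem:fmc} in the same way: when $\int_{|x|>1}|x|\,d\nuQ(x)<\infty$ the L\'evy--Khintchine representation can be recentred as in~(\ref{eq:levy-k-moment}) with the drift replaced by the mean $\muQ$, and the characterisation of $\xi\in\ESD_i$ retains conditions~(1) and~(2) while replacing~(\ref{eq:gamm-=gamm-left}) by~(\ref{eq:mu-condition}); specialising to $n=1$ turns~(\ref{eq:mu-condition}) into $\muQ=\int_\R x(1-e^{\thf x})\,d\nuQ(x)-\frac{\sigma^2}{2}$, as claimed.

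There is essentially no obstacle; the only points needing a moment's care are verifying that~(\ref{eq:new-norm}) genuinely degenerates to $|\cdot|$ in dimension one, so that the truncation region $\{\tn x\tn\le1\}$ is just $\{|x|\le1\}$, and that condition~(1) holds vacuously. Alternatively, one could argue directly from Theorem~\ref{th:mult-eq}(viii): in one dimension $\xi\in\ESD_1$ iff the Esscher transform of $\xi$ under $\sE^1$ is symmetric, and an infinitely divisible law is symmetric iff its drift vanishes and its L\'evy measure is symmetric; tracking the Esscher transforms with parameters $\pm\thf$ through the L\'evy triplet then reproduces the two displayed conditions. Both routes are short, so I would present the specialization of Theorem~\ref{th:inv-div-mult} as the cleanest one.
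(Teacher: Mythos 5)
Your proposal is correct and matches the paper's own route: the corollary is obtained exactly as the $n=1$ specialization of Theorem~\ref{th:inv-div-mult} (with the observations that $K_1x=-x$, that the norm~(\ref{eq:new-norm}) degenerates to $|\cdot|$, and that $A$ reduces to $\sigma^2$), and the finite-mean variant via Remark~\ref{rem:fmc}. Even your alternative argument through symmetry of the Esscher-transformed infinitely divisible law is the one the paper itself points out just before stating the corollary.
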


While Corollary~\ref{cor:univ-inf-div} is obtained as a univariate
version of Theorem~\ref{th:inv-div-mult}, it is alternatively possible
first to describe the (univariate) dual market in terms of its
generating triplet, and then ensure that the generating triplets of
the original and dual markets coincide, implying that $\eta$ is
self-dual, see~\cite{faj:mor06}. The latter approach also describes
the dynamics of the dual market in the univariate case.

If $\E|\xi |^3<\infty$, then~\cite[Prop.~3.13]{con:tan} yields that
\begin{displaymath}
  \EQ(\xi-\E\xi)^3=\int_\R x^3d\nuQ(x)
  =\int_0^\infty x^3(1-e^x)d\nuQ(x)\,.
\end{displaymath}
Thus, the skewness of exponentially self-dual $\xi$ is negative
except in the log-normal case, where it is zero.

\subsection{Quasi-self-dual variables and asymmetry corrections}
\label{sec:quasi-self-dual}

Let $S_T=S_0a\eta$ for $S_0,a>0$ with $\eta$ being a general
positive random variable, so that the forward price is given by
$F=S_0a$. Assume that $\eta$ is absolutely continuous with
non-vanishing density $p_\eta$ and $\E\eta=1$. Then it is possible
to find a function $q_{a\eta}$ such that
\begin{align}
  \E f(S_T)=\E[f(S_0/(a\eta))q_{a\eta}(a\eta)]
  &=\E[f(F/(a^2\eta))q_{a\eta}(a\eta)]\nonumber\\
  \label{eq:qsd}
  &=\E[f((S_0)^2/S_T)q_{a\eta}(a\eta)]\,
\end{align}
for each function $f:\R_+\mapsto\R$ such that $f(S_T)$ is integrable.
Indeed, it suffices to choose
\begin{displaymath}
  q_{a\eta}(x)=\frac{p_{a\eta}(x^{-1})}{x^2p_{a\eta}(x)}
  =\frac{p_{S_T}(x^{-1}S_0)}{x^2p_{S_T}(xS_0)}\,.
\end{displaymath}
By choosing $x=a\eta=S_T/S_0$ we arrive at
\begin{displaymath}
  \E f(S_T)=\E\Big[f\big(\frac{(S_0)^2}{S_T}\big)
  \big(\frac{S_T}{S_0}\big)^{-2}\;
  \frac{p_{S_T}((S_0)^2/S_T)}{p_{S_T}(S_T)}\Big]\,.
\end{displaymath}
Apart from trivial cases, the density $p_{S_T}$ of $S_T$ depends
on $T$. In view of applications to semi-static hedging described
in \cite{car:lee08} it is beneficial if the correcting expression
\begin{displaymath}
  q_{S_t}(x)=\frac{p_{S_t}((S_0)^2/x)}{p_{S_t}(x)}
\end{displaymath}
at any time $t\in[0,T]$ depends only on $x$ and $S_0$ but not on
$t$. This is the case if $\eta$ is self-dual with no carrying
costs (then $q_{S_t}(x)=(x/S_0)^3$, $x>0$, by
Theorem~\ref{th:gen-sym}), or \emph{quasi-self-dual} with
parameters $a=e^\lambda$ and some $\alpha\neq 0$, being the case
if and only if $q_{S_t}(x)=(x/S_0)^{2+\alpha}$, $x>0$. In the
latter case~(\ref{eq:qsd}) turns into
\begin{displaymath}
  \E f(F\eta) =\E\Big[f\big(\frac{F}{a^2\eta}\big)a^\alpha\eta^\alpha\Big]\,.
\end{displaymath}
By letting $f(x)=(x-k)_+^\alpha$ and noticing that
$\E(a\eta)^\alpha=1$ in the quasi-self-dual case, this implies the
following property
\begin{displaymath}
  \E(F\eta-k)_+^\alpha=a^{-\alpha}\E (F-ka^2\eta)_+^\alpha
  =\E\eta^\alpha \; \E(F-k\eta(\E\eta^\alpha)^{-2/\alpha})_+\,,
\end{displaymath}
which can be termed as the \emph{power put-call symmetry}.

\section{Barrier options and semi-static hedging}
\label{sec:ex-apl-self-dual}

\subsection{Time-dependent framework}
\label{sec:time-depend-fram}

Consider a finite horizon model with the asset prices given by
\begin{displaymath}
  S_t=S_0\circ e^{t\lambda}\circ \eta_t=S_0\circ e^{t\lambda+\xi_t}
  =(S_{01}e^{t\lambda_1+\xi_{t1}},\dots,S_{0n}e^{t\lambda_n+\xi_{tn}})
  \,,\quad t\in[0,T]\,,
\end{displaymath}
where $\lambda\in\R^n$ represent deterministic carrying costs and
all components of $\eta_t=e^{\xi_t}$ are martingales with $\xi_t$,
$t\in[0,T]$, being a L\'evy process. Fix $i\in\{1,\dots,n\}$ and
assume that $\eta_t\in\QSD_i(t\lambda,\alpha)$ for every
$t\in[0,T]$. This condition is satisfied (with $\alpha=1$ and
$\lambda=0$) for all exponentially self-dual L\'evy models with no
carrying costs analysed in Section~\ref{sec:expon-self-dual-1} and
for quasi-self-dual L\'evy models from
Section~\ref{sec:quasi-self-dual-1} for non-vanishing $\lambda$, see
Corollary~\ref{cor:all-T}.

Let $\tau$ be a stopping time with values in $[0,T]$ and let
$\salg_\tau$ be the corresponding stopping $\sigma$-algebra. Since
$\xi_t$ is a L\'evy process, $(\xi_\tau,\xi_T)$ and
$(\xi_\tau,\xi_\tau+\xi'_{T-\tau})$ share the same distribution,
where $\xi'_t$, $t\in[0,T]$, is an independent copy of the process
$\xi_t$, $t\in[0,T]$. Hence, $(S_\tau,S_T)$ and $(S_\tau,S_\tau\circ
e^{\lambda(T-\tau)+\xi'_{T-\tau}})$ also coincide in distribution.
Then
\begin{align*}
  \E[f(S_T)|\salg_\tau]=\E[f(S_T)|S_\tau]
  &=\E[f(S_\tau\circ e^{\lambda(T-\tau)+\xi'_{T-\tau}})|S_\tau]\\
  &=\E[f(S_\tau\circ e^{\lambda(T-\tau)+\xi'_{T-\tau}})|\salg_\tau]\,,
\end{align*}
where $f$ is any integrable payoff function.  The quasi-self-duality
of $\eta'_{T-\tau}=e^{\xi'_{T-\tau}}$ with respect to the $i$th
numeraire adjusted for conditional expectations (see
Remark~\ref{rem:tau-T-mult}) yields that
\begin{displaymath}
  \E[f(S_T)|\salg_\tau]
  =\E[f(S_\tau\circ e^{K_i(\lambda (T-\tau)+\xi'_{T-\tau})})
    e^{\alpha(\lambda_i(T-\tau)+\xi'_{(T-\tau)i})} |\salg_\tau]\,,
\end{displaymath}
whence
\begin{multline}
  \label{eq:tau-sym}
  \E[f(S_T)|\salg_\tau]
  \\=\E\Big[f\Big(\frac{S_{T1}S_{\tau i}}{S_{Ti}},\dots,
  \frac{S_{T(i-1)}S_{\tau i}}{S_{Ti}},
  \frac{S_{\tau i}^2}{S_{Ti}},
  \frac{S_{T(i+1)}S_{\tau i}}{S_{Ti}},\dots,
  \frac{S_{Tn}S_{\tau i}}{S_{Ti}}\Big)
  \Big(\frac{S_{Ti}}{S_{\tau i}}\Big)^\alpha|\salg_\tau\Big]\,,
\end{multline}
cf.\ Remark~\ref{rem:sd-forward}. If $S_{\tau i}=H$ almost surely
for a constant $H$, then
\begin{multline}
  \label{eq:camv}
  \E[f(S_T)|\salg_\tau]\\=\E\Big[f\Big(\frac{S_{T1}H}{S_{Ti}},\dots,
  \frac{S_{T(i-1)}H}{S_{Ti}},\frac{H^2}{S_{Ti}},
  \frac{S_{T(i+1)}H}{S_{Ti}},\dots,
  \frac{S_{Tn}H}{S_{Ti}}\Big)\Big(\frac{S_{Ti}}{H}\Big)^\alpha|\salg_\tau\Big].
\end{multline}
Identity~(\ref{eq:camv}) for $\alpha=1$, $\lambda=0$ yields the
self-dual case, and in the univariate case $n=1$ corresponds
to~\cite[Eq.~(5.3)]{car:lee08}. Classical examples with trivial
carrying costs (i.e.\ $\lambda=0$) are options on futures or options
on shares with dividend-yield being equal to the risk-free interest
rate. For the univariate quasi-self-dual case,
see~\cite[Cor.~5.10]{car:lee08}.

\begin{remark}
  \label{rem:t-c}
  Instead of the self-duality property, it is possible to
  impose~(\ref{eq:tau-sym}) for stopping times $\tau\in[0,T]$ that
  might appear in relation to hedging of particular barrier options.
  This observation leads to extensions for independently time-changed
  multidimensional L\'evy processes by means of conditioning arguments
  described in~\cite[Th.~4.2,~5.4]{car:lee08}.  Further
  models without jumps can be obtained on the basis of the multivariate
  Black-Scholes model with characteristics described in
  Example~\ref{ex:log-n-mult} by applying independent common
  stochastic clocks being continuous with respect to
  calendar time.
\end{remark}

\subsection{Multivariate hedging with a single univariate barrier}
\label{sec:general-hedging-with}

Assume a risk-neutral setting for a price process $S_t$,
$t\in[0,T]$, and fix a barrier level at $H>0$ such that $S_{0i}\neq
H$ with given $i\in\{1,\dots,n\}$. For simplicity of notation,
define function $\hat\kappa_i:\EE^n\mapsto\EE^n$ acting as
\begin{displaymath}
  \hat\kappa_i(S_T,H)=\frac{H}{S_{Ti}}\big(S_{T1},\dots,
  S_{T(i-1)},H,S_{T(i+1)},\dots,S_{Tn}\big)\,.
\end{displaymath}
Define $\Xi_{ti}$ to be the (closed) line segment with end-points
$S_{0i}$ and $S_{ti}$ and let
\begin{displaymath}
  \tau_H=\inf\{t\geq0:\; H\in\Xi_{ti}\}\quad \text{and}\quad
  \chi=\one_{\tau_H\leq T}\,,
\end{displaymath}
cf.~\cite[Sec.~5.2]{car:lee08} who used a bit different way to
handle the two cases when the initial price is respectively lower
and higher than the barrier. Furthermore, assume that the asset
price dynamics satisfy~(\ref{eq:tau-sym}) for the stopping time
$\tau=\tau_H$ and that $S_{\tau_Hi}=H$\ a.s.\ on the event that
$\{\tau_H\leq T\}$, what is guaranteed, e.g.\ by the sample path
continuity of the $i$th component of $S_t$, $t\geq0$. In case of
discontinuous L\'evy processes, the symmetry
condition~(\ref{eq:cond-2}) on the L\'evy measure implies the
presence of jumps of both signs, so it is much more difficult to
ensure that $S_{\tau_H}=H$ a.s.

Take any integrable payoff function $f$ and consider an option with
payoff $\chi f(S_T)$, i.e.\ the knock-in option with barrier $H$ for
the $i$th asset.  In order to replicate this option using only
options that depend on the terminal value $S_T$ consider a European
claim on
\begin{equation}
  \label{eq:hedge-claim}
  f(S_T)\one_{H\in \Xi_{Ti}}
  +\Big(\frac{S_{Ti}}{H}\Big)^\alpha f(\hat\kappa_i(S_T,H))
  (\one_{H\in\Xi_{Ti}}-\one_{S_{Ti}=H})\,.
\end{equation}
Here one has to bear in mind that this is only practicable provided
that the considered claims are liquid or can be replicated by liquid
instruments. However, there is a fast growing literature about sub-
and super-replication of multiasset instruments, see
e.g.~\cite{lau:wan08} and the literature cited therein.

On the event that $\{\tau_H>T\}$, the claim
in~(\ref{eq:hedge-claim}) expires worthless as desired. If the
barrier knocks in, we can exchange (\ref{eq:hedge-claim}) for a
claim on $f(S_T)$ at zero costs. To confirm this, define
$\hat\Xi_{ti}$ to be the (closed) line segment with end-points
$S_{0i}$ and $H^{2}S_{ti}^{-1}$. Note that $H\notin \hat\Xi_{ti}$ if
and only if $H\in\Xi_{ti}\setminus\{S_{ti}\}$. Hence, on the event
that $\{\tau_H\leq T\}$, by~(\ref{eq:camv}), we have
\begin{align*}
  \E[f(S_T)|\salg_\tau]&=\E[f(S_T)\one_{H\in \Xi_{Ti}}|\salg_\tau]
  +\E[f(S_T)\one_{H\notin \Xi_{Ti}}|\salg_\tau]\\
   &=\E[f(S_T)\one_{H\in \Xi_{Ti}}|\salg_\tau]
  +\E\Big[\big(\frac{S_{Ti}}{H}\big)^\alpha f(\hat\kappa_i(S_T,H))
  \one_{H\notin\hat\Xi_{Ti}}|\salg_\tau\Big]\\
   &=\E[f(S_T)\one_{H\in \Xi_{Ti}}|\salg_\tau]\\
   &\qquad +\E\Big[\big(\frac{S_{Ti}}{H}\big)^\alpha f(\hat\kappa_i(S_T,H))
  (\one_{H\in \Xi_{Ti}}-\one_{S_{Ti}=H})|\salg_\tau\Big]\,.\\
\end{align*}

For simplicity we assume from now on that $S_{Ti}$ has a non-atomic
distribution, so that~(\ref{eq:hedge-claim}) becomes
\begin{equation}
  \label{eq:hedge-claim-abs-cont}
  \Big(f(S_T)+\big(\frac{S_{Ti}}{H}\big)^\alpha
  f(\hat\kappa_i(S_T,H))\Big)\one_{H\in \Xi_{Ti}}\,.
\end{equation}

Consider general basket call $f(S_T)=(\sum_{j=1}^n u_jS_{Tj}-k)_+$.
By (\ref{eq:hedge-claim-abs-cont}) the hedge for the knock-in basket
call with payoff function $\chi f(S_T)$ is given by the derivative
with payoff function
\begin{displaymath}
  \Big\{\Big(\sum_{j=1}^n u_jS_{Tj}-k\Big)_+
  +\Big(\frac{S_{Ti}}{H}\Big)^{\alpha-1}\Big(u_iH-\big(\frac{k}{H}S_{Ti}
  -\sum_{j=1,\; j\neq i}^n u_jS_{Tj}\big)\Big)_+\Big\}\one_{H\in\Xi_{Ti}}\,,
\end{displaymath}
which depends only on $S_T$.

If (\ref{eq:tau-sym}) holds with $\alpha=1$, this hedge becomes
\begin{equation}
  \label{eq:bask-hedges}
  \Big\{\Big(\sum_{j=1}^n u_jS_{Tj}-k\Big)_+
  +\Big(u_iH-\big(\frac{k}{H}S_{Ti}
  -\sum_{j=1,\; j\neq i}^n
  u_jS_{Tj}\big)\Big)_+\Big\}\one_{H\in\Xi_{Ti}}\,,
\end{equation}
being the sum of a basket call and a spread put with knocking
condition depending only on the $i$th component $S_{Ti}$ at
maturity.

In some cases the knocking condition at maturity can be incorporated
into the payoff function. For this, note that we can write all
integrable payoff functions in the form
\begin{displaymath}
  f(S_T)=f_0(S_T)\one_{S_T\in\Theta}\,,
  \quad\text{ where }\quad\Theta=\{x:f(x)\neq 0\}\,,
\end{displaymath}
with $\Theta$ possibly being $\EE^n$. For example, the basket call
$(\sum_{j=1}^n u_jS_{Tj}-k)_+$ can be written as $(\sum_{j=1}^n
u_jS_{Tj}-k)\one_{\sum_{j=1}^n u_jS_{Tj}>k}$. Of course if $H\notin
\Xi_{Ti}$ would imply that $\hat\kappa_i(S_T,H)\notin\Theta$ and
$S_T\notin\Theta$ at the same time, then it is possible to omit
$\one_{H\in \Xi_{Ti}}$ in~(\ref{eq:hedge-claim-abs-cont}), but this
is not the case for standard payoff functions. If $H\in\Xi_{Ti}$
implies that $S_T\notin\Theta$ (resp.\
$\hat\kappa_i(S_T,H)\notin\Theta$) then the first (second) summand
in~(\ref{eq:hedge-claim-abs-cont}) is always zero. If furthermore
$H\notin\Xi_{Ti}$ implies that $\hat\kappa_i(S_T,H)\notin\Theta$
($S_T\notin\Theta$) then we can omit the first (second) summand
in~(\ref{eq:hedge-claim-abs-cont}) and hedge with the second (first)
summand without the knocking condition $\one_{H\in\Xi_{Ti}}$, i.e.\
in~(\ref{eq:bask-hedges}) we can hedge with a conventional basket
option.

\begin{example}
  \label{ex:sp-opt}
  Consider a bivariate price process $(S_{t1},S_{t2})$ in a
  risk-neutral setting satisfying~(\ref{eq:tau-sym}) with $\alpha=1$
  and $i=1$ for the stopping time $\tau=\tau_H=\inf\{t:S_{t1}\leq H\}$
  with barrier $H$ such that $0<H<S_{01}$. First assume again that
  $S_{t1}$ can not jump over $H$. For the spread option
  \begin{displaymath}
    f(S_{T1},S_{T2})=(aS_{T1}-bS_{T2}-k)_+\,,\quad a,b>0\,,
  \end{displaymath}
  assume additionally that $aH\leq k$ and define
  $\chi=\one_{\tau_H\leq T}$. By using the hedging strategy described
  in~(\ref{eq:hedge-claim-abs-cont}) and $aH\leq k$ we obtain a henge for
  $\chi f(S_{T1},S_{T2})$ by
  \begin{multline*}
    (aS_{T1}-bS_{T2}-k)_+\one_{H\in [S_{T1},S_{01}]}
    +\Big(aH-\frac{k}{H}S_{T1}-bS_{T2}\Big)_+\one_{H\in [S_{T1},S_{01}]}\\
    =\Big(aH-\frac{k}{H}S_{T1}-bS_{T2}\Big)_+\one_{H\in [S_{T1},S_{01}]}
    =\Big(aH-\frac{k}{H}S_{T1}-bS_{T2}\Big)_+\,,
  \end{multline*}
  i.e.\ it is possible to hedge with a basket put. Therefore, the
  related knock-out option can be hedged with a long position in the
  spread call with payoff function
  $f(S_{T1},S_{T2})=(aS_{T1}-bS_{T2}-k)_+$ and a short position in the
  above hedge. Note that we only assumed that $b>0$ so that the
  knock-in level can but need not be deep out-of-the-money. If
  $S_{t1}$ can jump over the barrier $H$ we get a
  super-replication in case of the knock-in option and a more
  problematic sub-replication in case of the knock-out option.

  Assuming~(\ref{eq:tau-sym}) for the stopping time $\tau_H$ with
  $\alpha\neq 1$, where $S_{t1}$ does not jump over $H$, the hedge
  for the knock-in option has to be modified as
  \begin{displaymath}
    \Big(\frac{S_{T1}}{H}\Big)^{\alpha-1}
    \Big(aH-\frac{k}{H}S_{T1}-bS_{T2}\Big)_+\,,
  \end{displaymath}
  while the modification for the knock-out is now obvious.
  This example can easily be extended in a higher dimensional setting
  as long as all risky assets without knocking barrier enter the
  payoff function with a minus sign.
\end{example}

\begin{example}
  \label{ex:opt-peculiar}
  Consider a bivariate price process in a risk-neutral setting. Assume
  that $S_{01}>k>0$ and that~(\ref{eq:tau-sym}) holds for
  $\tau=\tau_k=\inf\{t:S_{t1}\leq k\}$, $\alpha=1$, and $i=1$,
  while $S_{t1}$ can not jump over $k$. Introduce
  (possibly negative) payoff function
  \begin{displaymath}
    g(S_{T1},S_{T2})=(S_{T1}-k)+(S_{T2}\wedge(S_{T1}-k))\,,
  \end{displaymath}
  where $a\wedge b=\min(a,b)$.

  By~(\ref{eq:hedge-claim-abs-cont}) with $\alpha =1$ we obtain a
  hedge for $\one_{\tau_k\leq T} g(S_{T1},S_{T2})$ as
  \begin{displaymath}
    \Big\{\big((S_{T1}-k)+(S_{T2}\wedge(S_{T1}-k))\big)
    +\big((k-S_{T1})+(S_{T2}\wedge(k-S_{T1}))\big)\Big\}
    \one_{k\in[S_{T1},S_{01}]}\,.
  \end{displaymath}
  Here we can get rid of the indicator function by noticing that the
  above payoff function can be written as
\begin{displaymath}
  \big((S_{T1}-k)+(S_{T2}\wedge(S_{T1}-k))\big)-(S_{T1}+S_{T2}-k)_+
  +(k+S_{T2}-S_{T1})_+\,.
\end{displaymath}
Furthermore, for the related knock-out option we get a hedge given
by a long position in the basket call with payoff function
$(S_{T1}+S_{T2}-k)_+$ and a short position in the put spread with
payoff function $(k-(S_{T1}-S_{T2}))_+$.
\end{example}

\subsection{Examples of hedging in jointly self-dual cases}

In this section we create hedges for more complex instruments and
bivariate models satisfying~(\ref{eq:tau-sym}) for two different
numeraires, e.g.\ for jointly self-dual exponential L\'evy models
with generating triplets satisfying the conditions in
Remark~\ref{rem:scd}.

\begin{example}[Options with knocking-conditions depending on two assets]
  \label{eg:knock-on-two}
  Assume a risk-neutral setting for a price process
  $S_t=(S_{t1},S_{t2})$, $t\in[0,T]$, where (\ref{eq:tau-sym}) holds
  for both assets with $\alpha=1$ and the subsequently defined
  stopping times.  Furthermore, let $k_x,k_y>0$ be constants such that
  $k_x<S_{01},S_{02}<k_y$ and define the stopping times
  $\tau_{ix}=\inf\{t>0:S_{ti}\leq k_x\}$,
  $\tau_{iy}=\inf\{t>0:S_{ti}\geq k_y\}$ and the corresponding
  stopping $\sigma$-algebras $\salg_{\tau_{ix}}$, $\salg_{\tau_{iy}}$,
  $i=1,2$, as well as the stopping time
  $\tau=\tau_{1x}\wedge\tau_{2x}$ with the corresponding stopping
  $\sigma$-algebra $\salg_\tau$. Assume also that the price processes
  can not jump over the barriers $k_x$ and $k_y$ respectively.

  Consider the claims
  \begin{align*}
    X&=(S_{T1}-S_{T2}-k_x)_+\one_{\tau=\tau_{1x}\leq T}
       +(S_{T2}-S_{T1}-k_x)_+\one_{\tau_{1x}\neq\tau=\tau_{2x}\leq T}\,,\\
    Y&=(k_y-S_{T1}-S_{T2})_+\big(\one_{\tau_{1y}\leq T<\tau_{2y}}
       -\one_{\tau_{2y}\leq T<\tau_{1y}}\big)\,,
  \end{align*}
  i.e.\ $X$ is a knock-in spread option on the difference between the
  share which first hits $k_x$ and the other one only being knocked-in
  if at least one share hits $k_x$ before $T$. At maturity, $Y$ is
  a long position in a basket put if and only if the first but not the
  second asset price hits the price level $k_y$ and a short position in
  the same basket put if and only the second but not the first asset hits
  $k_y$ before $T$.

  The claim $X$ can be hedged with a long position in the European
  basket put with payoff $(k_x-S_{T1}-S_{T2})_+$.
  The claim $Y$ can be hedged by entering a long position in the spread option
  with payoff $(S_{T1}-S_{T2}-k_y)_+$ along with a short position in the spread
  option with payoff $(S_{T2}-S_{T1}-k_y)_+$. To see that, apply
  identity~(\ref{eq:camv}) for $\alpha=1$, so that
  \begin{align}
    \label{eq:hedge-long-joint}
    \E[(k_z-S_{T1}-S_{T2})_+|\salg_{\tau_{1z}}]
      &=\E[(S_{T1}-S_{T2}-k_z)_+|\salg_{\tau_{1z}}]\,,\\
    \label{eq:hedge-short-joint}
    \E[(k_z-S_{T1}-S_{T2})_+|\salg_{\tau_{2z}}]
      &=\E[(S_{T2}-S_{T1}-k_z)_+|\salg_{\tau_{2z}}]\,,
  \end{align}
  $z=x,y$, while the value of the European spread option with payoff
  $(S_{T1}-S_{T2}-k)_+$ (resp.\ $(S_{T2}-S_{T1}-k)_+$) remains
  unchanged by applying~(\ref{eq:camv}) at $\tau_2$ (resp.\
  $\tau_1$).

  As far as $X$ is concerned we have that in case where
  $\{\tau >T\}$ neither $X$ is knocked in nor the basket-put in
  the hedge portfolio is in the money, since $S_{T1},S_{T2}>k_x$.
  If $\{\tau=\tau_{1x}\leq T\}$, by~(\ref{eq:hedge-long-joint}) we can
  exchange the long position in the hedge portfolio for the needed
  spread, if $\tau_{1x}\neq\tau=\tau_{2x}\leq T$ the same is true
  due to~(\ref{eq:hedge-short-joint}).

  As far as $Y$ is concerned we have that if $\{\tau_{iy}>T\}$,
  $i=1,2$, both instruments in the hedging portfolio are out of the
  money since $S_{T1},S_{T2}\leq k_y$, i.e.\ have payoff zero as $Y$.
  On the event that we first have $\{\tau_{1y}\leq T\}$ we can change
  the long position in the spread option with payoff
  $(S_{T1}-S_{T2}-k_y)_+$ in a long position in the basket put with
  payoff $(k_y-S_{T1}-S_{T2})_+$ while letting the short position
  unchanged. Provided that additionally
  $\{\tau_{2y}\in[\tau_{1y},T]\}$, by~(\ref{eq:hedge-short-joint}), we
  can also exchange the short position for the same basket put, so
  that we can close our positions as required, otherwise, i.e.\
  $\{\tau_{2y}>T\}$, the long position in the hedge portfolio yields a
  potentially needed payoff of $Y$ while the short position still
  matures worthless. On the event that we first have $\{\tau_{2y}\leq
  T\}$, by~(\ref{eq:hedge-short-joint}), we can change the short
  position in the needed basket put while letting the long position
  unchanged. If furthermore $\{\tau_{1y}\in[\tau_{2y},T]\}$ we can
  again close our position due to~(\ref{eq:hedge-long-joint}).
  Otherwise, unlike the long position, the short position in the hedge
  portfolio may be in the money at maturity but in that case $Y$ at
  maturity would be a long position for the hedger with the same
  payoff.

  Assuming that~(\ref{eq:tau-sym}) holds for $\alpha_1$ resp.\
  $\alpha_2$ with respect to the corresponding components (where other
  assumptions remain unchanged), we can
  apply~(\ref{eq:camv}) in the same way to see that the hedge should
  theoretically be modified to a long position in the European
  derivative with payoff
  $(S_{T1}-S_{T2}-k_y)_+(k_y^{-1}S_{T1})^{\alpha_1-1}$ along with a
  short position in the European derivative with payoff
  $(S_{T2}-S_{T1}-k_y)_+(k_y^{-1}S_{T2})^{\alpha_2-1}$.
\end{example}

For creating semi-static hedges of barrier spread-options with
certain knocking conditions, e.g.\ claims of the form
\begin{displaymath}
  Z=(S_{T1}-S_{T2}-k)_+\one_{S_{t1}>S_{t2},\forall
    t\in[0,T]}\,, \quad k>0\,,
\end{displaymath}
in equal carrying cost cases the full strength of the joint
self-duality is not needed. It suffices to assume exchangeability
being implied by the joint self-duality, see
Corollary~\ref{cor:comp} and~\cite{mol:sch09} for details including
model characterisations, weakening of the exchangeability assumption
and hedges for several related derivatives.

\subsection{Semi-static super-hedges of basket options}
\label{sec:semi-static-super}

The following super-hedges may be quite expensive for replication
purpose. Thus, they seem to be more useful if one would like to
speculate with a basket option and get some additional money by
writing a different knock-in basket option, where the maximum loss
should be limited to the initially invested capital.

In the sequel we work in the same setting as in
Section~\ref{sec:general-hedging-with} with the additional
assumptions that $\alpha=1$ and $S_{0i}>H>0$. Define again the
stopping time $\tau_H=\inf\{t:S_{ti}\leq H\}$ and let
$\salg_{\tau_H}$ be the corresponding $\sigma$-algebra. Consider the
knock-in basket option with the following payoff function
\begin{displaymath}
  \chi\Big(\sum_{j=1}^n u_jS_{Tj}-k\Big)_+\,,\quad
  k,u_i>0,\ u_j\in\R\ \text{ for }\  j=1,\dots,n\,,\ j\neq i\,,
\end{displaymath}
where $\chi=\one_{\tau_H\leq T}$.

By~(\ref{eq:camv}) for $\alpha=1$ we have
\begin{displaymath}
  \E\Big[\big(\sum_{j=1}^nu_jS_{Tj}-k\big)_+|\salg_{\tau_H}\Big]
   =\E\Big[\big(\sum_{j=1,\; j\neq i}^n u_jS_{Tj}-\frac{k}{H}S_{Ti}
   +u_iH\big)_+|\salg_{\tau_H}\Big]\,.
\end{displaymath}
Hence, the maximum loss of buying the basket option in the
right-hand side of the above equation and short-selling the initial
knock-in basket call does not exceed the initial costs of this
strategy. Note that in this setting, jumps over the barrier $H$
would add a further aspect of super-hedging.

If~(\ref{eq:tau-sym}) holds for $\alpha\neq 1$ where $S_{ti}$ can
not jump over $H$, use again~(\ref{eq:camv}). The long position in
the described strategy would then be determined by the payoff
function
\begin{displaymath}
  \Big(\frac{S_{Ti}}{H}\Big)^{\alpha-1}
   \big(\sum_{j=1,\; j\neq i}^n u_jS_{Tj}-\frac{k}{H}S_{Ti}
    +u_iH\big)_+\,,
\end{displaymath}
while the short position remains unchanged.

\section*{Acknowledgements}
\label{sec:acknowledgements}

The authors are grateful to Daniel Neuenschwander and Pierre Patie
for helpful discussions, to Rolf Burgermeister, Peter Carr, and
Markus Liechti for useful hints, and to the referees for a number
of stimulating comments. This work was supported by the Swiss
National Science Foundation Grant Nr. 200021-117606.

\newcommand{\noopsort}[1]{} \newcommand{\printfirst}[2]{#1}
  \newcommand{\singleletter}[1]{#1} \newcommand{\switchargs}[2]{#2#1}

\end{document}